\documentclass[english]{article}
\usepackage{geometry}
\usepackage{parskip}
\usepackage{chris_style}
\usepackage{xcolor}
\usepackage{bm} 
\usepackage{amssymb}
\usepackage{authblk}
\usepackage{bbm}

\usepackage[authoryear]{natbib}

\let\emptyset\varnothing

\crefname{theorem}{Theorem}{Theorems}
\crefname{lemma}{Lemma}{Lemmas}
\crefname{corollary}{Corollary}{Corollaries}
\crefname{proposition}{Proposition}{Propositions}
\crefname{definition}{Definition}{Definitions}
\crefname{remark}{Remark}{Remarks}
\crefname{assumption}{Assumption}{Assumptions}
\crefname{conjecture}{Conjecture}{Conjectures}

\Crefname{theorem}{Theorem}{Theorems}
\Crefname{lemma}{Lemma}{Lemmas}
\Crefname{corollary}{Corollary}{Corollaries}
\Crefname{proposition}{Proposition}{Propositions}
\Crefname{definition}{Definition}{Definitions}
\Crefname{remark}{Remark}{Remarks}
\Crefname{assumption}{Assumption}{Assumptions}
\Crefname{conjecture}{Conjecture}{Conjectures}

\allowdisplaybreaks
\def\@#1\@{\begin{align}#1\end{align}}
\def\$#1\${\begin{align*}#1\end{align*}}
\title{Sample-Efficient Multiple Testing with Adaptive Data Collection}
\author{Zhanran Lin\textsuperscript{*}}
\author{Wanteng Ma\textsuperscript{*}}
\author{Zhimei Ren}
\author{Yuting Wei}
\affil{Department of Statistics and Data Science, University of Pennsylvania}
\date{\today}

\allowdisplaybreaks
\newcommand{\TPP}{\operatorname{TPP}}
\newcommand{\FDP}{\operatorname{FDP}}

\newcommand{\ind}{\mathds{1}}

\newcommand{\sfU}{\mathsf{U}}
\newcommand{\sfL}{\mathsf{L}}
\newcommand{\sfR}{\mathsf{R}}

\newcommand{\sfB}{\mathsf{B}}
\newcommand{\sfC}{\mathsf{C}}
\newcommand{\sfD}{\mathsf{D}}
\newcommand{\sfO}{\mathsf{O}}
\newcommand{\sfH}{\mathsf{H}}
\newcommand{\sfh}{\mathsf{h}}
\newcommand{\sfl}{\mathsf{l}}
\newcommand{\sfw}{\mathsf{w}}
\newcommand{\di}{\mathrm{d}}
\newcommand{\cC}{\mathcal{C}}
\newcommand{\cH}{\mathcal{H}}

\newcommand{\cA}{\mathcal{A}}
\newcommand{\cB}{\mathcal{B}}
\newcommand{\given}{\,|\,}
\newcommand{\biggiven}{\, \big|\,}
\newcommand{\Biggiven}{\, \Big|\,}
\newcommand{\bigggiven}{\, \bigg|\,}

\newcommand{\sfa}{\mathsf{a}}
\newcommand{\sfb}{\mathsf{b}}
\newcommand{\bmX}{{\bm X}}
\let\hat\widehat
\begin{document}

\maketitle

\begingroup
\renewcommand{\thefootnote}{*}
\footnotetext{Equal contribution.}
\endgroup

\begin{abstract}
This paper studies adaptive experimental design for multiple testing,
where an experimenter sequentially chooses which hypothesis to sample. 
We propose the e-value-based posterior sampling (e-PS) procedure, which uses the empirical average of log e-value increments to guide randomized sampling and applies e-BH to construct rejection sets.
Under conditionally valid e-value increments, the procedure controls the false discovery rate at arbitrary stopping times and produces nested rejection sets.
We establish high-probability bounds on the number of samples needed to discover all nonnull hypotheses in terms of the growth and concentration of the underlying e-processes. We specialize these bounds to simple-versus-simple, composite-versus-simple, and simple-versus-composite testing. Simulations and experiments using joke ratings and watermarked text illustrate the procedure’s power under limited sampling budgets.
\end{abstract}

\setcounter{tocdepth}{2}
\tableofcontents

\section{Introduction}
In many modern applications, a decision maker seeks to identify all signals that satisfy a prescribed criterion from a large pool of candidates. For example, in an A/B testing platform, one may evaluate $K$ policies with the goal of selecting those that meet a target performance threshold~\citep{kohavi2020trustworthy}. Similarly, in LLM evaluation, practitioners compare multiple models across a range of tasks and seek to identify those that outperform a given baseline~\citep{chang2024survey}. These tasks can be naturally framed as a problem of testing multiple hypotheses simultaneously. In such settings, data collection is often costly, making it essential to attain reliable discoveries with minimal sampling budget while maintaining rigorous control of statistical error rates.

While a substantial body of work has been devoted to multiple testing (see e.g.,~\citet{benjamini1995controlling,benjamini2000adaptive,benjamini2001control, storey2002direct,barber2015controlling,candes2018panning,wang2022false} for a highly incomplete list of references), 
the majority of existing approaches focus on data collected under fixed experimental designs, independent of the subsequent analysis.
This separation of design and analysis can be inefficient: intuitively, 
more samples should be allocated to ``hard'' hypotheses near the decision boundary, 
and fewer to those that are clearly null or non-null.
In this work, we instead take a holistic view that integrates {\em data collection} and {\em inference}, jointly designing
adaptive sampling strategies and test statistics to 
improve the sample efficiency of multiple testing 
while maintaining rigorous error control.

To formalize the problem, consider $K$ hypotheses. 
At each time $t \ge 1$, we select one hypothesis to query, 
and observe only the statistic associated with the chosen hypothesis.
The sampling strategy may be adaptive: the choice at time $t$ 
may depend on the data collected up to time $t-1$. 
Based on the accumulated observations for each hypothesis, 
we construct at every time step a rejection set $\mR_t\subseteq \{1,\ldots,K\}$ 
consisting of the indices declared significant. Our task is 
to design an algorithm---including sampling and testing---that 
identifies all non-nulls with a minimal sampling budget, 
while controlling the false discovery rate~\citep[FDR;][]{benjamini1995controlling}, i.e., the expected proportion of false discoveries, 
 at a pre-specified level $\alpha$.
Since the rejection sets $\mR_t$ are updated sequentially and may be stopped in a data-driven manner, 
we impose the stronger requirement of {\em anytime-valid FDR control}, 
which guarantees FDR control at every, possibly data-dependent, 
stopping time.

A key distinction between our setting and  
classical multiple testing is that the sampling strategy 
can be updated adaptively.
While adaptive data collection provides opportunities for
more efficient data usage, 
it also introduces significant challenges for FDR control. 
In particular, the complex dependencies induced by adaptive sampling 
and/or data-driven stopping
can invalidate many classical multiple testing procedures,
such as the Benjamini-Hochberg (BH) procedure~\citep{benjamini1995controlling},
or require substantial correction factors to restore their validity.
Recent work by~\citet{xu2021unified} addresses this challenge 
using the e-values~\citep{ramdas2025hypothesis}. Given 
an adaptive data collection algorithm, their approach 
constructs an {\em e-process} for each hypothesis and  
applies the e-BH procedure~\citep{wang2022false}
to the e-process values at each time point. 
The resulting rejection set satisfies anytime-valid 
FDR control by virtue of the e-processes and the e-BH procedure.
\citet{xu2021unified} decouple sampling 
from inference, focusing on constructing rejection sets 
with anytime-valid FDR control under a {\em given} data collection mechanism.
When the experimenter can also control data collection,  
however, the questions of how to design an efficient sampling algorithm 
and establish corresponding sample-complexity guarantees remain open.
Building on the e-value-based inferential framework of~\citet{xu2021unified}, 
we develop adaptive sampling algorithms that achieve 
high power with provable sample-efficiency guarantees.
As a preview of our results, Figure~\ref{fig:intro_compact}
shows that our proposed method 
can achieve the same target true positive rate (TPR) as the baseline methods
with a significantly smaller sampling budget, while controlling FDR at the desired level.


\begin{figure}[t]
    \centering
\includegraphics[width = 0.6\textwidth]{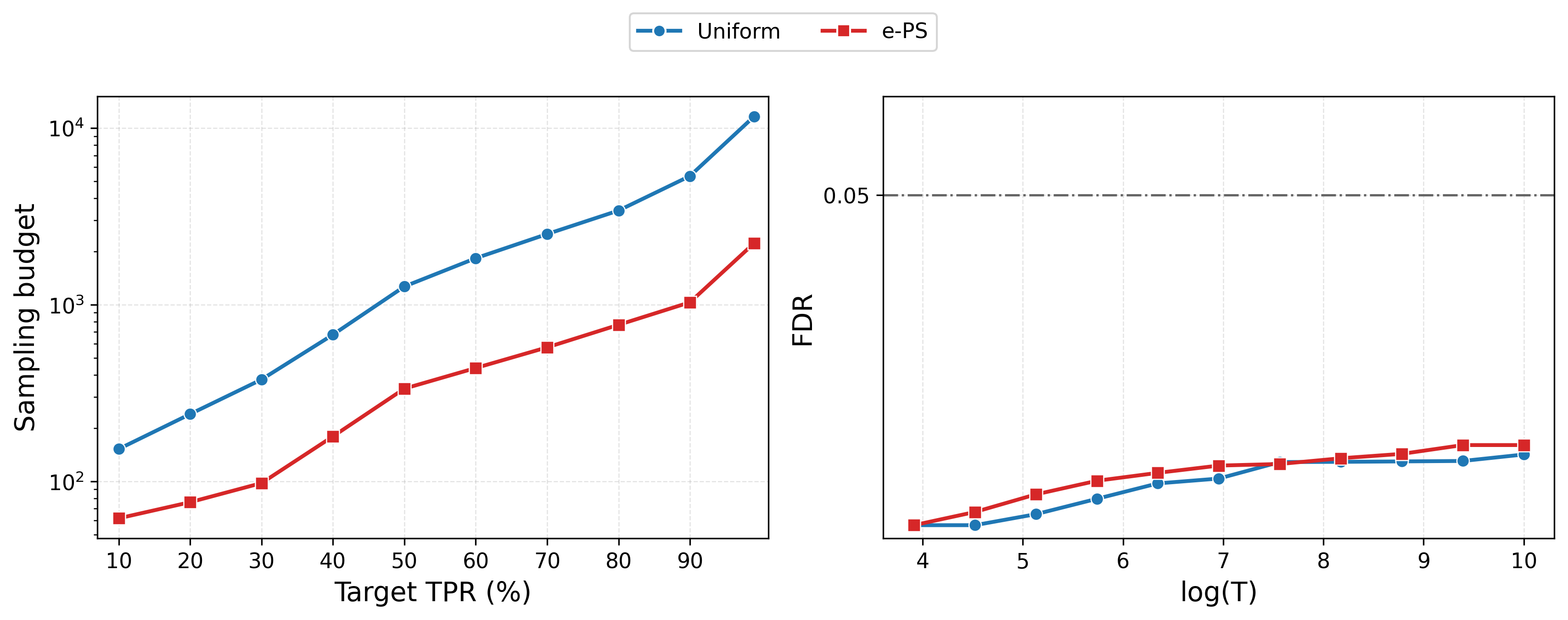}
    \caption{
    A glimpse of our contribution in multivariate Gaussian mean-shift 
    testing problems (the details are provided in Section \ref{sec:simulation}). 
    We show  the average budget required to reach a given target TPR of our proposed framework compared with the uniform benchmark (left), 
    as well as the realized FDR as a function of the budget(right).
    }
\label{fig:intro_compact}
\end{figure}


In the bandit literature, many sampling algorithms have been developed
for minimizing cumulative or simple regret, with respect 
to scalar reward~\citep{lattimore2020bandit}. 
By contrast, adaptive sampling for multiple testing remains less studied,
with notable exceptions of~\citet{jamieson2018bandit,xiang2022multi}.
These works associate each arm with a scalar reward
and test whether its mean exceeds a given threshold,
with sampling decisions directly driven by the  
scalar reward---the upper confidence bound (UCB) 
algorithm of~\citet{jamieson2018bandit} samples the 
the arm with the largest upper confidence bound on its mean.
For hypotheses of more general forms---such as 
whether a parameter vector lies in a certain set
or whether an outcome follows the same distribution 
as the negative controls---there is no explicit scalar reward
and it is unclear how 
existing sampling algorithms generalize.
We address this gap by developing adaptive sampling algorithms for general hypothesis-testing problems beyond tests of scalar means
and providing their sample complexity guarantees.




\subsection{Our contributions}
In this work, we propose {\em e-value-based 
posterior sampling (e-PS)}, a general 
active-sampling framework 
for multiple testing. In e-PS, e-values play a dual role: 
they quantify evidence against each null hypothesis 
and determine rejections through the e-BH procedure,
while also defining a surrogate reward process that guides sampling.
Specifically, e-PS constructs a carefully calibrated 
posterior distribution for each hypothesis, centered at its current 
surrogate reward; draws one synthetic reward from each posterior 
distribution; and samples the hypothesis with the largest draw.
Intuitively, this randomization balances between exploitation, which favors
hypotheses with strong evidence against the null, and 
exploration, which continues to sample hypotheses whose evidence
remains uncertain.


Our framework produces a sequence of
rejection sets with anytime-valid 
FDR control. We show that e-PS yields nested 
rejection sets, $\mR_t \subseteq \mR_{t+1}$, 
for any $t \ge 1$,
so that discoveries can never be revoked.
We further derive high-probability upper bounds on the number 
of samples needed to reject all non-null hypotheses, 
expressed through the growth rates of the 
underlying e-processes. 
We specialize these general bounds to 
simple-versus-simple, composite-versus-simple, and 
simple-versus-composite testing. 
For the simple-versus-simple setting, 
we also provide a complementary lower bound 
that characterizes the fundamental 
sample-complexity limit of multiple testing.
We further validate the effectiveness of our approach through extensive simulations, 
along with two real-world case studies in recommendation system evaluation and LLM evaluation.



\subsection{Notation}
For any $a,b\in \RR$, we write $a \vee b = \max\{a,b\}$ and $a \wedge b = \min\{a,b\}$. 
Denote $x^+ = \max\{x,0\}$ for $x \in \mathbb{R}$  
and $\log_+(x) = \max\{\log x, 0\}$ for $x > 0$.
For $n \in \NN_+$, we write $[n] = \{1,2,\ldots,n\}$. 
For a set $A$, we write $\operatorname{conv}{A}$ for the convex hull of $A$. 
We use $\mathbf{1}_d$ to denote the $d$-dimensional vector of all ones, 
and ${\bm I}_d$ to deonote the $d-\times-d$ identity matrix. 
For two distributions $P$ and $Q$, we write $D_\KL(P\|Q)$ for their Kullback-Leibler divergence.
For two nonnegative functions $f$ and $g$, we write $f = O(g)$ (or $f \lesssim g$) 
if there exists a universal constant 
$C > 0$ such that $f \le C g$; we write $f = \Omega(g)$ (or $f \gtrsim g$) if $g = O(f)$, 
and $f = \Theta(g)$ (or $f \asymp g$) if both $f = O(g)$ and $f = \Omega(g)$ hold.
The notation $f=\tilde O(g)$, $f = \tilde \Omega (g)$, and $f = \tilde \Theta(g)$ 
is defined analogously, up to logarithmic factors.
For any $z \in \RR$, 
$\Phi(z)$ denotes the cumulative distribution function
of the standard normal distribution evaluated at $z$ and $R(z) = \Phi(z)/(1-\Phi(z))$.

\section{Preliminaries}\label{sec:prelim}
\subsection{Problem formulation}
Setting the stage, consider $K$ arms corresponding to different treatments in an A/B/K testing problem. 
At each time step 
$t \ge 1$,  a vector of potential outcomes 
$(Y_{t}(1),\ldots,Y_{t}(K))$ is i.i.d.~drawn
from some unknown joint distribution $P$. 
We impose no restrictions on the dependence structure among arms under the joint distribution 
$P$, allowing $(Y_{t}(k))_{k \in [K]}$ to be arbitrarily dependent across $k$ at each time $t$.

At time $t$, the experimenter selects an arm by sampling 
$A_t\sim \operatorname{Categorical}(\pi_t)$, 
where $\pi_t$ is a probability distribution over $K$ arms, 
and only observes the outcome of the chosen arm, 
denoted by $Y_t := Y_{t}(A_t)$.
The sampling strategy is required to be non-anticipating, 
meaning that $\pi_t$ is measurable to the $\sigma$-field $\mF_{t-1} = \sigma(A_1,Y_1,U_1,\ldots,A_{t-1},Y_{t-1},U_{t-1})$,
where $U_s$ captures any external independent randomness 
introduced by the algorithm, 
with $\mF_0 = \{\varnothing, \Omega\}$.

For each arm $k \in [K]$, we are interested in properties of 
its marginal distribution $P_k$. Formally, we consider testing 
the following hypothesis for each $k\in [K]$:
\begin{align}
H_k^{(0)}: P_k \in \mathcal{P}_k \text{ versus }  H_k^{(1)}: P_k \in \mathcal{Q}_k,
\end{align}
where $\mathcal{P}_k$ and $\mathcal{Q}_k$ are sets of distributions 
to be specified in later sections.
Our goal is to perform multiple testing over the $K$ hypotheses.
At each time $t$, the algorithm computes a statistic for each hypothesis based on the data collected so far and applies a rejection rule to determine a rejection set $\mR_t \subseteq [K]$ that contains the indices of rejected hypotheses.



Given a rejection set $\mR\subseteq [K]$, we focus on two key performance metrics: the FDR and the TPR, or power. 
Letting $\mH_0 \subseteq [K]$ be the collection of true nulls
and $\mH_1 = [K] \setminus \mH_0$ the set of nonnulls, 
we define the FDR and TPR as follows:
\begin{align*}
    &\text{FDR}(\mR) := \E[\text{FDP}(\mR)],~\text{FDP}(\mR) := \frac{\abs{\mR \cap \mH_0 } }{ \abs{\mR } \vee 1 }, \\
    & \text{TPR}(\mR) := \E[\text{TPP}(\mR)],~\text{TPP}(\mR) := \frac{\abs{\mR \cap \mH_1 } }{ \abs{\mH_1} \vee 1}.
\end{align*}
For a sequence of rejection sets $\{\mR_{t}\}_{t \ge 1},$
we seek a stronger notion of FDR control, {\em anytime-valid FDR control},
whose formal definition is given in Definition~\ref{def:anytime-fdr} below. 
\begin{definition}[Anytime-valid FDR control]\label{def:anytime-fdr}
    An algorithm is said to have the anytime-valid FDR at level $\alpha$ if for any stopping time $\tau$ with respect to the filtration $\{\mF_t\}_{t\ge 0}$, we have
    \begin{align*} 
        \operatorname{FDR}(\mR_\tau) \le \alpha.
    \end{align*}
\end{definition}
Our objective is to design an algorithm that 
produces a sequence of rejection sets $\{\mR_t\}_{t \ge 1}$, 
which achieves anytime-valid FDR control while maintaining high TPR with a minimal sampling budget.
In our proposed framework, 
we use e-values---obtained from stopped e-processes---as our evidence statistics, 
and the e-BH procedure as our rejection rule. 
We preview these ingredients next.

\subsection{E-values, e-processes, and the e-BH procedure}
A nonnegative random variable $E$ is called an e-value (or e-variable) for a null hypothesis 
if it satisfies $\E_{H_0}[E]\le 1$. The e-value can be interpreted as a measure of evidence against the null hypothesis, with larger values indicating stronger evidence.
In particular, rejecting $H_0$ when $E \ge 1/\alpha$ yields a valid 
level-$\alpha$ test by Markov's inequality.

With $K$ hypotheses and their corresponding e-values 
$E_1,\ldots,E_K$, 
the e-BH procedure~\citep{wang2022false} at level $\alpha$
outputs a rejection set 
\begin{align}    
\mR = \{k\in[K]: E_k \ge K/(\alpha k^*)\},
~\text{ where }~
k^* = \max \left\{ k \in [K] : E_{(k)} \ge \frac{K}{\alpha k} \right\},
\end{align}
with $E_{(1)} \ge \cdots \ge E_{(K)}$ denoting the e-values sorted in descending order and the convention that $\max\{\varnothing\}=0$
and $\mR = \varnothing$ when $k^* = 0$.
It is shown in~\citet{wang2022false} that the e-BH procedure controls the FDR at level $\alpha$ under arbitrary dependence among e-values.

When the data are collected sequentially, we consider a sequential analog of e-values to measure the cumulative evidence against the null, 
called e-processes,
defined as follows.
\begin{definition}[E-process]
Given a filtration $\{\mG_t\}_{t\ge 0}$, a $\{\mG_t\}$-adapted nonnegative process 
$\{E_t\}_{t \ge 1}$ is called an e-process for a null hypothesis $H_0$ on $\{\mG_t\}$ if
$E_\tau$ is an e-value for $H_0$, 
for any $\{\mG_t\}$-stopping time $\tau$.
\end{definition}
A sufficient and more commonly used condition for $\{E_t\}$ to be an e-process is 
that $\{E_t\}$ is a nonnegative supermartingale with respect to $\{\mG_t\}$ under $H_0$, 
i.e., $\E_{H_0}[E_t \given \mG_{t-1}] \le E_{t-1}$ for all $t\ge 1$ and $\E[E_0] \le 1$.
More properties of e-processes can be found in~\citet{ramdas2025hypothesis}.

If for each hypothesis $H_k^{(0)}$ we can construct an e-process $\{E_{k,t}\}_{t\ge 1}$
on $\{\mF_t\}$,
then applying e-BH to the stopped e-values---e-processes evaluated at a stopping time---$(E_{1,\tau},\ldots,E_{K,\tau})$,
for any $\{\mF_t\}$-stopping time $\tau$ yields 
a rejection set $\mR_\tau$ with valid FDR control. This gives an immediate route to anytime-valid multiple testing once valid e-processes are available.

\subsection{Additional related work}
\label{app:lit}

\paragraph{Active sampling for multiple testing.}
In addition to the aforementioned work of~\citet{jamieson2018bandit}, 
\citet{haupt2011distilled} study related problems,
with an emphasis on asymptotic FDP control and power analysis.
Although~\citet{xu2021unified} focus primarily on inference, 
they also present a generic sampling 
rule (Appendix C) that prioritizes 
the unrejected arm with the largest e-process value;
we refer to this rule as  
the greedy algorithm. Our simulation shows  
that our proposed method is more efficient than the greedy method. 
\cite{zhou2026robustness} likewise study active sampling for
multiple testing of reward means under adversarial contamination 
and establish sample-complexity guarantees for greedy-type algorithms.
In a different line of work, \citet{gandy2014mmctest,gandy2017quickmmctest,zhang2019adaptive,fischer2026multiple} study adaptive sampling and early stopping in 
Monte Carlo sample, aiming to reproduce results 
based on exact p-values with as few Monte Carlo samples as possible.
More recently, \cite{sandoval2026multi} consider global 
null testing across multiple hypotheses and propose
a UCB-type algorithm for constructing log-optimal 
e-processes. Our work is also broadly related to budget-aware approaches 
for improving computational and resource efficiency \citep{zhao2023adaptive,zrnic2024active,xu2025active,kuang2025active}. 
While these methods focus on combining inexpensive auxiliary statistics 
with selective evidence to enhance power, 
we instead study adaptive budget allocation across hypotheses to reduce sample complexity.



\paragraph{FDR control in online settings.}
Anytime-valid control in our framework builds on 
the e-BH procedure and the broader theory of e-values \citep{wang2022false,vovk2021values,ramdas2025hypothesis}. 
As noted earlier, \citet{xu2021unified} apply e-processes 
to bandit problems and achieve anytime-valid FDR control 
through the e-BH.~\citet{wang2025anytime} study anytime-valid 
FDR control based on stopped e-processes in more general settings,
while~\citet{tavyrikov2026carefree} investigate FDR control with 
the suprema of e-processes.
We also note another line of work on ``online'' FDR control that addresses 
a related but distinct problem from the one studied here: 
hypotheses (possibly infinitely many) arrive sequentially along with their associated $p$-values, and the decision maker must decide whether to reject each hypothesis immediately upon observation~\citep{javanmard2015online,javanmard2018online,ramdas2017online,yang2017framework,zrnic2021asynchronous}.
More recently, \cite{xu2024online,fischer2024online1} have examined this setting from an e-value-based perspective. Nevertheless, both the problem formulation and the inferential goals in these works differ substantially from those considered in our paper.


\section{The e-value-based posterior sampling framework}\label{sec:framework}

\subsection{Our algorithm}

We now introduce the e-PS framework, which 
consists of two components: an inference rule for constructing rejection sets 
from accumulated evidence and a sampling rule that determines which hypothesis to query next.
We describe these components in turn.

\paragraph{E-process and rejection set contruction.}
For each hypothesis $k\in[K]$, 
we construct an e-process in the product form. Specifically, 
at each time step $t \ge 1$, let $e_{k,t}$
be an $\mF_t$-measurable random variable satisfying 
\@\label{eq:e-increment}
(1)~e_{k,t} > 0 \text{ a.s. }
(2)~\E_{H_k^{(0)}}[e_{k,t} \given \mF^-_{t}] \le 1 \text{ a.s. }
(3)~e_{k,t} =1 \text{ if  }A_{t} \neq k,
\@
where $\mF^-_{t} = \sigma(A_1,Y_1,U_1,
\ldots,A_{t-1},Y_{t-1},A_t,U_t)$, 
$\forall t\ge 1$. 
We then define the corresponding e-process as 
\@\label{eq:product}
E_{k,t} = \prod^t_{s=1}e_{k,s}, \quad \forall t \ge 1.
\@
with the convention that $E_{k,0} = 1$.

It is straightforward to check that $E_{k,t} >0$ and $\forall t \ge 1$, 
\begin{align*}
\E_{H_k^{(0)}}[E_{k,t} \given \mF_{t-1}] 
& = E_{k,t-1} \cdot \E_{H_k^{(0)}}[e_{k,t} \given \mF_{t-1}]\\
& = E_{k,t-1}\cdot\bigg(\P_{H_k^{(0)}}(A_t \neq k\given \mF_{t-1}) + 
\E_{H_k^{(0)}}\Big[\ind\{A_t = k\} \cdot 
\E_{H_k^{(0)}}\big[e_{k,t} \given \mF_{t-1}, U_t\big]\Big]\bigg)\\
& \le E_{k,t-1},
\end{align*}
where the second step is because $A_t$ is measurable with respect to 
$\sigma(\mF_{t-1},U_t)$.
This certifies that $\{E_{k,t}\}_{t\ge 1}$ is a nonnegative supermartingale with respect to $\{\mF_t\}$ under $H_k^{(0)}$, 
and therefore an e-process for $H_k^{(0)}$ on $\{\mF_t\}$.
At each time $t\ge 1$, we construct the rejection set $\mR_t$ by applying 
e-BH at level $\alpha$ to $\{E_{1,t},\dots,E_{K,t}\}$, with $\mR_0 = \varnothing$.
Concrete constructions of the increment e-values $\{e_{k,t}\}$ are discussed  
in later sections.
We next describe the sampling strategy. 

\paragraph{Sampling strategy.}


As emphasized by~\citet{tavyrikov2026carefree}, nested rejection sets
are highly desirable in sequential multiple testing:
$\mR_{t} \subseteq \mR_{t+1}$, for all $t\ge 1$, 
ensuring that discoveries are never revoked as additional data is collected.
A procedure satisfying this property is called {\em carefree}.
To achieve this property in our framework, we establish a natural sufficient condition:
if evidence is collected only for hypotheses that have not yet been rejected, 
then the resulting rejection sets are nested. 
The formal statement is given in Proposition~\ref{prop:nest} below.

\begin{proposition}[Nested rejection sets]\label{prop:nest}
Let $\{\mR_t\}_{t\ge1}$ be obtained by applying e-BH 
to the product-form e-processes in~\eqref{eq:product},
with increments satisfying~\eqref{eq:e-increment}.
If $\operatorname{supp}(\pi_t) \subseteq [K]\setminus \mR_{t-1}$ a.s.,
then the rejection sets are nested: $\mR_{t-1} \subseteq \mR_t$.
\end{proposition}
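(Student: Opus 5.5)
The argument tracks how the e-values move from time $t-1$ to time $t$ and uses the monotonicity of the e-BH threshold. Since $A_t \sim \operatorname{Categorical}(\pi_t)$ and $\operatorname{supp}(\pi_t)\subseteq [K]\setminus\mR_{t-1}$ almost surely, we have $A_t\notin\mR_{t-1}$ almost surely. By property (3) in~\eqref{eq:e-increment}, $e_{k,t}=1$ for every $k\neq A_t$. Hence $E_{k,t}=E_{k,t-1}$ for every $k\neq A_t$, and in particular for every $k\in\mR_{t-1}$. Only one coordinate, $A_t$, can change, and it lies outside $\mR_{t-1}$.

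Next, let $k^*_{t-1}=|\mR_{t-1}|$ be the e-BH count at time $t-1$. If $k^*_{t-1}=0$ there is nothing to prove. Otherwise, I would use the standard characterization of e-BH: $k^*$ is the largest $m$ such that at least $m$ e-values are $\ge K/(\alpha m)$. Moreover, the rejection set is exactly $\{k: E_k\ge K/(\alpha k^*)\}$, whose cardinality equals $k^*$. This is because any index with $E_k\ge K/(\alpha k^*)$ is counted, and if more than $k^*$ indices passed, a larger $m$ would also qualify since the threshold decreases in $m$. In our setting the $k^*_{t-1}$ indices in $\mR_{t-1}$ keep their e-values unchanged at time $t$, so at least $k^*_{t-1}$ of the values $E_{\cdot,t}$ are $\ge K/(\alpha k^*_{t-1})$. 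Therefore $E_{(k^*_{t-1}),t}\ge K/(\alpha k^*_{t-1})$, and so $k^*_t\ge k^*_{t-1}$.

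Finally, because $K/(\alpha m)$ is nonincreasing in $m$, we get $K/(\alpha k^*_t)\le K/(\alpha k^*_{t-1})$. Every $k\in\mR_{t-1}$ satisfies $E_{k,t}=E_{k,t-1}\ge K/(\alpha k^*_{t-1})\ge K/(\alpha k^*_t)$, so $k\in\mR_t$. This gives $\mR_{t-1}\subseteq\mR_t$.

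The main point needing care is the second step, the claim that $|\mR|=k^*$, which handles ties in the sorted e-values. I would verify it directly from the definition of $k^*$ as a maximum. Notably, the argument never uses any property of the increment $e_{A_t,t}$ beyond positivity: even if the e-value of the sampled arm drops, that arm was not in $\mR_{t-1}$, so the lower bound on $k^*_t$ is unaffected.
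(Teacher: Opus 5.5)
Your proof is correct and follows essentially the same route as the paper. Only the sampled, not-yet-rejected arm's e-value changes, so the e-BH count $k^*$ cannot decrease, the threshold $K/(\alpha k^*)$ cannot increase, and every previously rejected index survives. The one difference is that you get $k^*_t \ge k^*_{t-1}$ by counting the $k^*_{t-1}$ unchanged e-values above $K/(\alpha k^*_{t-1})$. This is slightly cleaner than the paper's case split on whether the sampled arm's updated e-value lands below or above the $k^*$-th largest old e-value; the second case there gives a stronger increment that is never used.
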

The proof of Proposition~\ref{prop:nest} is given in Appendix~\ref{app:proof-nest}.
We note that several existing algorithms in bandit multiple testing literature
satisfy the above condition, and thereby produce nested rejection sets. For example,
the UCB-type algorithm proposed by~\citet{jamieson2018bandit}, as well as  
the generic sampling methods introduced by~\citet[Appendix C]{xu2021unified}.

With this condition in place, it remains to allocate samples among 
the unresolved hypotheses.
We propose a posterior sampling approach
inspired by Thompson sampling~\citep{thompson1933likelihood}. 
Fix an arm $k\in[K]$. 
At time $t\ge 1$, we denote the number of times arm $k$ has been sampled up to time $t$ as 
\begin{align}
n_{k,t} = \sum^{t}_{s=1} \ind\{A_s = k\}.
\end{align}
In addition, we define  
\$ 
\hat m_{k,t} := \frac{\log(E_{k,t})}{n_{k,t} \vee 1},
\quad \hat{\sigma}_{k,t}^2 := \frac{v_{k,t}}{n_{k,t}\vee 1},
\$
where $v_{k,t}$ are variance proxy parameters to be specified;
For $t = 1,\dots,K$, we initialize the algorithm by letting 
$A_t = t$. For $t > K$, 
we independently draw for each $k\in [K]\setminus\mR_{t-1}$
\@\label{eq:ps-update}
\tilde \mu_{k,t} \sim \mathcal{N}(\hat m_{k,t-1}, \hat{\sigma}_{k,t-1}^2).
\@
The next arm to sample is then selected as $A_t = \argmax_{k \notin \mR_{t-1}} \tilde \mu_{k,t}$,
with ties broken arbitrarily.

We summarize the complete framework in Algorithm~\ref{alg:online-e-BH-ps}
and briefly discuss the intuition behind the proposed sampling strategy.
Let $t_k(\ell)$ be the time step when arm $k$ is sampled for the $\ell$-th time. 
Then under mild conditions,  
\$ 
\frac{\log(E_{k,t})}{n_{k,t}} = \frac{1}{n_{k,t}}\sum^t_{s=1} \log(e_{k,s}) = \frac{1}{n_{k,t}}\sum^{n_{k,t}}_{\ell =1} \log e_{k, t_k(\ell)}
\$ 
has an approximate normal distribution 
when $n_{k,t}$ is large. When we posit a normal prior on the mean log increment, the sampling rule in~\eqref{eq:ps-update} can be interpreted as a posterior sampling strategy,
which allocates more samples to hypotheses with larger posterior mean $\hat m_{k,t}$ and/or larger posterior variance $\hat{\sigma}_{k,t}^2$.


\begin{algorithm}[h!]
\caption{E-value-based posterior sampling (e-PS)}
\label{alg:online-e-BH-ps}
\KwIn{null and alternative hypotheses $\{H_k^{(0)}, H_k^{(1)}\}_{k\in [K]}$, 
target FDR level $\alpha$, 
variance proxy parameters $\{v_{k,t}\}_{k\in[K], t \ge K}$.}
\KwOut{a sequence of rejection sets $\{\mR_t\}_{t\ge 1}$.}
\textbf{Initialization:}
$E_{k,0} = 1$, $\hat m_{k,0} = 0$, 
$n_{k,0} = 0$, 
$\forall k\in[K]$;
$\mR_0 \gets \varnothing$; $t \gets 1$\;
\While{not stopped}{
    \uIf{$t \le K$}{
        $A_t = t$\; 
    }
    \Else{
    draw 
    $\tilde m_{k,t} \sim \mathcal{N}(\hat m_{k,t-1}, 
    \frac{v_{k,t-1}}{n_{k,t-1} \vee 1})$, $\forall k \notin \mR_{t-1}$\;
    choose $A_{t} = \text{argmax}_{k \notin \mathcal{R}_{t-1}} \tilde m_{k,t}$,
    with ties broken arbitrarily\;
    }
    observe $Y_t = Y_t(A_t)$\;
    compute $e_{k,t}$ and update $E_{k,t} \gets E_{k,t-1} \cdot e_{k,t}$, $\forall k \in[K]$\;
    update $n_{k,t} \gets n_{k,t-1} + \ind\{A_t = k\}$, $\forall k \in[K]$\;
    update $\hat m_{k,t} \gets \log(E_{k,t})/(n_{k,t}\vee 1)$, 
    $\forall k\in[K]$\;
    apply e-BH at level $\alpha$ to obtain the rejection set:  
    $\mathcal{R}_t \leftarrow \text{e-BH}(E_{1,t},\ldots,E_{K,t};\alpha)$\;    
    \If{$\mR_t  = [K]$ }
    {break\;}
    $t \gets t + 1$.
}
\end{algorithm}

\subsection{Sample complexity guarantees}

To understand the sample complexity of the proposed framework, we 
analyze the number of samples needed to reliably reject all nonnull hypotheses.
This sample complexity  depends on the properties of the e-processes $\{E_{k,t}\}$:
how quickly they grow under the alternative hypotheses, 
as well as how quickly their empirical growth stabilizes around their asymptotic growth rates. 

To characterize these properties, we introduce the following definitions 
to quantify the tail probabilities of the e-processes.
For any $k\in [K]$, let $\{Z_{k,\ell}\}_{\ell \ge 1}$ denote its 
log-reward process, where $Z_{k,\ell}$ is the log e-value increment 
associated with the $\ell$-th pull of arm $k$. If $t_k(\ell) = \infty$, 
then $Z_{k,\ell}$ denotes the potential log e-value increment that 
would have been observed had sampling continued until 
the pull occurred.
Define 
\$ 
M_{k,\ell} = \sum^\ell_{i=1} Z_{k,i}, \quad \forall \ell \ge 1,
\$
with $M_{k,0} = 0$. Specifically, when arm $k$ is selected at time $t$, 
the algorithm observes $Z_{k,n_{k,t-1}+1} = \log e_{k,t}$.


Next, we define the $\sigma$-fields: for any $\ell \ge 1$, 
\$ 
\mG^{(k)}_{\ell-1} = \mF^-_{t_k(\ell)}. 
\$ 
$\mG^{(k)}_{\ell-1}$ is well defined since $t_k(\ell)$ is a stopping time 
with respect to $\mF_t^-$; 
in addition, $\{\mG^{(k)}_{\ell}\}_{\ell \ge 0}$ forms a filtration.
Next, we define 
\$ 
S_{k,\ell} = \sum^{\ell}_{i=1}\E\big[Z_{k,i} \given \mG^{(k)}_{i-1}\big].
\$
to be the cumulative conditional mean of the log e-value increments for arm $k$ after $\ell$ samples.

For any non-decreasing sequence $\{x_\ell\}_{\ell \ge 1}$
and any non-increasing sequence $\{y_\ell\}_{\ell \ge 1}$,
define the first time at which they become positive and non-positive, respectively:
\$
\mathsf{pos}(\{x_\ell\}) := \inf\{\ell \ge 1: x_\ell >0\}, 
\quad
\mathsf{neg}(\{y_\ell\}) := \inf\{\ell \ge 1: y_\ell \le 0\},
\$
with the convention $\inf \varnothing = +\infty$.
We further define the associated boundary-crossing probabilities
\@\label{eq:tail}
\mathsf{L}_k(\{x_\ell\}) := \P\Big(\exists \ell \ge 1 \text{ s.t. }  \frac{S_{k,\ell}}{\ell} < x_\ell\Big), \quad
\mathsf{U}_k(\{y_\ell\}) := \P\Big(\exists \ell \ge 1 \text{ s.t. } \frac{S_{k,\ell}}{\ell} > y_\ell\Big).
\@
Intuitively, we expect that under the alternative hypothesis, 
$S_{k,\ell}/\ell$ remain above a suitably chosen lower boundary 
that eventually becomes positive, 
so that $\mathsf{L}_k(\{x_\ell\})$ is small. 
Conversely, under the null hypothesis, it should remain 
below a suitably chosen upper bound that eventually 
becomes non-positive, 
making $\mathsf{U}_k(\{y_\ell\})$ small.

We introduce the following assumption on the increment e-values $\{e_{k,t}\}$, 
which is a mild condition satisfied in many common testing problems; see discussions in later sections. 
\begin{assumption}\label{assump:likelihood-e}
For every $k\in [K]$ and $\ell \ge 1$, 
there exists a nonnegative, $\mG^{(k)}_{\ell-1}$-measurable 
variance proxy $s_{k,\ell}^2$
such that 
almost surely
\$ 
\E\Bigg[\exp\Big\{\lambda \cdot \big(Z_{k,\ell} - \E[Z_{k,\ell}\given \mG_{\ell-1}^{(k)}]\big)\Big\} 
\Biggiven \mG_{\ell-1}^{(k)}\Bigg] \le \exp(\lambda^2 s_{k,\ell}^2/2), \quad \forall \lambda \in \R.
\$
\end{assumption}
Under Assumption~\ref{assump:likelihood-e}, for any $k\in [K]$, 
we further define the cumulative variance proxy
$V_{k,\ell} = \sum^\ell_{i=1} s_{k,i}^2$, $\forall \ell \ge 1$
and the probability of the cumulative variance exceeding 
a constant sequence $\{\overline{V}_{k,\ell}\}_{\ell \ge 1}$:
\$ 
\mathsf{V}_{k}(\{\overline{V}_{k,\ell}\}) 
:= \P(\exists \ell \ge 1 \text{ s.t. } V_{k,\ell} > \overline{V}_{k,\ell}).
\$
Finally, for any $v\ge 0$, $c>0$, and any $\ell \ge 1$,
we define the radius function 
\@ \label{eq:rho}
\rho_\ell(v,c) = \sqrt{\frac{2(v + c)}{\ell}\cdot 
\log\bigg(\frac{4K}{\delta}\sqrt{1+v/c}\bigg)}.
\@

\paragraph{A general result.} 
With all the definitions in place, we provide in  
the following general high-probability upper bound 
on the number of samples needed by e-PS to reject all the non-null hypotheses. In the sections to follow, we instantiate this result to concrete examples to illustrate its applications. 

The sample-complexity upper bound depends a user-chosen sequence for each $k\in [K]$: 
a non-decreasing sequence $\{\gamma_{k,\ell}\}_{\ell \ge 1}$ if $k\in \mH_1$
and a non-increasing sequence $\{\eta_{k,\ell}\}_{\ell \ge 1}$ if $k\in \mH_0$.
Roughly speaking, these sequences are chosen to track $S_{k,\ell}/\ell$ so that 
$\sfL_k(\{\gamma_{k,\ell}\})$ is small for $k\in \mH_1$ and $\sfU_k(\{\eta_{k,\ell}\})$ 
is small for $k\in \mH_0$.

Before stating the formal result, we introduce the following set of scalars, which shall determine the sample size required to reject all the non-nulls. 
For each $k\in \mH_1$, define 
\$
& \sfh_k(\sfw,\{\gamma_{k,\ell}\},\{\overline{V}_{k,\ell}\}, c_k, \{\kappa_{k,\ell}\}) 
= \inf\big\{\ell \ge \sfw: 
n \gamma_{k,n}^2 \ge 
9 \cdot (\rho_n(\overline{V}_{k,n},c_k)^2 \vee\kappa_{k,n}^2), \forall n \ge \ell
\big\},\\
& \sfl_k(\sfw,\{\gamma_{k,\ell}\}) 
= \inf\bigg\{\ell \ge  \sfw: 
n \gamma_{k,n} \ge 2\log\Big(\frac{K}{\alpha}\Big),
\forall n \ge \ell \bigg\},\\
& \mathsf{H}_k(\sfw,\{\gamma_{k,\ell}\},\{\overline{V}_{k,\ell}\}, c_k, \{\kappa_{k,\ell}\})
= 2\sum^\infty_{\ell=\sfw} 
\exp\bigg\{-\frac{\ell \gamma_{k,\ell}^2}{144\cdot 
(\rho_\ell(\overline{V}_{k,\ell},c_k)^2  
\vee \kappa_{k,\ell}^2)}\bigg\}.
\$
Here, $\sfw$ is a burn-in period, $\{\overline{V}_{k,\ell}\}_{\ell \ge 1}$ is a positive,  
non-decreasing sequence for variance upper bounds, $c_k>0$ is a constant, 
and $\{\kappa_{k,\ell}\}_{\ell \ge 1}$ is a deterministic non-negative sequence 
for potential variance inflation in the sampling algorithm.
Similarly, for each $k\in \mH_0$, we define 
\$ 
& \mathsf{m}_k(\sfw,\{\gamma_{j,\ell}\}_{j\in \mH_1}, 
\{\eta_{k,\ell}\},\{\overline{V}_{k,\ell}\}, c_k, \{\kappa_{k,\ell}\})  \\
& \hspace{5em }= \inf\bigg\{\ell \ge  \sfw: n (\gamma^* - \eta_{k,n})^2 
\ge 18\cdot (\rho_n(\overline{V}_{k,n},c_k)^2 \vee \kappa_{k,n}^2)
\cdot \log\Big(\frac{4K}{\log 2}\Big),\forall n \ge \ell\bigg\},
\$
where $\gamma^* = \min_{j\in \mH_1, \ell \ge \sfw_j} \gamma_{j,\ell}$.
For notational simplicity, we suppress the arguments in $\sfh_k, \sfl_k, \sfH_k$, 
and $\mathsf{m}_k$ when they are clear from context.
The following theorem uses the above quantities to characterize the number of 
samples needed to reject all non-null hypotheses.




\begin{theorem}
\label{thm:general-power-hp}
Fix $\alpha,\delta \in (0,1)$. 
Consider $K$ hypotheses $\{H_k^{(0)}, H_k^{(1)}\}_{k\in [K]}$,
and assume $\mH_1 \neq \varnothing$.
For each $k\in[K]$, let $\{E_{k,t}\}_{t\ge 1}$ be an e-process
for $H_k^{(0)}$, adapted to $\{\mF_t\}_{t\ge 1}$, in
the form of~\eqref{eq:product} with increment e-values $\{e_{k,t}\}$ satisfying Assumption~\ref{assump:likelihood-e}. 

For any $k\in[K]$, consider a constant $c_k > 0$, 
a deterministic non-negative sequence 
$\{\kappa_{k,\ell}\}_{\ell \ge 1}$ for potential variance inflation,
and a positive and non-decreasing sequence 
$\{\overline{V}_{k,\ell}\}_{\ell \ge 1}$ for variance upper bound.
For each $k \in \mH_1$, consider 
any non-decreasing sequence $\{\gamma_{k,\ell}\}_{\ell \ge 1}$ 
such that $\mathsf{pos}(\{\gamma_{k,\ell}\}) <\infty$ and 
a burn-in period $\sfw_k \ge \mathsf{pos}(\{\gamma_{k,\ell}\}) $; 
for each $k\in \mH_0$, 
consider any non-increasing
sequence $\{\eta_{k,\ell}\}_{\ell \ge 1}$ 
such that $\mathsf{neg}(\{\eta_{k,\ell}\}) <\infty$ and 
a burn-in period $\sfw_k \ge \mathsf{neg}(\{\eta_{k,\ell}\}) $.
The e-PS algorithm with variance parameters 
$v_{k,t} = \rho_{n_{k,t}}(V_{k,n_{k,t}},c_k)^2 \vee\kappa_{k,n_{k,t}}^2$
satisfies
\begin{align}
\P(\TPP(\mR_T) = 1) \ge 1 - \delta - \sum_{k\in \mH_1} \mathsf{L}_k(\{\gamma_{k,\ell}\}) - \sum_{k\in \mH_0} \mathsf{U}_k(\{\eta_{k,\ell}\}) - \sum_{k\in[K]} \mathsf{V}_k(\{\overline{V}_{k,\ell}\}), 
\end{align}
provided that 
\begin{align}
T\gtrsim K + \sum_{j\in \mH_1} 
\Big\{\mathsf{R}_j\cdot (\sfw_j + \log(K/\delta)) +  
\max\{\mathsf{h}_j, \mathsf{l}_j\} 
+ \mathsf{H}_j\Big\} 
+ \sum_{k\in \mH_0} \mathsf{m}_k,
\end{align}
where $\mathsf{R}_j = R(1) \vee 
\max_{1\le \ell < \sfw_j} R\Big(\frac{\sqrt{\ell}}
{\kappa_{j,\ell}}
\cdot(\gamma^* - \gamma_{j,\ell})+1\Big)$ if $\sfw_j >1$ 
and $\mathsf{R}_j = R(1)$ if $\sfw_j = 1$, 
with the convention $0/0 = 0$.
\end{theorem}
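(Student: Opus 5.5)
We will prove the theorem by a Thompson-sampling style regret decomposition, carried out on a single good event on which every concentration statement holds uniformly.

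\textbf{Step 1: the good event.} Let $\mathcal{E}$ be the intersection of four events:
\begin{itemize}
\item for $k\in\mH_1$, $S_{k,\ell}/\ell\ge\gamma_{k,\ell}$ for all $\ell$;
\item for $k\in\mH_0$, $S_{k,\ell}/\ell\le\eta_{k,\ell}$ for all $\ell$;
\item $V_{k,\ell}\le\overline V_{k,\ell}$ for all $k,\ell$;
\item for all $k$ and $\ell$,
\begin{align*}
|M_{k,\ell}-S_{k,\ell}|\le \ell\,\rho_\ell(V_{k,\ell},c_k).
\end{align*}
\end{itemize}
The last bound comes from the method-of-mixtures (normal mixture) self-normalized supermartingale. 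Under Assumption~\ref{assump:likelihood-e}, $\exp\{\lambda(M_{k,\ell}-S_{k,\ell})-\lambda^2V_{k,\ell}/2\}$ is a nonnegative supermartingale on $\{\mG^{(k)}_\ell\}$. Integrating $\lambda\sim\mathcal N(0,1/c_k)$ and applying Ville's inequality with level $\delta/(2K)$ on each side and each arm gives exactly the radius $\rho_\ell(v,c)$ of \eqref{eq:rho}. A union bound then yields $\P(\mathcal{E}^c)\le\delta/2+\sum\sfL_k+\sum\sfU_k+\sum\mathsf V_k$, and we reserve the remaining $\delta/2$ for the randomness of the Gaussian draws. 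We work with the potential log-reward sequences $Z_{k,\ell}$, so $\mathcal E$ is defined independently of how many pulls actually occur.

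\textbf{Step 2: rejection criterion and nestedness.} On $\mathcal E$, for a non-null arm $j$ with $n_{j,t}=n\ge\sfh_j$, we have
\begin{align*}
\log E_{j,t}=M_{j,n}\ge n\gamma_{j,n}-n\rho_n\ge \tfrac{2}{3}n\gamma_{j,n}.
\end{align*}
This uses $n\gamma_{j,n}^2\ge 9\rho_n^2$ together with $\gamma_{j,n}>0$ for $n\ge\sfw_j$. Consequently, once $n\ge\max\{\sfh_j,\sfl_j\}$ we get $E_{j,t}\ge K/\alpha$, so e-BH rejects $j$. By Proposition~\ref{prop:nest} (the support of $\pi_t$ excludes $\mR_{t-1}$), $j$ stays rejected and is never sampled again. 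Therefore $\TPP(\mR_T)=1$ whenever every non-null arm has been pulled at least $\max\{\sfh_j,\sfl_j\}$ times, or has been rejected earlier. It then suffices to bound the number of rounds in which some non-null remains unrejected.

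\textbf{Step 3: counting pulls.} Fix a time $t$ at which some non-null $j^\dagger$ is unrejected. We bound the pulls of each arm in three groups.
\begin{itemize}
\item \emph{Null arms.} A null $k$ with $n_{k,t}\ge\mathsf m_k$ has posterior mean at most $\eta_{k,n}+\rho_n$, whereas an unrejected non-null with enough samples has mean at least $\gamma^*-\rho$. For $k$ to be selected, its Gaussian draw must exceed this gap. Since $\sqrt{n}(\gamma^*-\eta_{k,n})\ge\sqrt{18\log(4K/\log2)}\,\sqrt{v}$, the probability of that deviation is at most about $(\log 2)/(4K)$ per round. Summing over rounds and over null arms adds only $O(K)$ extra pulls, so the null arms contribute $\sum_{k\in\mH_0}\mathsf m_k$ plus lower-order terms.
\item \emph{Non-null arms after burn-in.} Each non-null arm is pulled at most $\max\{\sfh_j,\sfl_j\}$ times before it is rejected. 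The additional term $\sfH_j$ controls the chance that its own draw falls below $\gamma_{j,n}/2$, which would delay its selection; the summand is exactly the Gaussian tail $\exp\{-n\gamma^2/(144v)\}$.
\item \emph{Non-null arms during burn-in.} This is the delicate Thompson-sampling term. When an under-sampled non-null arm $j$ (with $n_{j,t}<\sfw_j$) is not chosen, some other arm's draw beats it. We use the standard Agrawal--Goyal argument: the expected number of rounds between consecutive pulls of $j$ is bounded by $1/p_{j,t}-1$, where $p_{j,t}$ is the probability that $j$'s draw exceeds the threshold $\gamma^*-1\cdot(\text{scale})$. Lower-bounding $p_{j,t}$ via the Gaussian tail gives a factor $R\big(\sqrt{\ell}(\gamma^*-\gamma_{j,\ell})/\kappa_{j,\ell}+1\big)$, where the inflation $\kappa$ guarantees enough spread; past burn-in the factor is $R(1)$. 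Summing over $\ell<\sfw_j$ yields $\mathsf R_j\sfw_j$. Converting expectations to a high-probability statement with a Freedman/geometric-sum martingale bound adds $\mathsf R_j\log(K/\delta)$ and consumes the reserved $\delta/2$.
\end{itemize}
The initialization accounts for the $K$ term.

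\textbf{Main obstacle.} The main difficulty is Step 3's anti-concentration and waiting-time argument. Posterior variances shrink at data-dependent rates, rejected arms leave the competition, and the threshold $\gamma^*$ must be compared against arms still in burn-in. First I would show that on $\mathcal E$ the optimal surviving draw exceeds $\gamma^*-\rho$, by conditioning on $\mF_{t-1}$ and using independence of the Gaussian draws. Then I would apply a geometric-trial coupling per pull index $\ell$, so that the stopping-time filtrations $\mG^{(k)}$ align with the per-round probabilities.
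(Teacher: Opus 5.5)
Your overall architecture matches the paper's: a normal-mixture good event, rejection once $n\ge\max\{\mathsf h_j,\mathsf l_j\}$ combined with nestedness, and an Agrawal--Goyal odds argument for low draws. However, the null-arm count in Step~3 does not close as written. Your per-round bound is correct: once $n_{k,t-1}\ge\mathsf m_k$ and the empirical mean is below $\mathsf a_{k,n}$, null $k$ is pulled with a draw above the threshold with conditional probability at most $\epsilon_0/K$, where $\epsilon_0=\log 2/4$. But summing this over rounds $t\le\tau_*$ and over $k\in\mathcal H_0$ gives a count of order $\epsilon_0\tau_*$, not $O(K)$. Nothing limits how many rounds a null spends at a given sample count. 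You also cannot charge one such pull per sample count, because conditioning on selection biases the draw upward. Since $\tau_*$ is the quantity being bounded, the argument is circular.

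The missing idea is self-bounding. The paper shows that
\[
\Xi_t(\lambda)=\exp\Big\{\lambda\sum_{K<s\le t}\mathsf B_s-\epsilon_0(e^\lambda-1)(t-K)\Big\}
\]
is a supermartingale, where $\mathsf B_s$ indicates this event intersected with the $\mathcal F_{s-1}$-measurable variance event. Applying Ville at $\tau_*$ with $\lambda=\log 2$ gives, with probability at least $1-\delta/4$,
\[
T_2\le\sum_{k\in\mathcal H_0}\mathsf m_k+\frac{\log(4/\delta)}{\log 2}+\frac{\tau_*}{4}.
\]
The term $\tau_*/4$ is then absorbed into the left-hand side of $\tau_*=\sum_k n_{k,\tau_*}$. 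Without this step you would need the classical horizon-dependent threshold, which brings in a $\log T$ factor absent from the theorem.

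The low-draw case also needs repair. A null is selected not only when its draw beats the gap, but also when every surviving draw, including each unrejected non-null's, falls below the threshold. The paper therefore splits null pulls into three parts: empirical mean above $\mathsf a_{k,n}$ (at most $\mathsf m_k$ such pulls), draw above $\mathsf b$ (the term just discussed), and draw at most $\mathsf b$. It charges the last part to the smallest-index unrejected non-null $J_t$ via
\[
\mathbb P(\text{low-draw null pull}\mid\mathcal F_{t-1})\le \mathsf O_{j,t}\,\mathbb P(A_t=j\mid\mathcal F_{t-1}).
\]

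Several further points in Step~3 need correcting:
\begin{itemize}
\item The threshold must sit uniformly between the null envelope and $\gamma^*$; the paper uses the deterministic $\mathsf b=2\gamma^*/3$. Your data-dependent ``$\gamma^*-1\cdot(\text{scale})$'' can drop below the null means when $j$ is under-sampled, and then your null-side tail bound no longer applies.
\item Your claim that the number of rounds between consecutive pulls of $j$ is at most $1/p_{j,t}-1$ is false. Other non-nulls with high draws can be pulled in between; only rounds in which a low-draw null is pulled are controlled.
\item This charging applies after burn-in as well. There the odds are at most $R(1)$, and their sum over $\ell\ge\mathsf w_j$ is what produces $\mathsf h_j$ and $\mathsf H_j$. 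So $\mathsf H_j$ counts extra \emph{null} pulls caused by $j$'s low draws, not extra pulls of $j$.
\end{itemize}

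Finally, the mixture bound is $|M_{k,\ell}-S_{k,\ell}|\le\sqrt{\ell}\,\rho_\ell(V_{k,\ell},c_k)$, not $\ell\,\rho_\ell$. With your scaling, the Step~2 inequality does not follow from $n\gamma_{j,n}^2\ge 9\rho_n^2$.
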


The proof of Theorem~\ref{thm:general-power-hp}, 
given in Appendix~\ref{sec:general-power-hp-proof},
centers on the first time at which all non-null hypotheses are rejected:
\begin{align}
    \tau_* = \inf\{t\ge 1: \mH_1 \subseteq \mR_t\}.
\end{align}
Since the rejection sets are nested, 
we have $\mH_1 \subseteq \mR_t$ for all $t\ge \tau_*$, and therefore $\TPP(\mR_T) = 1$ for any $T \ge \tau_*$.
The proof then boils down to upper bounding $\tau_*$ with high probability.
In Appendix~\ref{sec:upper_bound_expected_time}, we also establish a
parallel upper bound on $\E[\tau_*]$.

We remark that our algorithm does not require knowledge of 
$\{\gamma_{k,\ell}\}$ or $\{\eta_{k,\ell}\}$, 
and Theorem~\ref{thm:general-power-hp} holds for all valid choices of these sequences. 
Ideally, one chooses $\{\gamma_{k,\ell}\}$ to be as large as possible 
while keeping $\sfL(\{\gamma_{k,\ell}\})$ small, and $\{\eta_{k,\ell}\}$
to be as small as possible while keeping $\sfU(\{\eta_{k,\ell}\})$ small.
In later sections, we instantiate the above general theory in specific testing problems,
derive explicit increment e-values and growth-rate sequences, 
and bound the resulting terms in Theorem~\ref{thm:general-power-hp}.


As a useful special case, one can always set $\eta_{k,\ell} = 0$ for $k\in \mH_0,\ell\ge 1$, 
noting that 
\$ 
\E\big[\log e_{k,t_k(i)} \given \mG^{(k)}_{i-1}\big] \le 
\log \big(\E[e_{k,t_k(i)} \given \mG^{(k)}_{i-1}] \big)\le 0, \quad \forall i \ge 1,
\$
where the first inequality follows from Jensen's inequality and the second inequality follows from the definition of increment e-values in~\eqref{eq:e-increment}.
As a result, the tail probability $\mathsf{U}_k(\{0\}) = 0$. 


\section{Testing with simple alternatives}\label{sec:likelihood-e}
This section instantiates the general framework and theory of e-PS to 
testing problems with simple alternative hypotheses. We 
consider simple and composite nulls in order.

\subsection{Simple-versus-simple testing}\label{sec:simple_versus_simple_theory}
Suppose $\mP_k = \{P^\circ_k\}$ and $\mQ_k = \{Q^\circ_k\}$ are 
both singletons, 
for some given distributions $P^\circ_{k}$ and $Q^\circ_{k}$, 
with $Q^\circ_{k}$ absolutely continuous with respect to $P^\circ_{k}$.

For this testing problem, it is known that the likelihood ratio process 
achieves the optimal growth rate under the alternative hypothesis~\citep{grunwald2024safe,ramdas2025hypothesis}, 
and therefore is a natural candidate for constructing e-processes.
To be specific, we construct the increment e-values $\{e_{k,t}\}$ as 
\begin{equation}\label{eq:simple-e-val}
\begin{aligned}
    e_{k,t} =\begin{cases}
    \frac{\mathrm{d}Q^\circ_{k}}{\mathrm{d}P^\circ_{k}}(Y_t) & \text{if } A_t = k,\\
    1 & \text{if } A_t \neq k,
\end{cases}
\end{aligned}
\end{equation}
where $\frac{\mathrm{d}Q^\circ_{k}}{\mathrm{d}P^\circ_{k}}$ is the Radon-Nikodym derivative of $Q^\circ_{k}$ with respect to $P^\circ_{k}$. 
The e-process for hypothesis $k$ is then given by $E_{k,t} = \prod_{i=1}^t e_{k,i}$.



To obtain explicit sample-complexity upper bounds for the e-PS algorithm with the above e-processes, 
we invoke Theorem~\ref{thm:general-power-hp} by setting  
\$ 
\gamma_{k,\ell} \equiv D_{\KL}(Q_k^\circ||P_k^\circ) \text{ for } k\in \mH_1, \quad
\eta_{k,\ell} \equiv  -D_{\KL}(P_k^\circ||Q_k^\circ) \text{ for } k\in \mH_0, \quad 
\forall \ell \ge 1.
\$ 
We can then check that $S_{k,\ell} / \ell \equiv \gamma_{k,\ell}$ for $k\in \mH_1$ 
and $S_{k,\ell}/\ell \equiv \eta_{k,\ell}$ for $k\in \mH_0$, 
which leads all the $\mathsf{L}_k$ and $\mathsf{U}_k$ terms to be zero.

In the following theorem, we verify that the likelihood ratio processes defined 
with the increments in~\eqref{eq:simple-e-val} are valid e-processes for the null hypotheses, 
and establish a high-probability upper bound on the number of samples needed to reject all the nonnull hypotheses
for the e-PS algorithm implemented with these e-processes,
whose proof is given in Appendix~\ref{app:proof-ss-power}.

\begin{theorem}[Simple-versus-simple testing]
\label{thm:ss-power-hp}
Fix $\alpha,\delta \in (0,1)$ and $K \ge 2$.
For each $k\in[K]$, suppose $\mP_k = \{P_k^\circ\}$ and  $\mQ_k = \{Q_k^\circ\}$, 
where $Q_k^\circ$ is absolutely continuous with respect to $P_k^\circ$ 
and $Q_k^\circ \neq P_k^\circ$.
\begin{enumerate}
\item[(1)] 
For each $k\in[K]$, 
the process $\{E_{k,t}\}_{t\ge 1}$ defined in~\eqref{eq:product},
with likelihood-ratio e-values 
$\{e_{k,t}\}$ given by~\eqref{eq:simple-e-val}, is a valid e-process for $H_k^{(0)}$ on $\{\mF_t\}$.
\item[(2)] Suppose Assumption~\ref{assump:likelihood-e} holds
with $s^2_{k,\ell} \equiv \sigma_k^2$ for some constant $\sigma_k>0$, $\forall k\in[K]$.
The e-PS algorithm implemented $v_{k,t} = \rho_{n_{k,t}}(n_{k,t}\sigma_k^2, \sigma_k^2)^2$ 
satisfies 
that $\P(\TPP(\mR_T) = 1) \ge 1-\delta$, if 
\begin{align*}
    T \gtrsim K\log(K/\delta) +  & \sum_{k \in \mathcal{H}_1}
    \left\{\frac{\log(K/\alpha)}{D_{\KL}(Q_k^\circ\|P_k^\circ)} 
    + \frac{\sigma_k^2}{D_{\KL}(Q_k^\circ\|P_k^\circ)^2}\Bigg(\log\Big(\frac{K}{\delta}\Big) 
    + \log_+\bigg[\frac{\sigma_k}{D_{\KL}(Q_k^\circ\|P_k^\circ)}\bigg]\Bigg)\right\} \\
    & + \sum_{k \in \mathcal{H}_0} \frac{\sigma_k^2 \log K}{(D_\KL(P_k^\circ\|Q_{k}^\circ) + d_{\mathsf{LR}})^2}
    \cdot \Bigg\{  \log\Big(\frac{K}{\delta}\Big) + 
    \log_+\bigg[\frac{\sigma_k}{d_{\mathrm{LR}} + D_\KL(P_k^\circ \| Q_k^\circ)} \bigg]\Bigg\}.
\end{align*}
where
$d_{\mathsf{LR}} = \min_{j \in \mH_1} D_\KL(Q_j^\circ\|P_{j}^\circ)$.
\end{enumerate}
\end{theorem}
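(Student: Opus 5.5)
Part (1) is a direct check of the increment conditions~\eqref{eq:e-increment}. Positivity holds a.s.\ under $P_k^\circ$ on the support of $Q_k^\circ$, or after restricting to the event where the density is positive. Condition (3) holds by construction. For (2), since $A_t$ is $\mF_t^-$-measurable and $Y_t(k)$ is independent of $\mF_t^-$ (fresh i.i.d.\ draw at time $t$), on $\{A_t=k\}$ we have
\[
\E_{P_k^\circ}\Big[\tfrac{\di Q_k^\circ}{\di P_k^\circ}(Y_t(k))\Biggiven \mF_t^-\Big]=Q_k^\circ(\text{supp}\,P_k^\circ)\le 1 .
\]
The supermartingale computation preceding Proposition~\ref{prop:nest} then gives the e-process property.

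For part (2), I would invoke Theorem~\ref{thm:general-power-hp} with the following choices. Set $\gamma_{k,\ell}\equiv d_k:=D_\KL(Q_k^\circ\|P_k^\circ)$ and $\eta_{k,\ell}\equiv -D_\KL(P_k^\circ\|Q_k^\circ)$, so that $\sfL_k=\sfU_k=0$. Take $\overline V_{k,\ell}=\ell\sigma_k^2$, so $\sfV_k=0$ since $V_{k,\ell}=\ell\sigma_k^2$ exactly. Take $c_k=\sigma_k^2$ and $\kappa_{k,\ell}\equiv 0$, which matches $v_{k,t}=\rho_{n}(n\sigma_k^2,\sigma_k^2)^2$. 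Take $\sfw_k=1$, so $\sfR_j=R(1)$ is a constant and $\gamma^*=d_{\mathsf{LR}}$. The failure probability is then just $\delta$.

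It remains to bound each term. First, $\rho_n(n\sigma_k^2,\sigma_k^2)^2=\frac{2(n+1)\sigma_k^2}{n}\log\big(\frac{4K}{\delta}\sqrt{1+n}\big)\asymp \sigma_k^2\big(\log(K/\delta)+\log n\big)$. For $\sfl_k$ we need $n d_k\ge 2\log(K/\alpha)$, so $\sfl_k\lesssim \log(K/\alpha)/d_k$. For $\sfh_k$ we need $n d_k^2\ge C\sigma_k^2(\log(K/\delta)+\log n)$. The standard inversion of $n\ge a\log n+b$ gives $n\lesssim b+a\log_+ a$ with $a=\sigma_k^2/d_k^2$, which yields $\frac{\sigma_k^2}{d_k^2}\big(\log(K/\delta)+\log_+(\sigma_k/d_k)\big)$. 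The same inversion applied to $\mathsf m_k$ with gap $\gamma^*-\eta_{k,n}=d_{\mathsf{LR}}+D_\KL(P_k^\circ\|Q_k^\circ)$ and the extra factor $\log(4K/\log 2)\asymp\log K$ gives the stated $\mH_0$ term. A $\log\log K$ contribution from the inversion is absorbed into the $\log K$, $\log(K/\delta)$ factors. The term $\sfR_j(\sfw_j+\log(K/\delta))\lesssim\log(K/\delta)$ sums over $\mH_1$ to at most $K\log(K/\delta)$, which together with the leading $K$ gives the first term.

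The main obstacle is $\sfH_k=2\sum_{\ell\ge1}\exp\{-\ell d_k^2/(144\rho_\ell^2)\}$, since $\rho_\ell^2$ itself grows like $\log\ell$. My plan is to split the sum at $\ell_0\asymp \frac{\sigma_k^2}{d_k^2}\big(\log(K/\delta)+\log_+(\sigma_k/d_k)\big)$ and bound the terms with $\ell<\ell_0$ by $1$ each, contributing $\ell_0$. For $\ell\ge\ell_0$, I would use $\rho_\ell^2\le C\sigma_k^2\log(K\ell/\delta)$, and for $\ell\ge\ell_0$ this satisfies $\log(K\ell/\delta)\lesssim \ell d_k^2/(\sigma_k^2\cdot\text{const})$ after absorbing the logarithm. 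The exponent then exceeds $c\,\ell d_k^2/\sigma_k^2$ up to the log slack, and a geometric tail bound $\sum_{\ell\ge\ell_0}e^{-c\ell d_k^2/\sigma_k^2}\lesssim 1+\sigma_k^2/d_k^2$ finishes the estimate. If the logarithmic growth in $\rho_\ell$ obstructs a clean geometric bound, I would instead use $\ell/\log(K\ell/\delta)\ge \sqrt{\ell}\cdot\text{const}$ for large $\ell$ and bound the resulting stretched-exponential sum by an integral. Either way $\sfH_k$ is of the same order as $\sfh_k$, and summing all the terms gives the displayed bound.
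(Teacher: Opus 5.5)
Your proposal follows the paper's proof. You use the same instantiation of Theorem~\ref{thm:general-power-hp}: $\gamma_{k,\ell}\equiv d_k$, $\eta_{k,\ell}\equiv -D_{\mathrm{KL}}(P_k^\circ\|Q_k^\circ)$, $c_k=\sigma_k^2$, $\overline{V}_{k,\ell}=\ell\sigma_k^2$, $\kappa_{k,\ell}\equiv 0$ and $\mathsf{w}_k=1$, so $\gamma^*=d_{\mathsf{LR}}$. Your bounds on $\mathsf{l}_k$, $\mathsf{h}_k$, $\mathsf{m}_k$ and the $\mathsf{R}_j$ term are fine. The step that does not go through as written is the one you flag, $\mathsf{H}_k$.

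\emph{First route.} The condition $\ell\ge\ell_0$ only gives $\log(K\ell/\delta)\lesssim \ell d_k^2/\sigma_k^2$. Substituting this into $\ell d_k^2/(144\rho_\ell^2)\gtrsim \ell d_k^2/(\sigma_k^2\log(K\ell/\delta))$ bounds the exponent below by a constant, not by $c\,\ell d_k^2/\sigma_k^2$. Near $\ell_0$ the exponent grows only by about $d_k^2/(\sigma_k^2\log(K\ell_0/\delta))\asymp 1/\ell_0$ per step. So the tail $\sum_{\ell\ge\ell_0}$ is itself of order $\ell_0$, and no geometric bound of the form $1+\sigma_k^2/d_k^2$ holds. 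The final order $\ell_0$ happens to be right, but this argument does not show it.

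\emph{Fallback route.} Using $\ell/\log(K\ell/\delta)\ge c\sqrt{\ell}$ with an absolute constant $c$ loses too much. With $\beta\asymp d_k^2/\sigma_k^2$ one has $\int_{\ell_0}^\infty e^{-\beta\sqrt{x}}\,\mathrm{d}x=2\beta^{-2}(1+\beta\sqrt{\ell_0})e^{-\beta\sqrt{\ell_0}}$. Moreover $\beta\sqrt{\ell_0}\lesssim 1$ whenever $d_k/\sigma_k\lesssim 1/\sqrt{\log(K/\delta)}$. In that regime the resulting bound is of order $\sigma_k^4/d_k^4$, off from the target by a factor of order $\sigma_k^2/d_k^2$ up to logarithms. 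In the Gaussian example this is $v^2/\theta^4$ instead of $v/\theta^2$. So the closing claim that ``either way'' $\mathsf{H}_k$ is of the order of $\mathsf{h}_k$ is not established.

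What is missing is a tail estimate that respects the $\log\ell$ in the denominator.
\begin{itemize}
\item \emph{The paper's route.} It bounds $\rho_\ell^2\lesssim\sigma_k^2\log(8K\ell/\delta)$ and compares the sum with $\int_{c}^{\infty}\exp\{-ax/(b+\log x)\}\,\mathrm{d}x$, where $a\asymp d_k^2/\sigma_k^2$, $b=\log(8K/\delta)$ and $c\asymp 1\vee\sigma_k^2/d_k^2$. It then applies Lemma~\ref{lem:integral}, whose proof dominates the integrand by the derivative of $-2e^{-\phi(x)}/\phi'(x)$ with $\phi(x)=ax/(b+\log x)$. This gives $\mathsf{H}_k\lesssim(b+\log c)/a$, which is the claimed $\frac{\sigma_k^2}{d_k^2}\big(\log(K/\delta)+\log_+(\sigma_k/d_k)\big)$.
\item \emph{Repairing your split directly.} Use $\rho_\ell^2\le 4\sigma_k^2\log(8K\ell/\delta)$, so the exponent is at least $g(\ell):=\ell d_k^2/(576\sigma_k^2\log(8K\ell/\delta))$. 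The map $\ell\mapsto\sqrt{\ell}/\log(8K\ell/\delta)$ is increasing once $8K\ell/\delta\ge e^2$, which always holds here. Choosing $\ell_0$ with $g(\ell_0)\ge 1$ then gives $g(\ell)\ge\sqrt{\ell/\ell_0}$ for all $\ell\ge\ell_0$, hence $\sum_{\ell\ge\ell_0}e^{-\sqrt{\ell/\ell_0}}\le 2\ell_0$.
\end{itemize}
The key point is that the constant in front of $\sqrt{\ell}$ must scale like $\sqrt{\ell_0}/\log(K\ell_0/\delta)$, not be absolute.

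A minor point on part (1): if $P_k^\circ\not\ll Q_k^\circ$, the increment vanishes with positive null probability. So ``restricting to the event where the density is positive'' does not verify condition (1). You do not need it: the nonnegative supermartingale computation, as in the paper, already gives the e-process property.
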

This result characterizes the number of samples required for e-PS to effectively reject
all the nonnulls in simple-versus-simple testing;
this sample complexity depends on the geometric properties of the problem
measured by an aggregated KL-divergence of all the hypotheses.

Two additional remarks are in order.

\begin{remark}[Implications for Gaussian mean detection]
Consider the Gaussian mean detection problem, where 
$P_k^\circ = \mN(0,v_k)$ and $Q_k^\circ = \mN(\theta_k,v_k)$ for 
some $\theta_k \in \mathbb{R}$, and $v_k > 0$.
In this case, the KL divergence is given by 
\$D_{\KL}(Q_k^\circ||P_k^\circ) = D_{\KL}(P_k^\circ||Q_k^\circ)= \frac{\theta_k^2}{2v_k}
\quad \forall k\in[K].
\$ 
The centered log-likelihood ratio is sub-Gaussian with variance proxy $\theta_k^2/v_k$.
Theorem~\ref{thm:ss-power-hp} then simplifies to that 
$\P(\TPP(\mR_T) = 1) \ge 1-\delta$ provided that
\begin{align}
\label{eqn:Gaussian-mean-ub}
    T \gtrsim K\log\Big(\frac{K}{\delta}\Big) + 
\sum_{k\in \mH_1} \frac{v_k}{\theta_k^2} 
\cdot 
\log\Big(\frac{K^2v_k}{\alpha\delta\theta_k^2}\Big)
+ \sum_{k\in \mH_0} \frac{v_k}{\theta_k^2}\log K\cdot 
\log\Big(\frac{Kv_k}{\delta\theta_k^2}\Big).
\end{align}
In the simple scenario, when $\theta_k = \theta$ and $v_k = v$, for every $k \in [K]$, the above inequality reduces to 
$T \gtrsim \frac{K v}{\theta^2} + K$ up to logarithmic factors. 
\end{remark}

\begin{remark}[Sub-Gaussianity of the log-likelihood ratio] 
Theorem~\ref{thm:ss-power-hp} relies on the sub-Gaussianity of the log-likelihood ratio statistics (Assumption~\ref{assump:likelihood-e}).
This assumption is satisfied in many common testing problems. 
For instance, in the Gaussian mean shift problem, where  
$P_k^\circ = \mathcal{N}(0,v)$ and $Q_k^\circ = \mathcal{N}(\theta_k, v)$,
the log-likelihood ratio is given by 
\$
\log \frac{\mathrm{d}Q^\circ_{k}}{\mathrm{d}P^\circ_{k}}(Y_t(k)) = \frac{\theta_kY_t(k)}{v} - \frac{\theta_k^2}{2v}.
\$
Under either $P_k^\circ$ or $Q_k^\circ$, the (centered) log-likelihood ratio 
is sub-Gaussian with variance proxy $\theta_k^2/v$.
More generally, for multivariate distributions 
$P_k^\circ = \mathcal{N}(\bm 0,\Sigma)$ and $Q_k^\circ = \mathcal{N}(\bm \beta_k, \Sigma)$, 
the centered log-likelihood ratio is sub-Gaussian with 
variance proxy $\bm \beta_k^T\Sigma^{-1}\bm \beta_k$. 
Similar conclusions hold for multivariate Student-\(t\) distributions 
with common degrees of freedom, and cases
where the Radon-Nikodym derivative is uniformly bounded---for example---exponential-family models with bounded sufficient statistics, which include Bernoulli, 
Binomial, and categorical distributions with finite support.
\end{remark}

\paragraph{Sample complexity lower bound.}
To complement the upper bound in Theorem~\ref{thm:ss-power-hp}, 
we establish a lower bound on the number of samples needed to 
achieve full discovery with high probability for any algorithm that controls the FDR at level $\alpha$ at all time steps.
\begin{theorem}\label{thm:ss-lower-bound}
Fix a multiple testing problem with singleton null and alternative distribution classes 
\$
\mP_k = \{P_k^\circ\}, \quad  \mQ_k = \{Q_k^\circ\}, \quad \forall k\in[K],
\$
where $Q_k^\circ$ is absolutely continuous with respect to $P_k^\circ$, the target FDR level $\alpha \in (0,1)$,
and the tolerance level $\delta \in (0,1)$.
Then for any algorithm that achieves anytime-valid FDR control at level $\alpha$, and 
$\P(\TPP(\mR_T) = 1) \ge 1-\delta$ for any problem instance with $\mH_1 \neq \varnothing$,  
we must have
\begin{align}
    T \ge 
    \sum_{k \in \mH_1} \frac{\KL_+(1-\delta - (|\mH_1|+1) \cdot \alpha \, ,\, |\mH_1| \cdot \alpha)}{D_{\KL}(Q_k^\circ\|P_k^\circ)}
    + \sum_{k \in \mH_0} \frac{\KL_+(1-\delta - (|\mH_1|+1) \cdot \alpha \, ,\, \delta)}{D_\KL(P_k^\circ\|Q_{k}^\circ)},
\end{align}
where 
\$
\KL_+(p,q) =
\begin{cases}
    p\log(p/q) + (1-p)\log((1-p)/(1-q)) & \text{if } p > q,\\
    0 & \text{if } p \le q.
\end{cases}
\$ 
\end{theorem}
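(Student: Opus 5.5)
We prove the bound by the standard change-of-measure argument for bandit lower bounds (the transportation lemma of Kaufmann et al.), applied arm by arm. The ingredient specific to multiple testing is a pair of events whose probabilities differ sharply between the original instance and a perturbed one. The perturbed instance differs from the original in a single arm $k$, and the two events come from anytime-valid FDR control on one side and full power on the other. Throughout, $T$ is deterministic, so $\E[n_{k,T}]\le T$ and $\sum_k \E[n_{k,T}] = T$.

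\textbf{Step 1 (transportation inequality).} Let $\nu$ be the original instance and $\nu'$ the instance in which only the status of arm $k$ is flipped. If $k\in\mH_1$, arm $k$ now has marginal $P_k^\circ$; if $k\in\mH_0$, it now has $Q_k^\circ$. The data available at time $T$ (actions, observed outcomes, internal randomization) has a law whose log-likelihood ratio between $\nu$ and $\nu'$ involves only the pulls of arm $k$. By the chain rule for KL divergence and the data-processing inequality, for any event $\mathcal{E}\in\mF_T$,
\[
\E_\nu[n_{k,T}]\, D \;\ge\; \mathrm{kl}\big(\P_\nu(\mathcal{E}),\P_{\nu'}(\mathcal{E})\big)\;\ge\; \KL_+\big(\P_\nu(\mathcal{E}),\P_{\nu'}(\mathcal{E})\big).
\]
Here $D=D_\KL(Q_k^\circ\|P_k^\circ)$ if $k\in\mH_1$ and $D=D_\KL(P_k^\circ\|Q_k^\circ)$ if $k\in\mH_0$. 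The last inequality uses $\KL_+\le \mathrm{kl}$ together with the fact that $\KL_+(p,q)$ is nondecreasing in $p$ and nonincreasing in $q$. There is a subtlety in allowing arbitrary dependence across arms: I would fix a coupling in which $Y_t(k)$ is independent of the other coordinates, so that flipping arm $k$ changes only its marginal. Since the guarantees are assumed for every instance, this choice is legitimate.

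\textbf{Step 2 (events for $k\in\mH_1$).} Take $\mathcal{E}=\{k\in\mR_T\}$. Under $\nu$, full power gives $\P_\nu(\mathcal{E})\ge 1-\delta$. Under $\nu'$, arm $k$ is null, and the constant $T$ is a stopping time, so FDR control gives
\[
\P_{\nu'}(k\in\mR_T)\le \E_{\nu'}\big[|\mR_T\cap\mH_0'|\big] \le \E_{\nu'}\big[\FDP(\mR_T)\,(|\mR_T|\vee1)\big].
\]
To bound the right side, I would use that $\nu'$ still has nonnulls $\mH_1\setminus\{k\}$ whenever $|\mH_1|\ge2$, so full power applies on $\nu'$ as well. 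This suggests comparing with the event $\{|\mR_T|\le |\mH_1|\}$, on which FDP $\ge 1/|\mH_1|$ forces a bound of the form $\P_{\nu'}(k\in\mR_T)\le |\mH_1|\alpha$. If this step leaves slack terms, I would instead work with $\mathcal{E}=\{k\in\mR_T,\ \mH_1\subseteq\mR_T,\ \text{few false rejections}\}$ under $\nu$. Markov's inequality on FDP under $\nu$ shows that this event has probability at least $1-\delta-(|\mH_1|+1)\alpha$, which is exactly the first argument of $\KL_+$ in the statement. Monotonicity of $\KL_+$ then yields the $\mH_1$ term.

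\textbf{Step 3 (events for $k\in\mH_0$).} Take $\mathcal{E}=\{k\notin\mR_T\}\cap\{\text{good event}\}$. Under $\nu$, FDR control bounds $\P_\nu(k\in\mR_T)$ and the false rejections, which gives $\P_\nu(\mathcal{E})\ge 1-\delta-(|\mH_1|+1)\alpha$. Under $\nu'$, arm $k$ is nonnull, and full power gives $\P_{\nu'}(k\notin\mR_T)\le\delta$. Step 1 then yields $\E_\nu[n_{k,T}]\,D_\KL(P_k^\circ\|Q_k^\circ)\ge \KL_+(1-\delta-(|\mH_1|+1)\alpha,\delta)$.

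\textbf{Step 4.} Divide by the divergences and sum over $k$ using $\sum_k \E_\nu[n_{k,T}]\le T$; this gives the stated bound.

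The main obstacle is Step 2. Converting expected-FDP control under the perturbed instance into a bound of order $|\mH_1|\alpha$ on the probability that the single flipped arm is rejected requires carefully combining FDR control with the full-power guarantee on $\nu'$. The same care is needed to recover the exact constants $(|\mH_1|+1)\alpha$ and $|\mH_1|\alpha$ through monotonicity of $\KL_+$.
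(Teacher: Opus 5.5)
Your overall architecture matches the paper's. You flip the status of a single arm, apply $D_{\mathrm{KL}}(\mathbb{P}_\nu^T\|\mathbb{P}_{\nu'}^T)=\mathbb{E}_\nu[n_{k,T}]\,D$ with data processing on one event, pin down that event's two probabilities via FDR control and the power guarantee, and sum using $\sum_k\mathbb{E}_\nu[n_{k,T}]=T$. Your coupling caveat is unnecessary: only $Y_t(A_t)$ is observed and $U_t$ is independent, so the law of the observed data depends only on the marginals. The gap is the step you flag yourself. Your sketched route to $\mathbb{P}_{\nu'}(k\in\mathcal{R}_T)\le|\mathcal{H}_1|\alpha$ does not go through as written:
\begin{itemize}
\item $\mathbb{E}_{\nu'}[|\mathcal{R}_T\cap\mathcal{H}_0'|]=\mathbb{E}_{\nu'}[\operatorname{FDP}\cdot(|\mathcal{R}_T|\vee1)]$ only gives $K\alpha$.
\item Full power on $\nu'$ is unavailable when $|\mathcal{H}_1|=1$, and is not needed anyway.
\item Restricting to $\{|\mathcal{R}_T|\le|\mathcal{H}_1|\}$ leaves the complementary event uncontrolled.
\end{itemize}
The fallback event with ``few false rejections'' is unspecified, and you never say how its $\nu'$-probability is bounded.

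The missing idea is a one-line FDP bound. Let $V$ and $S$ be the numbers of false and true rejections. On $\{V\ge1\}$ we have $\operatorname{FDP}=V/(V+S)\ge 1/(1+S)$, and $S$ never exceeds the number of nonnulls. Under $\nu'$ there are only $|\mathcal{H}_1|-1$ nonnulls, so Markov's inequality gives $\mathbb{P}_{\nu'}(k\in\mathcal{R}_T)\le\mathbb{P}_{\nu'}(V\ge1)\le|\mathcal{H}_1|\alpha$, with no appeal to power on $\nu'$. Under $\nu$ the same argument gives $\mathbb{P}_\nu(k\in\mathcal{R}_T)\le(|\mathcal{H}_1|+1)\alpha$ for $k\in\mathcal{H}_0$. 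So Step 3 works with $\mathcal{E}=\{k\notin\mathcal{R}_T\}$ alone, without any ``good event''.

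With this fix, your direct events yield $\mathrm{KL}_+(1-\delta,|\mathcal{H}_1|\alpha)$ and $\mathrm{KL}_+(1-(|\mathcal{H}_1|+1)\alpha,\delta)$. By the monotonicity you cite, these dominate the stated terms. The paper instead uses the single event $\{\mathcal{R}_T=\mathcal{H}_1\}$ for every arm:
\begin{itemize}
\item Its $\nu$-probability is at least $1-\delta-(|\mathcal{H}_1|+1)\alpha$, by power plus the Markov step on $\{\mathcal{H}_1\subseteq\mathcal{R}_T\}$.
\item After flipping $j\in\mathcal{H}_1$, the FDP on this event is exactly $1/|\mathcal{H}_1|$.
\item After flipping $k\in\mathcal{H}_0$, the event forces $k\notin\mathcal{R}_T$, so its probability is at most $\delta$.
\end{itemize}
This single-event choice is why both slack terms appear in the first argument of $\mathrm{KL}_+$ in the statement.
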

The proof of Theorem~\ref{thm:ss-lower-bound} is given in Appendix~\ref{app:proof-ss-lower-bound}.
We note that in the regime of $\alpha,\delta \rightarrow 0$, 
the lower bound in Theorem~\ref{thm:ss-lower-bound} scales as
\$ 
T \gtrsim \sum_{k \in \mH_1} \frac{\log(1/\alpha)}{D_{\KL}(Q_k^\circ\|P_k^\circ)} + 
\sum_{k \in \mH_0} \frac{\log(1/\delta)}{D_\KL(P_k^\circ\|Q_{k}^\circ)}.
\$
For the Gaussian mean testing problem, this lower bound matches the upper bound in \eqref{eqn:Gaussian-mean-ub} up to logarithmic factors, which implies the optimality of the e-PS algorithm.

\subsection{Composite-versus-simple testing}
We now consider composite nulls while retaining simple alternatives. 
As in the preceding case, a natural building block is the log-optimal e-value, 
defined as the e-values maximizing $\E_{Q_k^\circ}[\log(E_{k})]$ over all valid e-values 
under $\mP_k$. 
Under suitable regularity conditions, 
the log-optimal e-values take the likelihood form 
$E_k = \di Q_k^\circ /\di P^*_k$, where $P_k^*$ is the 
{\em reverse information projection (RIPr)}
of $Q_k^\circ$ onto $\text{conv}(\mP)$.\footnote{Or more generally, 
it is the reverse information projection onto 
the effective null hypothesis of $\mP$ (see exact definition in~\citet{larsson2025numeraire})} 
For concreteness, we present two
examples adapted 
from~\cite{ramdas2025hypothesis}.

\paragraph{One-parameter exponential family.}
Suppose $Y(k) \in \mathbb{R}$ has density 
$p_{\theta}(y) = \exp(\theta \cdot T(y) - A(\theta))$ with respect 
to a common reference measure, where $T(y)$ is a scalar sufficient 
statistic and $A$ is a convex and differentiable function. 
Consider the null hypothesis $H^{(0)}: \theta \in \Theta_0$, where 
$\Theta_0$ is closed with largest element $\theta_0$, 
and a singleton alternative $H^{(1)}: \theta = \theta_1 > \theta_0$. 
Then $P_k^* = p_{\theta_0}$, and 
the log-optimal e-value is 
$E_k = p_{\theta_1}(Y(k))/p_{\theta_0}(Y(k))$.

\paragraph{Monotone likelihood ratio family.}
Suppose $Y_t(k) \in \RR$ has density $p_\theta(y)$ with respect 
to a common reference measure, where $\theta \in \Theta \subseteq \RR$.
Assume that, for any $\theta < \theta'$, 
the likelihood ratio 
$p_{\theta'}(y)/p_{\theta}(y)$ exists and is monotone 
in $y$.
Consider the null $\{\theta \in \Theta: \theta \le \theta_0\}$
and a singleton alternative $\theta_1 >\theta_0$. 
The RIPr is $P_{\theta_0}$ and the log-optimal e-value 
is $E_k = p_{\theta_1}(y)/p_{\theta_0}(y)$.

Further details and examples for constructing log-optimal e-values 
can be found in~\citet{grunwald2024safe,lardy2024reverse,larsson2025numeraire,ramdas2025hypothesis}.
In general, if we can find a distribution $P_k^* \neq Q_k^\circ$---ideally the RIPr---such
that 
\@\label{eq:least-favorable} 
\sup_{P_k \in \mP_k} \E_{P_k}\Big[\frac{\di Q_k^\circ}{\di P_k^*}\Big]\le 1,  
\@ 
then we can construct the e-process increment as
\@\label{eq:composite-e-val} 
e_{k,t} =\begin{cases}
\frac{\di Q_k^\circ}{\di P_k^*}(Y_t) & \text{if } A_t = k,\\
1 & \text{if } A_t \neq k.
\end{cases}
\@
The following theorem shows that the above construction yields 
a valid e-process and provides a characterization of 
the sample complexity of the e-PS procedure.


\begin{theorem}\label{thm:ns-power-hp} 
Fix $\alpha,\delta \in (0,1)$ and $K\ge 2$.
For each $k\in[K]$, 
consider a null distribution set $\mP_k$ and 
a single alternative $\mQ_k = \{Q_k^\circ\}$, 
where $Q_k^\circ \notin \mP_k$ is absolutely continuous with respect to $P \in \mP_k$.
The following holds:
\begin{enumerate}
\item[(1)] For each $k\in[K]$, 
the process $\{E_{k,t}\}_{t\ge 1}$ defined in~\eqref{eq:product},
with increment e-values 
$\{e_{k,t}\}$ given in~\eqref{eq:composite-e-val}, 
is a valid e-process for $H_k^{(0)}$ on $\{\mF_t\}$.
\item[(2)] Suppose Assumption~\ref{assump:likelihood-e} holds
with $s_{k,\ell}^2 \equiv \sigma_k^2$ for some constant $\sigma_k > 0$
and condition~\eqref{eq:least-favorable} holds
$\forall k \in [K]$.
The e-PS algorithm implemented 
with $v_{k,t} = \rho_{n_{k,t}}(n_{k,t}\sigma_k^2,\sigma_k^2)^2$ satisfies
that $\P(\TPP(\mR_T) = 1) \ge 1-\delta$, if 
\begin{align*}
    T \gtrsim &~ K\log\Big(\frac{K}{\delta}\Big) +  \sum_{k \in \mathcal{H}_1}
    \left\{\frac{\log(K/\alpha)}{D_{\KL}(Q_k^\circ\|P_k^*)} 
    + \frac{\sigma_k^2}{D_{\KL}(Q_k^\circ\|P_k^*)^2}\Bigg(\log\Big(\frac{K}{\delta}\Big) 
    + \log_+\bigg[\frac{\sigma_k}{D_{\KL}(Q_k^\circ\|P_k^*)}\bigg]\Bigg)\right\} \\
    & + \sum_{k \in \mathcal{H}_0} \frac{\sigma_k^2 \log(K)}{(d_{\mathsf{LR}} +D_\KL(P_k\|Q_k^\circ) -D_\KL(P_k\|P_k^*))^2}
    \cdot \Bigg\{  \log\Big(\frac{K}{\delta}\Big) + \log_+\bigg[\frac{\sigma_k}{d_{\mathrm{LR}} + 
    D_\KL(P_k \| Q_k^\circ) - D_\KL(P_k\|P_k^*) } \bigg]\Bigg\}
\end{align*}
where
$d_{\mathsf{LR}} = \min_{j \in \mH_1} D_\KL(Q_j^\circ\|P_{j}^*)$.
\end{enumerate}
\end{theorem}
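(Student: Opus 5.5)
Part (1) follows the same route as the product-form verification in Section~\ref{sec:framework}. Each increment $e_{k,t}$ in~\eqref{eq:composite-e-val} is strictly positive (on the support where we work with $\di Q_k^\circ/\di P_k^*$) and equals $1$ when $A_t\neq k$. When $A_t = k$, the outcome $Y_t = Y_t(k)$ is drawn from the true marginal $P_k$, independently of $\mF^-_t$, because potential outcomes are i.i.d.\ across time and $A_t$ is $\sigma(\mF_{t-1},U_t)$-measurable. Under $H_k^{(0)}$ we have $P_k\in\mP_k$, so condition~\eqref{eq:least-favorable} gives $\E[e_{k,t}\given \mF_t^-]\le 1$. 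Hence~\eqref{eq:e-increment} holds, and the supermartingale computation in Section~\ref{sec:framework} shows that $\{E_{k,t}\}$ is an e-process.

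For part (2) I would invoke Theorem~\ref{thm:general-power-hp}, with deterministic sequences chosen so that every tail term vanishes. The increments on pulls of arm $k$ are i.i.d.\ given $\mG^{(k)}_{\ell-1}$, so $S_{k,\ell}/\ell$ is constant in $\ell$. For $k\in\mH_1$, where the truth is $Q_k^\circ$, the constant is
\[
\E_{Q_k^\circ}[\log(\di Q_k^\circ/\di P_k^*)] = D_{\KL}(Q_k^\circ\|P_k^*).
\]
For $k\in\mH_0$, where the truth is some $P_k\in\mP_k$, the constant is
\[
\E_{P_k}[\log(\di Q_k^\circ/\di P_k^*)] = D_{\KL}(P_k\|P_k^*) - D_{\KL}(P_k\|Q_k^\circ).
\]
This value is $\le 0$ by Jensen's inequality together with~\eqref{eq:least-favorable}. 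I therefore set
\[
\gamma_{k,\ell}\equiv D_{\KL}(Q_k^\circ\|P_k^*), \qquad \eta_{k,\ell}\equiv D_{\KL}(P_k\|P_k^*)-D_{\KL}(P_k\|Q_k^\circ).
\]
If $\eta_{k,\ell}=0$, I replace it by an arbitrarily small negative constant; the bound is already vacuous in that case. This choice makes $\mathsf L_k=\mathsf U_k=0$. I take $\overline V_{k,\ell}=\ell\sigma_k^2$, so $\mathsf V_k=0$ since $V_{k,\ell}=\ell\sigma_k^2$ exactly. I set $c_k=\sigma_k^2$ and $\kappa_{k,\ell}\equiv 0$, which matches the prescribed $v_{k,t}$. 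With constant sequences the burn-ins are $\sfw_k=1$, so $\mathsf R_j=R(1)$ is a constant. We also get $\gamma^*=d_{\mathsf{LR}}$, and
\[
\gamma^*-\eta_{k,n}=d_{\mathsf{LR}}+D_{\KL}(P_k\|Q_k^\circ)-D_{\KL}(P_k\|P_k^*),
\]
which is exactly the gap appearing in the null term.

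It remains to evaluate the scalars $\sfh_j,\sfl_j,\sfH_j,\mathsf m_k$ with
\[
\rho_n(n\sigma^2,\sigma^2)^2 = \frac{2\sigma^2(n+1)}{n}\log\Big(\frac{4K}{\delta}\sqrt{1+n}\Big)\asymp \sigma^2\log\Big(\frac{K\sqrt n}{\delta}\Big).
\]
The term $\sfl_j$ is $\lceil 2\log(K/\alpha)/\gamma_j\rceil$. The term $\sfh_j$ is the least $n$ beyond which $n\gamma_j^2\gtrsim\sigma_j^2\log(K\sqrt n/\delta)$. This is an inequality of the form $n\ge a\log(bn)$, solved by the standard lemma $n\asymp a\log(ab)$, which gives
\[
\frac{\sigma_j^2}{\gamma_j^2}\Big(\log\frac{K}{\delta}+\log_+\frac{\sigma_j}{\gamma_j}\Big).
\]
The term $\mathsf m_k$ is handled identically with the gap $\gamma^*-\eta_k$ and an extra $\log(4K/\log 2)\asymp\log K$ factor, producing the $\mH_0$ sum. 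The term $\sfH_j$ is a sum of $\exp\{-n\gamma_j^2/(144\rho_n^2)\}$. Since $\rho_n^2$ grows only logarithmically, I would split the sum at $\sfh_j$ (up to constants): the first part is at most $\sfh_j$, and the tail beyond it is geometric with ratio $\exp(-\gamma_j^2/(C\sigma_j^2\log(\cdot)))$. The tail is therefore $O(\sfh_j)$, absorbed by the same order.

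Finally, the term $\sum_j\mathsf R_j\log(K/\delta)$ contributes $K\log(K/\delta)$, together with $K$. Summing everything gives the displayed bound, and the probability is $1-\delta$ since all tail terms are zero. The main obstacle I expect is the bookkeeping for $\sfH_j$: the exponent's denominator grows with $n$, so I must verify that the sum does not exceed $\sfh_j$ by more than a constant factor. A secondary check is the treatment of a null with zero gap, which is handled by the limiting argument above.
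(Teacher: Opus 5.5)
Your proof follows the paper's route. Part (1) is the same conditional-expectation check using \eqref{eq:least-favorable}. Part (2) applies Theorem~\ref{thm:general-power-hp} exactly as in the proof of Theorem~\ref{thm:ss-power-hp}(2), with $P_k^\circ$ replaced by $P_k^*$. It uses the identity $\mathbb{E}_{P_k}[\log(\mathrm{d}Q_k^\circ/\mathrm{d}P_k^*)]=D_{\mathrm{KL}}(P_k\|P_k^*)-D_{\mathrm{KL}}(P_k\|Q_k^\circ)\le 0$ and the same choices $c_k=\sigma_k^2$, $\overline{V}_{k,\ell}=\ell\sigma_k^2$, $\kappa_{k,\ell}\equiv 0$, $\mathsf{w}_k=1$.

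Two points need fixing.

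First, your special handling of $\eta_k=0$ is wrong, and it is also unnecessary.
\begin{itemize}
\item Replacing $\eta_k$ by $-\epsilon<0$ gives $\mathsf{U}_k(\{-\epsilon\})=1$, because $S_{k,\ell}/\ell\equiv 0>-\epsilon$. That wipes out the probability guarantee.
\item The bound is not vacuous in that case: the gap $\gamma^*-\eta_k$ then equals $d_{\mathsf{LR}}>0$.
\item Just keep $\eta_{k,\ell}\equiv 0$. The function $\mathsf{neg}$ is defined with $y_\ell\le 0$, and the proof of Theorem~\ref{thm:general-power-hp} only uses $\eta_{k,\ell}\le 0$ (e.g.\ to get $\mathsf{b}-\mathsf{a}_{k,\ell}\ge\Delta_{k,\ell}/3$). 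The paper itself notes that $\mathsf{U}_k(\{0\})=0$.
\end{itemize}

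Second, your geometric-series bound for $\mathsf{H}_j$ does not work as stated. Write $\gamma_j=D_{\mathrm{KL}}(Q_j^\circ\|P_j^*)$. The ratio of consecutive terms is $\exp\{-\Theta(\gamma_j^2/(\sigma_j^2\log(K\ell/\delta)))\}$, which tends to $1$, so no fixed-ratio geometric bound is available. Use an integral comparison instead; the paper applies Lemma~\ref{lem:integral} with $a\asymp\gamma_j^2/\sigma_j^2$ and $b\asymp\log(K/\delta)$. Dyadic blocking also works. Either one does give $\mathsf{H}_j\lesssim(\sigma_j^2/\gamma_j^2)\big(\log(K/\delta)+\log_+(\sigma_j/\gamma_j)\big)$, as you anticipated.
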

The proof of Theorem~\ref{thm:ns-power-hp} can be found in Appendix~\ref{sec:proof-ns-power-hp}.
Compared to Theorem~\ref{thm:ss-power-hp}, this result relies on how close the alternative hypothesis $Q_k^\circ$ lies to the family of null hypotheses. 




\section{Testing with composite alternatives}\label{sec:sample_complextiy_theory}
We now turn to composite alternatives $\mQ_k$, while for simplicity we 
restrict the null to be simple: 
$\mP_k = \{P_k^\circ\}$ and assume $Q_k \ll P_{k}^\circ$, 
for $\forall Q_k \in \mQ_k$, $\forall k\in [K]$.
Because the alternative distribution is no longer known, 
the log-optimal e-value increments are not directly available.
We instead update the e-value increments $e_{k,t}$  adaptively 
using the observed data.

Concretely, for each $t \ge 1$, we define the plug-in e-value increment  
\begin{align} \label{eq:plugin_eval}
e_{k,t} = \begin{cases}
\frac{\di \hat Q_{k,n_{k,t-1}}}{\di P_k^\circ}(Y_t) & \text{ if }A_t = k\\
1 & \text{ if }A_t \neq k,
\end{cases}
\end{align}
where $\frac{\di \hat Q_{k,n_{k,t-1}}}{\di P_k^\circ}$ is 
an $\mF_{t-1}$-measurable estimate of $\frac{\di P_k}{\di P_k^\circ}$
with $\hat Q_{k,n_{k,t-1}} \ll P_k^\circ$, 
and $\hat Q_{k,0} = P_k^\circ$.
Intuitively, if the estimated likelihood ratio approaches the 
ground truth quickly, then the sample complexity will be close to the likelihood ratio e-process case. 

The following theorem establishes the validity of the e-process, 
as well as the number of samples required to reject all the non-nulls with high probability.
To ease the notation, we define the first time a non-increasing sequence 
$\{x_{\ell}\}_{\ell \ge 1}$ is below a constant $\gamma$:
\$ 
\mathsf{HT}(\{x_{\ell}\}, \gamma) :=
\inf\{\ell \ge 1: x_{\ell} \le \gamma \}.
\$

\begin{theorem}\label{thm:comp-power-hp} 
Fix $\alpha,\delta \in (0,1)$.
For each $k\in[K]$, consider testing the simple null 
$\mP_k = \{P_k^\circ\}$ against a composite alternative $\mQ_k$, where 
$Q_k \ll P_k^\circ$, $\forall k\in[K]$. 
\begin{enumerate}
\item[(1)] For each $k\in[K]$, 
the process $\{E_{k,t}\}_{t\ge 1}$ defined in~\eqref{eq:product},
with increment e-values 
$\{e_{k,t}\}$ given in~\eqref{eq:plugin_eval}, 
is a valid e-process for $H_k^{(0)}$ on $\{\mF_t\}$.
\item[(2)] For any $k\in[K]$, 
suppose Assumption~\ref{assump:likelihood-e} holds 
and that there exists 
a positive and non-decreasing deterministic sequence $\{\overline{V}_{k,\ell}\}_{\ell \ge 1}$, such that 
\$
\P(V_{k,\ell} \le \overline{V}_{k,\ell},~\forall \ell \ge 1) \ge 1-\frac{\delta}{2K},
\$
and that $\rho(\overline{V}_{k,\ell},c_k)/\sqrt{\ell}$ 
is non-increasing in $\ell$.
In addition, we assume that there exists a non-negative 
and non-increasing known sequence $\{\mathsf{r}_{k,\ell}\}_{\ell \ge 1}$ such that 
\$ 
\P\bigg(\frac{1}{\ell}\sum^\ell_{i=2} 
D_{\KL}(P_k \| \hat Q_{k,i-1}) \le \mathsf{r}_{k,\ell}, 
\forall \ell \ge 2\bigg)
\ge 1-\frac{\delta}{2K}.
\$
The e-PS algorithm implemented 
with $v_{k,t} = \rho_{n_{k,t}}(V_{k,n_{k,t}},c_k)^2 \vee 
n_{k,t} \cdot \mathsf{r}_{k,n_{k,t}}^2$ 
satisfies
that $\P(\TPP(\mR_T) = 1) \ge 1-2\delta$, if 
\begin{align*}
T &\gtrsim K\log(K/\delta) + 
\sum_{k\in \mH_0} \mathsf{HT} \bigg(
\big\{\mathsf{r}_{k,\ell}\vee \rho(\overline{V}_{k,\ell},c_k)/\sqrt{\ell}\big\},
\frac{d_\KL}{30\sqrt{\log(4K)}}\bigg)+\\
& \quad \quad \quad \quad  
+\sum_{k \in \mH_1}\Bigg\{ 
\mathsf{HT}\bigg(
\big\{\mathsf{r}_{k,\ell} \vee \rho(\overline{V}_{k,\ell},c_k)/\sqrt{\ell}\big\},
\frac{D_\KL(P_k \| P_k^\circ)}{9}\bigg) 
 + \frac{\log(K/\alpha)}{D_{\KL}(P_k \| P_k^\circ)} + \sfH_k\Bigg\},
\end{align*}
where
\$ 
d_{\KL} = \inf_{j\in \mH_1} D_{\KL}(P_j \| P_j^\circ),\quad 
\sfH_k = \sum^\infty_{\ell = \tilde \sfw_k} 
\exp\bigg\{-\frac{D_\KL(P_k \| P_k^\circ)^2}{1296\cdot (\rho(\overline{V}_{k,\ell},c_k)^2/\ell
\vee \mathsf{r}_{k,\ell}^2)}\bigg\},
\$
with $\tilde w_k = \mathsf{HT}(\{\mathsf{r}_{k,\ell}\}, D_\KL(P_k \| P_k^\circ)/6)$.
\end{enumerate}
\end{theorem}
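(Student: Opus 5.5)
Part (1) is immediate from the general construction in Section~\ref{sec:framework}: since $\hat Q_{k,n_{k,t-1}}$ is $\mF_{t-1}$-measurable and absolutely continuous with respect to $P_k^\circ$, we have, on $\{A_t=k\}$,
\begin{align*}
\E_{P_k^\circ}\big[e_{k,t}\given \mF_t^-\big]=\int \frac{\di \hat Q_{k,n_{k,t-1}}}{\di P_k^\circ}\,\di P_k^\circ\le 1 .
\end{align*}
This uses that $Y_t(k)$ is independent of $\mF_t^-$ because the potential outcomes are i.i.d.\ across time. Positivity can be ensured, or handled by the usual convention, and $e_{k,t}=1$ when $A_t\neq k$. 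Hence \eqref{eq:e-increment} holds, and the supermartingale computation after \eqref{eq:product} gives validity.

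For part (2), I plan to invoke Theorem~\ref{thm:general-power-hp} with $\kappa_{k,\ell}^2=\ell\,\mathsf{r}_{k,\ell}^2$, the given $\overline V_{k,\ell}$, and suitable growth sequences. The key identity is the conditional mean of the log increment. Under the true marginal $P_k$,
\begin{align*}
\E\big[Z_{k,i}\given \mG^{(k)}_{i-1}\big]=\E_{P_k}\Big[\log\frac{\di \hat Q_{k,i-1}}{\di P_k^\circ}\Big]=D_\KL(P_k\|P_k^\circ)-D_\KL(P_k\|\hat Q_{k,i-1}).
\end{align*}
From this,
\begin{align*}
S_{k,\ell}/\ell = D_\KL(P_k\|P_k^\circ) - \frac1\ell\sum_{i\le \ell} D_\KL(P_k\|\hat Q_{k,i-1}).
\end{align*}
The $i=1$ term vanishes because $\hat Q_{k,0}=P_k^\circ$ and, for nulls, $P_k=P_k^\circ$; for nonnulls it contributes $D_\KL(P_k\|P_k^\circ)/\ell$, which I absorb by taking $\ell$ large. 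On the event of the $\mathsf r$-assumption, for $k\in\mH_1$ I take the non-decreasing sequence
\begin{align*}
\gamma_{k,\ell}=D_\KL(P_k\|P_k^\circ)(1-1/\ell)-\mathsf r_{k,\ell},
\end{align*}
or more simply $\gamma_{k,\ell}\ge D_\KL(P_k\|P_k^\circ)/2$ once $\mathsf r_{k,\ell}\le D_\KL/6$ and $\ell\ge 3$. This gives $\mathsf L_k\le\delta/(2K)$ and makes $\mathsf{pos}$ and $\tilde\sfw_k$ finite. For $k\in\mH_0$ I take $\eta_{k,\ell}=0$, so $\mathsf U_k=0$ by Jensen as remarked after Theorem~\ref{thm:general-power-hp}. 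The variance assumption gives $\mathsf V_k\le\delta/(2K)$. Summing, the failure probability is at most $\delta+\delta/2+\delta/2=2\delta$.

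It remains to translate the quantities of Theorem~\ref{thm:general-power-hp}, with burn-in $\sfw_k=\tilde\sfw_k$, into the stated terms.
\begin{itemize}
\item Since $\gamma_{k,\ell}\asymp D_\KL(P_k\|P_k^\circ)$ beyond burn-in, the condition $n\gamma^2\ge 9(\rho^2\vee n\mathsf r^2)$ defining $\sfh_k$ becomes $\mathsf r_{k,n}\vee\rho(\overline V_{k,n},c_k)/\sqrt n\lesssim D_\KL/9$ after adjusting constants. Because this quantity is non-increasing by assumption, "for all $n\ge\ell$" reduces to a hitting time $\mathsf{HT}$.
\item $\sfl_k\lesssim \log(K/\alpha)/D_\KL$.
\item $\sfH_k$ matches the displayed series, with $144$ becoming $1296$ after the factor-$1/3$ lower bound on $\gamma$.
\item For nulls, $\gamma^*\gtrsim d_\KL$ and $\eta=0$, so $\mathsf m_k$ is the stated hitting time with threshold $d_\KL/(30\sqrt{\log 4K})$.
\end{itemize}

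The main obstacle is the term $\mathsf R_j(\sfw_j+\log(K/\delta))$. Before burn-in, $\gamma_{j,\ell}$ may be negative, and $R(\sqrt\ell(\gamma^*-\gamma_{j,\ell})/\kappa_{j,\ell}+1)$ must be bounded by a constant. I would show $\sqrt\ell\,|\gamma^*-\gamma_{j,\ell}|/\kappa_{j,\ell}=O(1)$ by choosing $\gamma_{j,\ell}$ before burn-in to equal $\gamma^*-\kappa_{j,\ell}/\sqrt\ell$ floored appropriately. This is legitimate since $\mathsf L$ only requires a lower boundary, and $S/\ell\ge D_\KL-\mathsf r$ already. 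It needs $\kappa_{j,\ell}/\sqrt\ell=\mathsf r_{j,\ell}$ to dominate the deficit, which holds when $\mathsf r_{j,\ell}\ge D_\KL/6$ during burn-in. Monotonicity of the modified $\gamma$ must be checked, and this is where the careful work lies. Then $\mathsf R_j=O(1)$, and $\tilde\sfw_k$ is dominated by the first $\mathsf{HT}$ term, giving the bound.
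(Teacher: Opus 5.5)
Your proposal is correct and matches the paper's proof in all essentials. Both verify part (1) through predictability of $\hat Q_{k,n_{k,t-1}}$, use the identity $\E[Z_{k,i}\mid \mathcal{G}^{(k)}_{i-1}]=D_\KL(P_k\|P_k^\circ)-D_\KL(P_k\|\hat Q_{k,i-1})$, apply Theorem~\ref{thm:general-power-hp} with $\kappa_{k,\ell}=\sqrt{\ell}\,\mathsf{r}_{k,\ell}$ and $\eta_{k,\ell}\equiv 0$, and finish with a union bound giving $1-2\delta$. The only difference is that the paper takes $\gamma_{k,\ell}=D_\KL(P_k\|P_k^\circ)/2-\mathsf{r}_{k,\ell}$ for $\ell\ge 2$, with burn-in $\mathsf{HT}(\{\mathsf{r}_{k,\ell}\},D_\KL(P_k\|P_k^\circ)/9)\vee 2$; this forces $\gamma^*\le D_\KL(P_k\|P_k^\circ)/2$, hence $\gamma^*-\gamma_{k,\ell}\le \mathsf{r}_{k,\ell}=\kappa_{k,\ell}/\sqrt{\ell}$ for $\ell\ge 2$, so the pre-burn-in odds are bounded by $R(2)$ without your modification of $\gamma$ and its monotonicity check (note also that your burn-in needs the same ``$\vee 2$'', since your $\gamma_{k,1}=-\mathsf{r}_{k,1}\le 0$).
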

The proof of Theorem~\ref{thm:comp-power-hp} is provided 
in Appendix~\ref{app:proof-comp-power-hp}. 
The theorem shows that the 
sample complexity of e-PS depends explicitly on 
how closely the plug-in e-values approximate the 
log-optimal e-values and on how well the 
cumulative variance is controlled.

\paragraph{Multivariate Gaussian mean testing.}
To illustrate Theorem~\ref{thm:comp-power-hp}, 
we specialize it to the multivariate Gaussian class. 
For each $k\in[K]$,  $P_k = \mN(\theta_k, \sigma^2_k  I_d)$, where $I_d$ denotes 
the $d\times d$ identity matrix. We consider  
the following hypothesis for each mean: 
\@ \label{eq:gaussian_test}
H_{k}^{(0)}: \theta_k = \theta_k^\circ, 
\quad 
H_{k}^{(1)}: \theta_k \neq \theta_k^\circ,
\@
where the variance parameter $\sigma_k$ is assumed to be known. 
To construct the plug-in e-values, for $\ell \ge 1$, we define 
\$
\hat \theta_{k,\ell} = \frac{1}{\ell }\sum^\ell_{i=1} 
Y_{t_k(i)},\text{ and } \hat \theta_{k,0} = \theta_k^\circ.
\$ 
Next, we define the e-value increment as 
\$ 
e_{k,t} 
& = \ind\{A_t \neq k\} + \ind\{A_t = k\}\cdot 
\exp\Bigg\{-\frac{\|Y_t - \hat \theta_{k,n_{k,t-1}}\|^2}{2\sigma_k^2} + 
\frac{\|Y_t - \theta_k^\circ\|^2}{2\sigma^2_k}\Bigg\}.
\$
In this case, we can explicitly calculate the quantities in Theorem~\ref{thm:comp-power-hp}; the sample complexity bound is summarized in the following corollary.
\begin{corollary}\label{cor:gaussian}
Fix $\alpha,\delta \in (0,1)$ and $K \ge 2$. For $k\in[K]$, suppose 
$P_k = \mN(\theta_k,\sigma_k^2  I_d)$ with the variance parameter $\sigma$ 
assumed to be known and consider the hypothesis in~\eqref{eq:gaussian_test}. 
For $\ell \ge 1$, define 
\$
V_{k,\ell} = \sum_{i=2}^\ell \frac{\|\hat \theta_{k,i-1} - \theta_k^\circ\|_2^2}{\sigma_k^2}
\text{ and }
\mathsf{r}_{k,\ell} = \big(3d + 10\log(K/\delta)\big)\cdot \frac{3+\log \ell}{\ell}.
\$
The e-PS algorithm implemented with 
$v_{k,t} = \rho_{n_{k,t}}(V_{k,n_{k,t}},1)^2 \vee n_{k,t} \mathsf{r}_{k,n_{k,t}}^2$ satisfies that $\P(\TPP(\mR_T) = 1) \ge 1-\delta$
if 
\$ 
T \gtrsim K\log\Big(\frac{K}{\delta}\Big) & + 
\sum_{k \in \mH_1} \frac{(d+\log(K/\delta))\sigma_k^2}
{\|\theta_k - \theta_k^\circ\|^2}
\cdot \Bigg\{\log\Big(\frac{Kd}{\delta}\Big)
+\log_+\bigg(\frac{\sigma_k^2}{\|\theta_k-\theta_k^\circ\|^2}\bigg)\Bigg\}
+ \frac{\log(K/\alpha)\sigma_k^2}{\|\theta_k - \theta_k^\circ\|^2} \\
& + |\mH_0|\cdot \frac{(d+\log(K/\delta))\cdot \sqrt{\log K}}{d_{\mathsf{LR}}}
\cdot \Bigg\{ \log\Big(\frac{Kd}{\delta}\Big)
+\log_+ \bigg(\frac{1}{d_{\mathsf{LR}}}\bigg)\Bigg\},
\$
where $d_{\mathsf{LR}} = \min_{k \in \mH_1} \|\theta_k - \theta_k^\circ\|_2^2/\sigma_k^2$.
\end{corollary}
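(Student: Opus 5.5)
The plan is to derive Corollary~\ref{cor:gaussian} from part (2) of Theorem~\ref{thm:comp-power-hp}. I will verify its three hypotheses for the Gaussian plug-in increments: the sub-Gaussian assumption with an explicit variance proxy, the high-probability envelope $\overline{V}_{k,\ell}$ for $V_{k,\ell}$, and the KL-rate sequence $\mathsf{r}_{k,\ell}$. I will then evaluate the $\mathsf{HT}$ and $\sfH_k$ terms. Validity of the e-process is part (1) of Theorem~\ref{thm:comp-power-hp} once we note that $\hat Q_{k,i}=\mN(\hat\theta_{k,i},\sigma_k^2 I_d)$ is predictable.

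\textbf{Step 1: sub-Gaussianity and the variance envelope.} Write $\hat\theta=\hat\theta_{k,i-1}$ and $\theta^\circ=\theta_k^\circ$. Expanding the squares gives
\[
Z_{k,i}=\frac{\langle Y-\theta^\circ,\hat\theta-\theta^\circ\rangle}{\sigma_k^2}-\frac{\|\hat\theta-\theta^\circ\|^2}{2\sigma_k^2}.
\]
Conditionally on $\mG^{(k)}_{i-1}$, this is Gaussian with variance $s_{k,i}^2=\|\hat\theta-\theta^\circ\|^2/\sigma_k^2$, so Assumption~\ref{assump:likelihood-e} holds and $V_{k,\ell}$ is exactly the quantity in the corollary. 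Since $\hat\theta_{k,i-1}-\theta_k=\sigma_k\,\xi_{i-1}/\sqrt{i-1}$ with $\xi$ standard Gaussian, the bound
\[
\|\hat\theta-\theta^\circ\|^2\le 2\|\theta_k-\theta^\circ\|^2+2\sigma_k^2\|\xi_{i-1}\|^2/(i-1)
\]
holds. A Laurent--Massart bound combined with a union bound over a peeling/geometric grid in $i$ (a time-uniform chi-square bound) gives, with probability at least $1-\delta/(2K)$,
\[
V_{k,\ell}\le \overline V_{k,\ell}:=2\ell\,\Delta_k+C\,(d+\log(K/\delta))(1+\log\ell),\qquad \Delta_k=\|\theta_k-\theta_k^\circ\|^2/\sigma_k^2.
\]
For nulls $\Delta_k=0$, so $\overline V_{k,\ell}$ grows only logarithmically.

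\textbf{Step 2: the rate $\mathsf{r}_{k,\ell}$.} For the true law $P_k$ we have
\[
D_\KL(P_k\|\hat Q_{k,i-1})=\frac{\|\theta_k-\hat\theta_{k,i-1}\|^2}{2\sigma_k^2}=\frac{\|\xi_{i-1}\|^2}{2(i-1)}.
\]
Summing over $i\le\ell$ with the same time-uniform chi-square event yields
\[
\frac1\ell\sum_{i=2}^\ell D_\KL(P_k\|\hat Q_{k,i-1})\le \frac{(3d+10\log(K/\delta))(3+\log\ell)}{\ell}
\]
with probability at least $1-\delta/(2K)$. The harmonic sum supplies the $\log\ell$ factor, and the constants $3,10$ absorb the union-bound cost. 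Monotonicity of $\mathsf{r}_{k,\ell}$ for $\ell\ge1$ is immediate. I also need to check that $\rho(\overline V_{k,\ell},1)/\sqrt\ell$ is non-increasing; this holds up to constants, and I would adjust $\overline V$ to a concave envelope if needed.

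\textbf{Step 3: evaluating the bound.} Note that $D_\KL(P_k\|P_k^\circ)=\Delta_k/2$ and $d_\KL=d_{\mathsf{LR}}/2$. For nonnulls,
\[
\rho(\overline V_{k,\ell},1)^2/\ell\asymp \frac{(\Delta_k+(d+\log(K/\delta))\log\ell/\ell)\,\log(K\ell\Delta_k/\delta)}{\ell}.
\]
Requiring both this quantity and $\mathsf{r}_{k,\ell}^2$ to be at most $c\Delta_k^2$ gives $\ell\gtrsim (d+\log(K/\delta))\Delta_k^{-1}\{\log(Kd/\delta)+\log_+(1/\Delta_k)\}$, after solving $\ell\gtrsim a\log\ell$ via $\ell\asymp a\log a$. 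The $\log(K/\alpha)/D_\KL$ term is read off directly. The $\sfH_k$ sum is a geometric-type tail starting at $\tilde\sfw_k$. Beyond that point the exponent grows at least linearly in $\ell/\ell_0$, where $\ell_0$ is the $\mathsf{HT}$ threshold, so the sum is $O(\ell_0)$ and is absorbed. For nulls, $\overline V$ is logarithmic, and the hitting time of $d_{\mathsf{LR}}/(60\sqrt{\log 4K})$ by $\mathsf{r}_{k,\ell}\vee\rho/\sqrt\ell$ is dominated by the $\mathsf{r}$ term. This gives the $|\mH_0|\,(d+\log(K/\delta))\sqrt{\log K}\,d_{\mathsf{LR}}^{-1}\{\cdots\}$ term. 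Finally, the failure probability $2\delta$ is rescaled to $\delta$.

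\textbf{Main obstacle.} The delicate part is Step 2 together with the null hitting time. I need a clean time-uniform bound on $\sum_i\|\xi_{i-1}\|^2/(i-1)$, where the $\xi_i$ are dependent partial-sum normalizations. My first attempt would be a union bound over all $i$ with per-$i$ level $\delta/(2K i^2)$, which costs a $\log i\le\log\ell$ factor that matches the stated form of $\mathsf{r}_{k,\ell}$. I would then invert $\ell\gtrsim a\log\ell/\gamma$ carefully so that only $\log_+$ factors appear.
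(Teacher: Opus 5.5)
Your reduction to Theorem~\ref{thm:comp-power-hp}, the Gaussian computations of $s_{k,i}^2=\|\hat\theta_{k,i-1}-\theta_k^\circ\|^2/\sigma_k^2$ and $D_{\mathrm{KL}}(P_k\|\hat Q_{k,i-1})=\|\hat\theta_{k,i-1}-\theta_k\|^2/(2\sigma_k^2)$, and the hitting-time arithmetic all match the paper. The gap is in the time-uniform concentration step. This step must certify the \emph{stated} $\mathsf{r}_{k,\ell}$. The corollary hard-wires $\mathsf{r}_{k,\ell}$ into $v_{k,t}$, so you cannot loosen it.

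Suppose you bound each $\|\xi_i\|^2$ separately with a union bound at level $\delta/(2Ki^2)$. Then $\|\xi_i\|^2\lesssim d+\log(K/\delta)+\log i$ for all $i$. After weighting by $1/i$ and summing, the union-bound cost is $\sum_{i<\ell}\log i/i\asymp(\log\ell)^2$. That is an extra multiplicative $\log\ell$, not something the constants $3,10$ absorb. You would get only
$\ell^{-1}\sum_{i=2}^{\ell}D_{\mathrm{KL}}(P_k\|\hat Q_{k,i-1})\lesssim(d+\log(K/\delta)+\log\ell)(1+\log\ell)/\ell$.
This exceeds $(3d+10\log(K/\delta))(3+\log\ell)/\ell$ once $\log\ell\gtrsim d+\log(K/\delta)$. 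So the event $\{\ell^{-1}\sum_{i=2}^{\ell}D_{\mathrm{KL}}(P_k\|\hat Q_{k,i-1})\le\mathsf{r}_{k,\ell}\ \forall\ell\ge2\}$ required by the theorem is not established.

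Peeling the individual terms over a geometric grid does not fix this. The block-wise confidence levels $x_m$ must grow with $m$ for the union bound to converge. Summing $\sum_{m\le\log_2\ell}x_m$ then gives a super-logarithmic total, e.g.\ $\log\ell\cdot\log\log\ell$ even with $x_m\approx\log(K/\delta)+2\log(m+1)$. The same defect enters your null envelope $\overline V_{k,\ell}$.

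The missing idea is to concentrate the weighted sum as a whole rather than term by term. Set $X_i=(Y_{t_k(i)}-\theta_k)/\sigma_k$, which are i.i.d.\ $\mathcal{N}(0,I_d)$. Then $\sum_{i\le\ell}\|\hat\theta_{k,i}-\theta_k\|^2/\sigma_k^2=\tilde{\bm X}^\top\tilde\Sigma\tilde{\bm X}$, where $\tilde{\bm X}\sim\mathcal{N}(0,I_{d\ell})$ and $\tilde\Sigma$ is block diagonal with $d$ copies of $\Sigma=\sum_{i\le\ell}\mathbf 1_i\mathbf 1_i^\top/i^2$. This matrix satisfies:
\begin{itemize}
\item $\mathrm{tr}\,\tilde\Sigma\le d(1+\log\ell)$;
\item $\|\tilde\Sigma\|_{\mathrm{op}}\le4$, by the discrete Hardy inequality $\sum_i(i^{-1}\sum_{j\le i}x_j)^2\le4\sum_jx_j^2$;
\item $\|\tilde\Sigma\|_{\mathrm F}^2\le4d(1+\log\ell)$.
\end{itemize}
The Gaussian quadratic-form bound of Hsu--Kakade--Zhang then gives $\tilde{\bm X}^\top\tilde\Sigma\tilde{\bm X}\le d(1+\log\ell)+4\sqrt{d(1+\log\ell)x}+8x$ with probability at least $1-e^{-x}$. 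Equivalently, Laurent--Massart applied to the eigenvalues of $\tilde\Sigma$ gives this, which is where your tool does work. The key point is that the term linear in $x$ is governed by the operator norm, not by the number of summands.

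The paper applies this bound only at $\ell=2^m$, with $x_m=\log(2K/\delta)+(m+1)\log 2$. It interpolates between dyadic points using that the partial sums are non-decreasing in $\ell$. The union-bound cost is then additive, $O(\log(K/\delta)+\log\ell)$, and fits under $(3d+10\log(K/\delta))(3+\log\ell)$. On the same event, $V_{k,\ell}\le2\sum_{i<\ell}\|\hat\theta_{k,i}-\theta_k\|^2/\sigma_k^2+2\ell\|\theta_k-\theta_k^\circ\|^2/\sigma_k^2$. This gives the logarithmic null envelope you need, and the rest of your Step~3 goes through.
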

The proof of Corollary~\ref{cor:gaussian} is provided in Appendix~\ref{appd:proof-gaussian}.

\section{Simulation studies}\label{sec:simulation}
This section collects simulation results comparing our method against 
other baselines. Due to space constraint, we defer results under other settings 
to Appendix~\ref{sec:additional_simulation}.

\paragraph{Simulation setup.}
We consider a testing problem with $K=50$ hypotheses. 
For each $k\in[K]$, 
we consider four parametric distribution families for $P_k$:
\begin{enumerate}[topsep=0pt, itemsep=0pt, parsep=0pt, partopsep=0pt]
\item[(1)] {\em Univariate Gaussian:} $P_{k} = \mN(\theta_k,1)$, 
where $\theta_k \in \mathbb{R}$.
\item[(2)] {\em Multivariate Gaussian:} 
$P_k = \mN(\theta_k \mathbf{1}_5, {\bm I}_5)$, 
where $\theta_k \in \mathbb{R}$.
\item[(3)] {\em Univariate Student-t distribution:} 
$P_{k} = t_{\theta_k,5}$,
where $\theta_k \in \mathbb{R}$ is the noncentrality parameter and 
the distribution has five degrees of freedom.
\item[(4)] {\em Multivariate Student-t distribution:} 
$P_{k} = t_{\theta_k \mathbf{1}_5,5}$,
where $\theta_k \in \mathbb{R}$, 
$\theta_k \mathbf{1}_5$ is the noncentrality parameter
and the distribution has five degrees of freedom. 
\end{enumerate}
For each distribution family, the parameter of interest is $\theta_k$,
and we consider two testing settings.
For the simple-versus-simple setting, 
for each $k\in [K]$, we test
\$
H_k^{(0)}: \theta_k = 0 \text{ versus } H_k^{(1)}: \theta_k = 0.02 \cdot k.
\$
In the simple-versus-composite setting, we test 
\$ 
H_k^{(0)}: \theta_k = 0 \text{ versus } H_k^{(1)}: \theta_k \neq 0.
\$
For each setting, we independently draw $B_k \sim \text{Bern}(0.1)$, 
$\forall k \in [K]$ 
and set $\theta_k = 0.02 k\cdot B_k$.
The $B_k$'s, and hence the parameters $\{\theta_k\}_{k=1}^K$, 
are generated once and 
are then held fixed across all the simulation repetitions.
  
In each setting, we compare e-PS with two baseline methods:
(1) {\em Uniform sampling,} which samples an arm uniformly
at random from the set of unrejected hypotheses at each round; 
and (2) {\em Greedy sampling}~\citep{xu2021unified}, which samples 
the unrejected arm with the largest current e-value.

\paragraph{Implementation details.}
For each sampling method, we use the likelihood-ratio e-process (referred to as 
\texttt{Oracle}) for the simple-versus-simple test and the plug-in likelihood-ratio
e-process (referred to as \texttt{Adaptive}) for the simple-versus-composite test.
The e-PS algorithm is implemented with the variance proxy $v_{k,t} = 1.1 \hat \sigma_{k,t}^2$,
where $\hat \sigma_{k,t}^2$ is the sample variance computed from observations 
collected from arm $k$ up to time $t-1$.
We set the target FDR level to be $\alpha =0.05$ and, at each time point, 
obtain the rejection sets by applying e-BH at level $\alpha$ 
to the current e-values.
For each setting, we repeat the experiment over $500$ independent trials 
with different random seeds, and report the averaged TPP and FDP.

\paragraph{Results.}
The simulation results are shown in Figure~\ref{fig:svs}. 
Across all four distribution families and two e-value contructions, 
e-PS controls the FDR throughout the time horizon.
In terms of TPR, e-PS substantially outperforms the uniform sampling approach
in all settings, and generally achieves a higher TPR than greedy sampling,
with its advantage over the greedy method growing over time.

\begin{figure}[htbp!]
    \centering
\includegraphics[width = \textwidth]{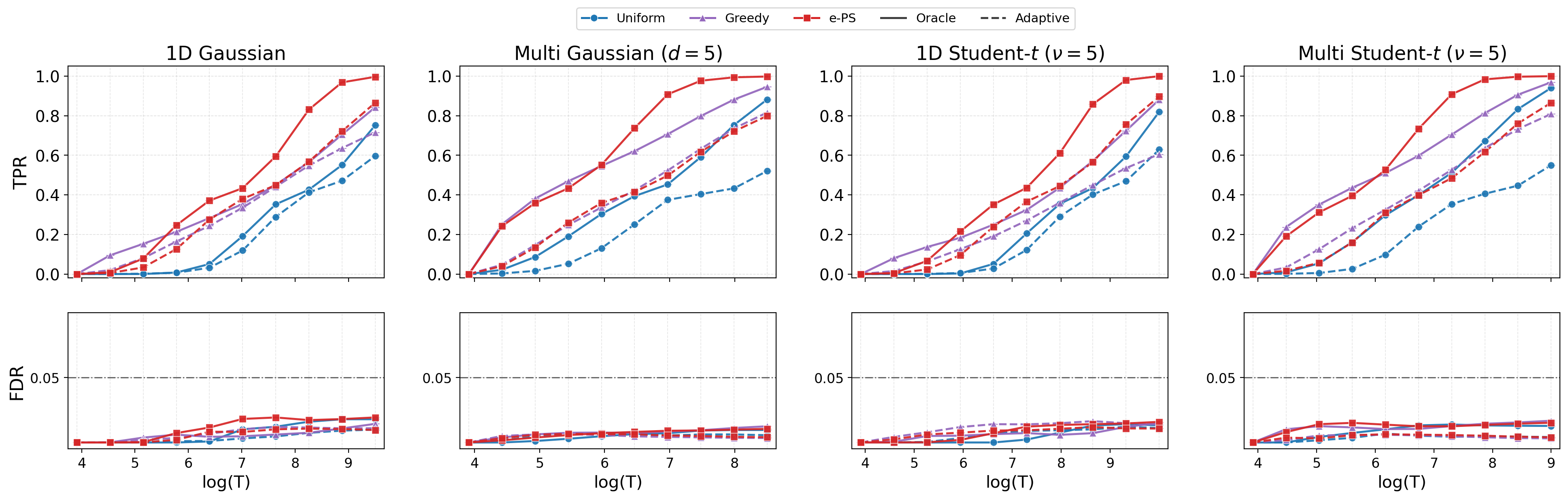}
    \caption{
    Realized TPR and FDR from the simple-versus-simple test (solid)
    and the simple-versus-composite test (dashed)  
    as described in Section \ref{sec:simulation}: (a) 1-dimensional Gaussian, (b) multivariate Gaussian, (c) 1-dimensional student-t, (d) multivariate student-t.
    The results are under an average of 500 independent trials.
    }
\label{fig:svs}
\end{figure}



\section{Real data experiments}\label{sec:realdata}
\subsection{Joke rating mean testing}\label{sec:joke}
We consider a joke rating 
dataset,\footnote{\url{https://www.kaggle.com/datasets/vikashrajluhaniwal/jester-17m-jokes-ratings-dataset}.}
which contains user ratings in $[-10,10]$ for $K=140$ jokes~\citep{goldberg2001eigentaste}.
For each joke $k\in[K]$, we test whether its mean rating is negative:
\begin{align}\label{eq:joke-hypothesis}
    H_k^{(0)}:\E[Y_1(k)] \geq 0
    \qquad \textnormal{versus} \qquad
    H_k^{(1)}:\E[Y_1(k)] < 0.
\end{align}
Although a user may rate multiple jokes, we treat the observed user--rating pairs as independent and view the available ratings for each joke as an empirical sampling distribution. Whenever a joke is queried, we sample one of its available ratings with replacement.




Specifically, within each replication we independently split the available ratings of each joke into two equal parts. The first half is used to determine the ground-truth label: joke $k$ is treated as a non-null if its first-$50\%$ empirical mean is negative. The second half is then used exclusively as the sampling pool during the sequential experiment.
To test~\eqref{eq:joke-hypothesis}, we construct the e-process 
following the betting-based framework of \citet{xu2021unified,waudby2024estimating}:
for each joke $k\in[K]$, at each round $t \ge 1$, we construct
the e-value increment as 
\$
e_{k,t} =\exp\big\{\ind\{A_t = k\}\cdot 
(-\lambda_{k,n_{k,t}}Y_t/10 -\lambda_{k,n_{k,t}}^2/2) \big\},
\$
where $\lambda_{k, \ell} = \sqrt{\frac{2\log (2/\alpha)}{\ell\log(\ell+1)}}$, 
$\forall \ell \ge 1$.
This construction satisfies the conditions in~\eqref{eq:e-increment}
since $-Y_t(k)/10  \in [-1, 1]$ and is therefore $1$-sub-Gaussian.
The e-PS algorithm is implemented with $v_{k,t} = (\sum_{i=1}^{n_{k,t}} \lambda_{k,i}^2)/n_{k,t}$.
We compare the performance of e-PS with uniform sampling and 
greedy sampling described in Section~\ref{sec:simulation}.


We set the FDR target to be $\alpha = 0.1$ 
and run $500$ repetitions, independently reshuffling the data in each replication.
Figure \ref{fig:joke_compare} reports the FDP and TPP averaged over the repetitions as 
a function of the sampling budget.
Across budgets, e-PS maintains the empirical FDR below the target level 
while achieving a higher TPR than the two baselines, thereby identifying more 
non-null hypotheses for the same sampling budget.


\begin{figure}[h!]
    \centering
      \includegraphics[width = 0.6\textwidth]{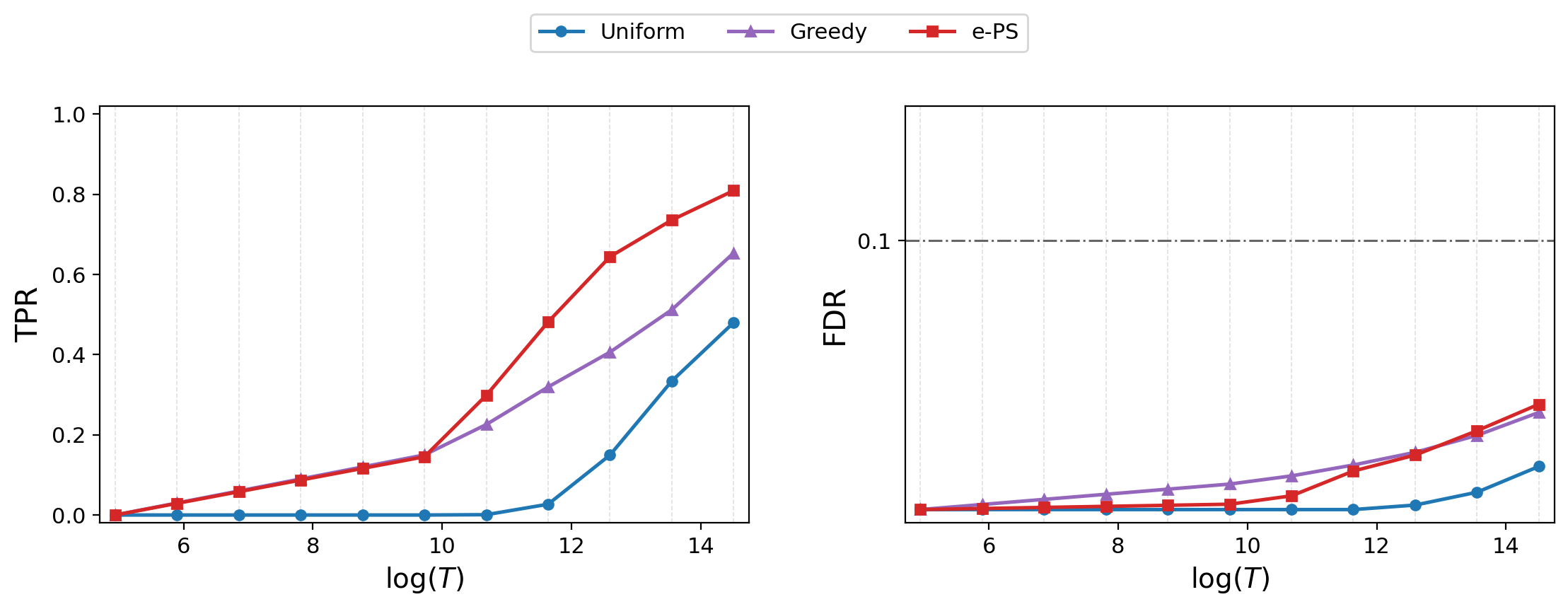}
    \caption{Realized TPR (left) and FDR (right) in the joke rating experiment, 
    averaged across 500 repetitions, as a function of the sampling budget.}
\label{fig:joke_compare}
\end{figure}



\subsection{Watermark detection with FDR control}\label{sec:watermark}
We now consider online watermark detection in large language model (LLM)-generated text, 
following the setup of~\citet{su2026online}. 
Suppose we observe $K$ token streams.
For each $k\in[K]$, let $(w_{k,t})_{t\ge 1}$ denote the $k$-th stream and
let $(\zeta_{k,t})_{t\ge 1}$ denote its associated sequence of pseudo random  vectors. 
Define the stream-specific filtration $\mF_{k,t} = \sigma(w_{k,1},\zeta_{k,1},\dots,w_{k,t},\zeta_{k,t})$, for $t \ge 1$.
To detect whether the $k$-th stream is generated by a watermarked LLM, we test
\$
H_k^{(0)}: w_{k,t} \perp\!\!\!\!\perp \zeta_{k,t} \given \mF_{k,t-1}\quad \text{for all }t \ge 1.
\$
Under the null, the current token is conditionally independent of the pseudorandom variable given the preceding history, 
whereas watermarking induces conditional dependence between them. 
In particular, under $H_k^{(0)}$, $\zeta_{k,t,w_{k,t}} \given \mF_{k,t-1} \sim \operatorname{Unif}(0,1)$ by design.

In our experiment, we generate five independent pools of 
 1000 text-generation streams using OPT-1.3B. 
 Each pool contains 200 watermarked streams generated using the Gumbel-max watermarking mechanism~\citep{Aaronson2023watermark}, along with 800 non-watermarked streams. 
In each pool, texts are generated using 20 prompts, and we consider 5 regularly-spaced temperatures in $[0.5, 0.8]$ for different pools.
The generation lengths are constrained to be between 200 and 1200 tokens. 
In other words, for each prompt-temperature pair, we generate 2 watermarked and 8 non-watermarked streams.

For each token stream $k\in[K]$, 
we construct the e-process  as in Section 4.2 of \citet{su2026online}, 
where the e-increment is 
\$
e_{k,t} = 1- \ind\{A_t = k\} \cdot 
\big(\lambda + \lambda \log(1-\zeta_{k,s,w_{k,s}})\big), \quad \forall t \ge 1
\$
where $\lambda = 0.25$.
To implement e-PS, we use a fixed variance proxy $v_{k,t} = 1.1 \cdot \hat \sigma_{k,t}^2$,
where $\hat \sigma_{k,t}^2$ is the empirical variance of the log e-value increments estimated in
a hold-out set.
We compare e-PS with uniform sampling and greedy sampling. 
At each repetition, we sample $K = 500$ streams without replacement from one of the pools, and all methods are evaluated on the same selected streams.
We set the FDR target level to be $\alpha = 0.1$, performing $100$ repetitions for each of the $5$ independently generated pools, giving a total of  $500$ experiments.

The FDR and TPR, averaged across all repetitions, 
are reported in Figure \ref{fig:watermark_exp_}
as functions of the number of token reads.
We can see that e-PS substantially outperforms uniform sampling. 
Greedy sampling performs better initially, as it quickly rejects the more 
obvious hypotheses; however, the advantage of e-PS increases over time.
This behavior is consistent with our theoretical analysis, 
which focuses on the sample complexity required to reject all nonnull hypotheses: 
while greedy sampling prioritizes hypotheses that are easiest to reject, 
e-PS more effectively allocates samples toward the remaining nonnull hypotheses 
as the experiment progresses.

\begin{figure}[H]
    \centering
      \includegraphics[width = 0.6\textwidth]{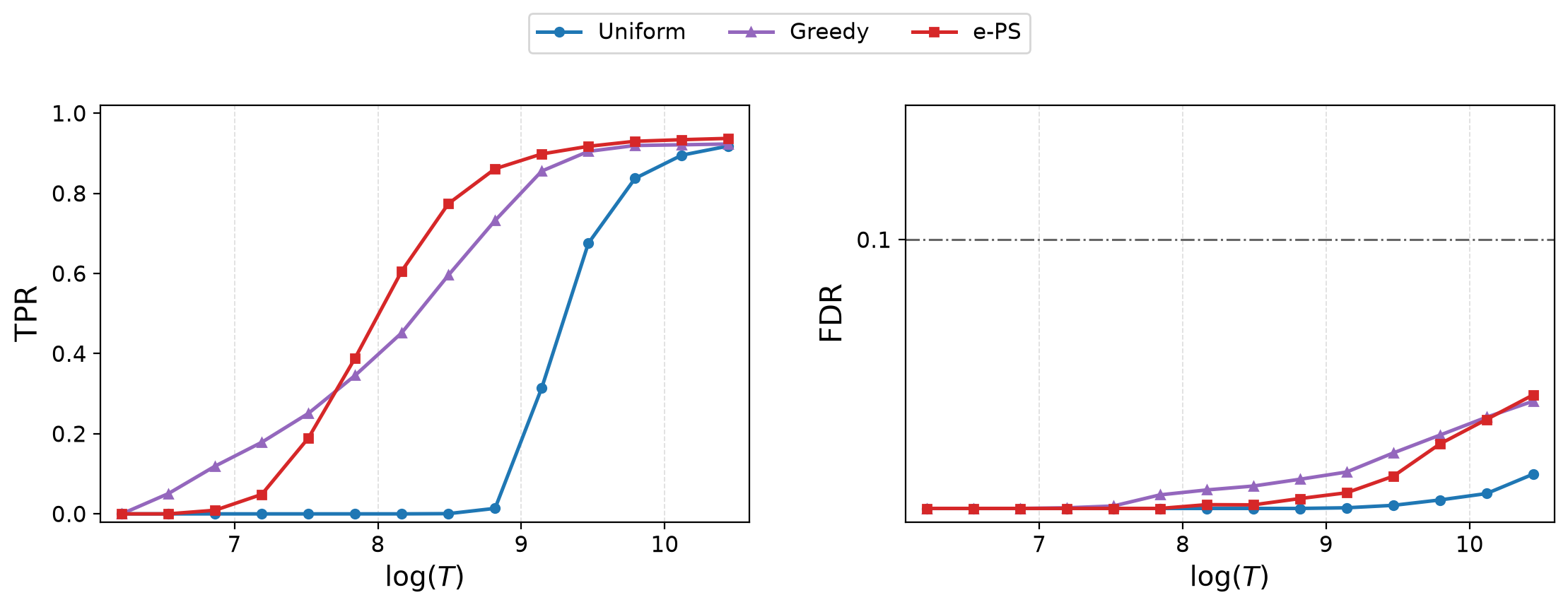}
    \caption{Realized TPR (left) and FDR (right) 
    in the watermark detection experiment, 
    averaged over $500$ repetitions, as a function of number of token-reads. }
\label{fig:watermark_exp_}
\end{figure}

\section{Discussion}
In this work, we propose a general framework for adaptive sampling 
in multiple testing that achieves anytime-valid FDR control and 
sample efficiency at the same time. 
For general testing problems, our algorithm
constructs pseudo reward processes based on e-values and 
allocates samples using posterior sampling. Our e-value-based 
reward processes can also be combined with UCB-type algorithms: 
we evaluate
the empirical performance of such methods in Section~\ref{app:sim-ucb}
while leaving the theoretical analysis to future work.
Our framework uses the e-BH procedure for constructing the rejection set, 
it would be interesting to consider boosted versions of e-BH~\citep{lee2024boosting,xu2025bringing}
for further power improvement.

While our sample-complexity guarantees apply broadly to multiple testing settings with sub-Gaussian log-e-value increments, extending the analysis to heavier-tailed or misspecified models remains an open direction for future work.
Our theoretical guarantees also depend on the quality of the underlying e-process and on tuning choices such as the variance proxy parameter; developing fully data-adaptive calibration procedures is an important avenue for further study. These challenges are particularly salient in human-facing applications, where model misspecification or poorly calibrated adaptive sampling may lead to misleading conclusions. Accordingly, such deployments should be accompanied by careful domain oversight.

\subsection*{Acknowledgment}
Z.~Ren is supported by the National Science Foundation (NSF) 
under grant DMS-2413135 and Wharton Analytics.
Z.~Ren~is thankful to Jelle Goeman, Rianne de Heide, and Ricardo J. Sandoval
for helpful feedback on the project.
Y.~Wei is supported in part by Wharton Dean's Research Fund, the NSF grants CCF-2106778, CCF-2418156 and CAREER award DMS-2143215.  
This work is also supported by the NSF under Cooperative Agreement No. 2433450.
\bibliographystyle{apalike}
\bibliography{refs}

\newpage
\appendix 

\section{Proofs of main results}
\subsection{Proof of Proposition~\ref{prop:nest}}
\label{app:proof-nest}
For any $t \ge 1$, we let $\sigma_t$ denote a permutation on $[K]$ such that
\[
E_{\sigma_t(1),t} \ge \cdots \ge E_{\sigma_t(K),t},
\]
and let $r_t(j) = \sigma_t^{-1}(j)$ denote the rank of $E_{j,t}$
among $\{E_{1,t},\dots,E_{K,t}\}$.
Recall that the e-BH procedure rejects the null hypotheses 
corresponding to the $k_t^*$ largest e-values, where
\[
k_t^* = \max\left\{k: E_{\sigma_t(k),t} \ge \dfrac{K}{\alpha k} \right\}.
\]
Equivalently, the rejection set 
$\mathcal{R}_t = \{k\in[K]: E_{k,t} \ge K/(\alpha k_t^*)\}$. 
When $\mathcal{R}_t = \emptyset$ or $\mathcal{R}_t = [K]$, 
the result trivially holds.
We assume otherwise in the following. 

Since $A_{t+1} \in [K]\backslash \mathcal{R}_t$ and 
only the corresponding e-value $E_{A_{t+1},t}$ 
is updated, we have $r_t(A_{t+1})>k_t^*$
and $E_{k,t+1} = E_{k,t}$, for $\forall k \neq A_{t+1}$. 
We consider the following two cases.
\begin{itemize}
\item If $E_{A_{t+1},t+1} < E_{\sigma_t(k_t^*),t}$, 
then the $k_t^*$ largest e-values at time $t$
remain the $k_t^*$ largest at time $t+1$. As a result, 
\[
E_{\sigma_{t+1}(k_t^*),t+1} = E_{\sigma_t(k_t^*),t} \ge \frac{K}{\alpha k_t^*}, 
\]
where the last step is by the choice of $k_t^*$.
By the e-BH rejection rule, we conclude that $k_{t+1}^* \ge k_t^*$.
For any $k \in \mathcal{R}_t$, $E_{k,t+1} = E_{k,t} \ge \frac{K}{\alpha k_t^*} 
\ge \frac{K}{\alpha k^*_{t+1}}$,
and therefore $\mathcal{R}_t \subseteq \mathcal{R}_{t+1}$. 
\item If  $E_{A_{t+1},t+1} \ge E_{\sigma_t(k_t^*),t}$, then 
the $(k_t^*+1)$-th largest e-value at time $t+1$ coincides with the  
$k_t^*$-th largest at time $t$. To be precise,  we have
\[ 
E_{\sigma_{t+1}(k_t^*+1),t+1} = E_{\sigma_{t}(k_t^*),t} \ge \frac{K}{\alpha k_t^*} 
\ge \frac{K}{\alpha (k_t^*+1)}.
\]
Again by the e-BH rejection rule, $k_{t+1}^* \ge k_t^*+1$.
For any $k\in \mathcal{R}_t$, $E_{k,t+1} = E_{k,t} \ge \dfrac{K}{\alpha k_t^*} 
> \dfrac{K}{\alpha k_{t+1}^*}$.
As a result, $\mathcal{R}_t \subseteq \mathcal{R}_{t+1}$.
\end{itemize}
Combining the two cases, we complete the proof.

\subsection{Proof of Theorem~\ref{thm:general-power-hp}}
\label{sec:general-power-hp-proof}

Define the event 
\$ 
\mathcal{A} = \bigcap_{k \in [K]} \bigcap_{\ell \ge 1} 
\big\{V_{k,\ell} \le \overline{V}_{k,\ell}\big\}, 
\$
and 
$\mathcal{B} = \mathcal{B}_0 \cap \mathcal{B}_1$, where 
\$
\mathcal{B}_0 = 
\bigcap_{k \in \mH_0}\bigcap_{\ell \ge 1}
\Big\{S_{k,\ell}/\ell \le \eta_{k,\ell}\Big\},
\quad 
\mathcal{B}_1 = \bigcap_{k \in \mH_1}\bigcap_{\ell \ge 1}
\Big\{S_{k,\ell}/\ell \ge \gamma_{k,\ell}\Big\}. 
\$
In view of this definition, one has 
\$ 
\P(\cA^c \cup \mathcal{B}^c) \le  
\sum_{k\in \mH_1} \mathsf{L}_k(\{\gamma_{k,\ell}\}) 
+ \sum_{k\in \mH_0} \mathsf{U}_k(\{\eta_{k,\ell}\})
+ \sum_{k \in [K]} \mathsf{V}_k(\{\overline{V}_{k,\ell}\}).
\$
Next, we characterize the concentration of $M_{k,\ell}$ that will be useful for 
the rest of the proof.
For any $k\in [K]$, define for any $\lambda \in \RR$ that  
\$
\mathsf{\Gamma}_{k,\ell}(\lambda) :=  
\exp\big(\lambda(M_{k,\ell}- S_{k,\ell}) - \lambda^2 V_{k,\ell}/2\big).
\$
By Assumption~\ref{assump:likelihood-e}, $\{\mathsf{\Gamma}_{k,\ell}(\lambda)\}_{\ell \ge 1}$
is a supermartingale with respect to $\{\mathcal{G}_\ell^{(k)}\}_{\ell \ge 1}$. 

Mixing $\lambda$ over $\mathcal{N}(0,1/c_k)$ 
for some $c_k>0$ yields
\$
\mathsf{\Gamma}_{k,\ell} = \int \frac{\sqrt{c_k}}{\sqrt{2\pi}} 
\exp\Big(-\frac{c_k \lambda^2}{2}\Big) \cdot \Gamma_{k,\ell}(\lambda) \, \di \lambda 
= \sqrt{\frac{c_k}{V_{k,\ell} + c_k}} 
\cdot \exp\bigg(\frac{(M_{k,\ell} - S_{k,\ell})^2}{2(V_{k,\ell} + c_k)}\bigg),
\$
which is also a supermartingale with respect to $\{\mG_{\ell}^{(k)}\}_{\ell \ge 1}$.
Applying Ville's inequality, we have  
\$
\P\Big(\sup_{\ell \ge 1}~\mathsf{\Gamma}_{k,\ell} \ge 4K/\delta\Big) \le \delta/(4K).
\$
Taking a union bound over $k\in[K]$, 
we have with probability at least $1-\delta/4$ 
that 
\$ 
\big|M_{k,\ell} - S_{k,\ell}\big| & \le \sqrt{2(V_{k,\ell} + c_k) 
\cdot \log\Big(\frac{4K}{\delta}\sqrt{1+V_{k,\ell}/c_k}} \Big)
= \sqrt{\ell}\rho_\ell(V_{k,\ell},c_k) \le \sqrt{\ell}\rho_\ell(\overline{V}_{k,\ell},c_k), 
\quad \forall k\in [K], ~\forall \ell \ge 1.
\$
We use $\mathcal{C}$ to denote the above event.

Since the e-PS algorithm outputs nested rejection sets, it must be the case that  
\$ 
\P(\TPP(\mR_T) < 1) \le \P(\mH_1 \not\subseteq \mR_T) \le \P(\tau^* > T).
\$ 
It therefore suffices to upper bound $\P(\tau^* > T)$ for any $T$ satisfying the condition in the theorem.
We decompose the stopping time by the 
contribution of the non-nulls and the nulls:
\$ 
\tau_* = \sum_{k\in \mH_1}n_{k,\tau_*} + \sum_{k\in \mH_0}n_{k,\tau_*}, 
\$
and establish an upper bound for each term separately.

\paragraph{The contribution of non-nulls.}
Fix $k \in \mH_1$. Define 
\$ 
T_k = \inf\big\{\ell \ge 1:  M_{k,\ell} \ge \log(K/\alpha)\big\}.
\$
By the sampling strategy, we have $n_{k,\tau_*} \le T_k$. 
For any $\ell \ge \max\{\mathsf{h}_k, \mathsf{l}_k\}$, 
we have by definition that $\gamma_{k,\ell}>0$, 
\begin{align}\label{eq:ell_cond}
\sqrt{\ell} \gamma_{k,\ell} \ge 3 \cdot 
(\rho_\ell(\overline{V}_{k,\ell},c_k) \vee \kappa_{k,\ell}),
\text{ and }
\ell \gamma_{k,\ell} \ge 2\log(K/\alpha).
\end{align} 
As a result, for $\ell \ge \max\{\sfh_k,\sfl_k\}$, 
\$ 
\P(n_{k,\tau_*} > \ell, \cA\cap \mathcal{B}\cap \cC) 
& \le \P(T_k > \ell, \cA\cap\mathcal{B}\cap \cC)\\
& \le \P\big(M_{k,\ell} < \log(K/\alpha),\cA\cap \cB\cap \cC\big)\\
& \le \P\big(M_{k,\ell} - S_{k,\ell} < \log ({K}/{\alpha}) -  \ell\gamma_{k,\ell}, \cA\cap \cB \cap \cC\big)\\
& \le \P\big(M_{k,\ell} - S_{k,\ell} < -  \ell\gamma_{k,\ell}/2, \cA\cap \cB \cap \cC\big)=0.
\$
where the last two steps are due to the choice of $\ell$. 
As a result, on the event $\cA \cap \cB \cap \cC$, we have 
\$ 
n_{k,\tau^*} \le \max\{\sfh_k,\sfl_k\}, ~\forall k \in \mH_1.
\$



\paragraph{The contribution of nulls.}
We now turn to the contribution of nulls.  
Recall that $\gamma_{*} = \min_{j\in \mH_1,\ell \ge \sfw_j}  \gamma_{j,\ell}$. 
For any $k\in \mH_0$ and $\ell \ge 1$, we let  
$\Delta_{k,\ell} = \gamma_{*} - \eta_{k,\ell}$ 
and define the following two (sequences of) thresholds  
\$ 
\sfa_{k,\ell} = \eta_{k,\ell} 
+ \frac{\Delta_{k,\ell}}{3},\quad \sfb = \frac{2\gamma_{*}}{3}.
\$
We can check that $\eta_{k,\ell} < \sfa_{k,\ell} < \sfb < \gamma_{*}$,
for $\ell \ge \sfw_k$.

Using these thresholds, we decompose the  number of  
pulls of $k \in \mH_0$ as
\$ 
\sum_{t \le \tau_*} \ind\{A_t = k\}
& \le 1 + \sum_{K< t \le \tau_*} \ind\{A_t = k, \hat m_{k,t-1} > \sfa_{k,n_{k,t-1}}\}
+ \sum_{K< t \le \tau_*} \ind\{A_t = k, \hat m_{k,t-1} \le \sfa_{k,n_{k,t-1}}, 
\tilde m_{k,t} > \sfb\}\\
& \qquad \qquad + \sum_{K< t \le \tau_*} \ind\{A_t = k, 
\hat m_{k,t-1} \le \sfa_{k,n_{k,t-1}}, \tilde m_{k,t} \le \sfb\}\\
& =:1 + T_{1,k} + T_{2,k} + T_{3,k}.
\$
We establish an upper bound for each term separately.

\paragraph{Term $T_{1,k}$:} 

For $T_{1,k}$, 
\$
T_{1,k} & \le \sum^\infty_{t=K+1} 
\ind\{A_t = k, \hat m_{k,t-1} > \sfa_{k,n_{k,t-1}}\} \\
& =\sum^\infty_{t=K+1}\sum^\infty_{\ell=1} 
\ind\{A_t = k,n_{k,t-1} = \ell, M_{k,\ell}/\ell > \sfa_{k,\ell}\}\\
& = \sum_{\ell = 1}^\infty  \ind\{M_{k,\ell}/\ell > \sfa_{k,\ell}\}
\sum_{t = K+1}^\infty \ind\{A_t = k, n_{k,t-1} = \ell\}\\
& \le \sum^\infty_{\ell=1} 
\ind\{M_{k,\ell}/\ell > \sfa_{k,\ell}\}.
\$
On the event $\cA \cap \mathcal{B}\cap \mathcal{C}$,   
\$
T_{1,k} \le \sum^\infty_{\ell = 1} \ind\{M_{k,\ell} > \ell \sfa_{k,\ell}\}
& \le \sum^\infty_{\ell = 1} \ind\{M_{k,\ell} - S_{k,\ell}> \ell \Delta_{k,\ell}/3\}
\le \mathsf{m}_k.
\$
The last step is because when $\ell \ge \mathsf{m}_k$, 
\$ 
M_{k,\ell} - S_{k,\ell} \le \sqrt{\ell}\rho_\ell(V_{k,\ell})
\le \sqrt{\ell}\rho_\ell(\overline{V}_{k,\ell},c_k)
\le \ell \Delta_{k,\ell}/3.
\$

\paragraph{Term $T_{2,k}$:} 
For any $k\in \mH_0$ and $t > K$, when $n_{k,t-1} = \ell \ge \sfw_k$
and $\hat m_{k,t-1} \le \sfa_{k,n_{k,t-1}}$, 
\@\label{eq:t2}
\P(\tilde m_{k,t} > \sfb\given \mF_{t-1}) 
& \le \P\Bigg(\frac{\rho_\ell(V_{k,\ell},c_k) \vee \kappa_{k,\ell}}{\sqrt{\ell}} 
\cdot \xi_{k,t} > \sfb - \sfa_{k,\ell}\bigggiven \mF_{t-1}\Bigg) \qquad \xi_{k,t} 
\stackrel{\text{iid}}{\sim} \mN(0,1) \notag\\
& \le \P\Bigg(\frac{\rho_\ell(V_{k,\ell},c_k) \vee \kappa_{k,\ell}}{\sqrt{\ell}} \cdot \xi_{k,t} 
> \Delta_{k,\ell}/3\bigggiven \mF_{t-1}\Bigg)\notag \\
& \le \exp\Bigg\{-\frac{\ell \Delta_{k,\ell}^2}
{18 \cdot \rho_\ell(V_{k,\ell},c_k)^2 \vee \kappa_{k,\ell}^2}\Bigg\}
\@
where the second-to-last step is because for $\ell \ge \sfw_k$,  
\$ 
\sfb - \sfa_{k,\ell} = 
\frac{2 \gamma_{*}}{3} - \eta_{k,\ell} - \frac{\Delta_{k,\ell}}{3} =
\frac{\Delta_{k,\ell} - \eta_{k,\ell}}{3} \ge \frac{\Delta_{k,\ell}}{3}. 
\$

We now consider 
\$
T_2 &:= \sum_{k\in \mH_0} T_{2,k}\\ 
& \le \sum_{k\in \mH_0} \mathsf{m}_k +   
\sum_{k\in \mH_0} \sum_{t = K+1}^{\tau_*} 
\ind\{A_t = k, n_{k,t-1}\ge \mathsf{m}_k,  \hat m_{k,t-1} \le \sfa_{k,n_{k,t-1}}, \tilde m_{k,t} > \sfb\}
\\
& = \sum_{k\in \mH_0} \mathsf{m}_k +  
\sum_{t = K+1}^{\tau_*} \ind\{A_t \in \mH_0, n_{A_t,t-1} \ge \mathsf{m}_{A_t},
\hat m_{A_t,t-1} \le \sfa_{A_t,n_{A_t,t-1}}, \tilde m_{A_t,t} > \sfb\}. 
\$
Next, we consider the event of $\mathcal{A}$ restricted to 
$\mF_{t-1}$: 
\$
\mathcal{A}_t = \bigcap_{k\in[K]}\bigcap_{\ell \le n_{k,t-1}} 
\{V_{k,\ell} \le \overline{V}_{k,\ell}\}.
\$
Define $\overline{\sfB}_{t} =\ind\{A_t \in \mH_0, n_{A_t,t-1} \ge \mathsf{m}_{A_t}, 
\hat m_{A_t,t-1} \le \sfa_{A_t,n_{A_t,t-1}}, \tilde m_{A_t,t} > \sfb\}$
and $\sfB_t = \overline{\sfB}_t \cdot \ind\{\mathcal{A}_t\}$.
For any $\lambda > 0$ and $t > K$ 
\$
\Xi_t(\lambda) = \exp\Bigg(\lambda \sum^t_{s=K+1} \sfB_s - \epsilon_0 \cdot (e^\lambda-1)(t-K) \Bigg),
\$
with $\Xi_K(\lambda)=1$ and $\epsilon_0 = \log 2/ 4$. 
We can check for any $t>K$ that 
\@\label{eq:t2_mg}
\E[\Xi_t(\lambda) \given \mF_{t-1}]
& = \Xi_{t-1}(\lambda) \cdot \E\Big[\exp\{\lambda \sfB_t - \epsilon_0 (e^\lambda-1)\} \given \mF_{t-1}\Big] \notag \\
& =\Xi_{t-1}(\lambda) \cdot \exp\big(-\epsilon_0(e^\lambda-1)\big) \cdot 
\Big(e^\lambda \cdot \P(\sfB_t = 1 \given \mF_{t-1}) + \P(\sfB_t = 0 \given \mF_{t-1})\Big) \notag \\
& = \Xi_{t-1}(\lambda) \cdot \exp\big(-\epsilon_0(e^\lambda-1)\big) \cdot 
\Big(\big(e^\lambda-1\big) \cdot \P(\sfB_t = 1 \given \mF_{t-1}) + 1)\Big).
\@
Above, 
\$
\P(\sfB_t = 1 \given \mF_{t-1}) & = 
\sum_{k \in \mH_0}\ind\{n_{k,t-1} \ge \mathsf{m}_k, \hat m_{k,t-1} \le \sfa_{k,n_{k,t-1}},\mathcal{A}_t\} 
\cdot  \P(A_t = k, \tilde m_{k,t} > \sfb \given \mF_{t-1}) \\
& \le \sum_{k \in \mH_0} \frac{\epsilon_0}{K} \le \epsilon_0,
\$
where the second-to-last step uses~\eqref{eq:t2} and the choice of $\mathsf{m}_k$ and that $V_{k,\ell} \le \overline{V}_{k,\ell}$
on $\mA_t$.
As a result, there is 
\$
\eqref{eq:t2_mg} & \le \Xi_{t-1}(\lambda) \cdot \exp\big(-\epsilon_0(e^\lambda-1)\big) 
\cdot \big(\big(e^\lambda-1\big) \cdot \epsilon_0 + 1\big)\\
& \le \Xi_{t-1}(\lambda) \cdot \exp\big(-\epsilon_0(e^\lambda-1)\big)
\cdot \exp\big(\epsilon_0(e^\lambda-1)\big)
= \Xi_{t-1}(\lambda),
\$
where we use the inequality $1+x \le e^x$ for any $x\in \RR$ in the second inequality.
We have therefore verified that $(\Xi_t(\lambda))_{t \ge K}$ is a supermartingale adapted to the filtration $\{\mF_t\}_{t \ge K}$.
Applying Ville's inequality, we have for $\lambda > 0$ ,
\$
\P\Bigg(
\exp\Bigg(\lambda \sum^{\tau^*}_{s=K+1} \overline{\sfB}_s - 
\epsilon_0 \cdot (e^\lambda-1)(\tau^*-K) \Bigg) > 4/\delta, \mathcal{A}\Bigg)
\le \P\bigg(\sup_{t \ge 1}\Xi_{t}(\lambda) > 4/\delta \bigg) 
\le \E[\Xi_K(\lambda)] \cdot \frac{\delta}{4} = \frac{\delta}{4},
\$
where we use the fact that $\overline{B}_t = B_t$ for any $t\ge 1$
on $\mA$.
Rearranging the inequality 
gives that 
\$
\sum_{s=K+1}^{\tau_* } \overline{\sfB}_s \le \frac{1}{\lambda}\cdot \Big\{ 
\log(4/\delta) + \epsilon_0 (e^\lambda-1) \cdot \tau_* \Big\},  
\$
with probability at least $1-\delta/4$.
Putting everything together and taking $\lambda = \log 2$, 
we have with probability at least $1-\delta/4$,
\$
T_2 
\le \sum_{k \in \mH_0} \mathsf{m}_k + \frac{\log(4/\delta)}{\log 2} + \tau_*/4. 
\$

\paragraph{Term $T_{3,k}$:}  We now turn to $T_{3,k}$. Instead of bounding $T_{3,k}$ for 
each $k\in \mH_0$ separately, we directly target the 
total contribution of nulls $T_3 := \sum_{k\in \mH_0} T_{3,k}$, 
for which we have the following decomposition: 
\$ 
T_3 & =  \sum_{k\in \mH_0} \sum_{t=K+1}^\infty \ind\{t \le \tau_*, 
A_t = k, \hat m_{k,t-1} \le \sfa_{k,n_{k,t-1}}, \tilde m_{k,t} \le \sfb\}\\
& = \sum_{t=K+1}^\infty \ind\{t \le \tau_*, A_t \in \mH_0, 
\hat m_{A_t,t-1} \le \sfa_{A_t,n_{A_t,t-1}}, \tilde m_{A_t,t} \le \sfb\}.
\$
On the event $\{t \le \tau_*\}$, we have $\mH_1 \not\subseteq \mR_{t-1}$, and therefore 
there exists some $j\in \mH_1$ such that $j \in \mathcal{C}_{t-1} :=  [K]\backslash \mR_{t-1}$. 
Define $J_t = \min\{j\in \mH_1: j \in \mathcal{C}_{t-1}\}$, i.e., the non-null hypothesis with the smallest index 
that has not been rejected yet. 

For any $t > K$, 
\$ 
& \ind\{t \le \tau_*, A_t \in \mH_0, 
\hat m_{A_t,t-1} \le \sfa_{A_t,n_{A_t,t-1}}, 
\tilde m_{A_t,t} \le \sfb\}\\
& =\sum_{j\in \mH_1} \ind\{J_t = j, t \le \tau_*, A_t \in \mH_0, 
\hat m_{A_t,t-1} \le \sfa_{A_t,n_{A_t,t-1}}, 
\tilde m_{A_t,t} \le \sfb\}\\
& =: \sum_{j\in \mH_1} \sfC_{j,t}.
\$ 
Using the above notation, we have for any $x_j >0$ that 
\@\label{eq:t3-hp-decomp}
\P\Bigg(T_3 > \sum_{j\in \mH_1} x_j, \cA \cap \mathcal{B} \cap \cC\Bigg) 
& = \P\Bigg(\sum_{t=K+1}^\infty \sum_{j\in \mH_1} \sfC_{j,t} > \sum_{j\in \mH_1} x_j, \cA \cap \mathcal{B} \cap \cC\Bigg) \notag \\
& \le\sum_{j\in \mH_1} \P\Bigg(\sum_{t=K+1}^\infty \sfC_{j,t} > x_j, \cA \cap \mathcal{B}\cap \cC\Bigg).
\@
On the event $\cA \cap \cB \cap \cC$, we have 
\$ 
\big(M_{j,\ell}/\ell - \gamma_{j,\ell}\big)
\ge -\frac{\rho_\ell(V_{j,\ell},c_j)}{\sqrt{\ell}} \ge 
-\frac{\rho_\ell(V_{j,\ell},c_j) \vee \kappa_{j,\ell}}{\sqrt{\ell}}, 
\quad \forall j\in \mH_1, \forall \ell \ge 1.
\$
Next, we let $\sfD_{j,t} = \ind\{A_t = j\}$ and 
\$
\sfO_{j,t} = \frac{\P(\tilde m_{j,t} \le \sfb \given \mF_{t-1})}
{\P(\tilde m_{j,t} > \sfb \given \mF_{t-1})}
= R\bigg(\sqrt{\frac{n_{j,t-1}}{v_{j,t-1}}} \cdot 
\big(\sfb - \hat m_{j,t-1}\big)\bigg)
= R\bigg(\frac{\sqrt{n_{j,t-1}}}{\rho_{n_{j,t-1}}(V_{j,n_{j,t-1}},c_j) \vee \kappa_{j,n_{j,t-1}}} \cdot 
\big(\sfb - \hat m_{j,t-1}\big)\bigg).
\$
On $\cA \cap \cB \cap \cC$, we have 
\$
& \frac{\sqrt{n_{j,t-1}}}{\rho_{n_{j,t-1}}(V_{j,n_{j,t-1}},c_j) \vee \kappa_{j,n_{j,t-1}}}
\cdot (\sfb - \hat m_{j,t-1})\\
=~&   
\frac{\sqrt{n_{j,t-1}}}{\rho_{n_{j,t-1}}(V_{j,n_{j,t-1}},c_j) \vee \kappa_{j,n_{j,t-1}}}
\cdot (\sfb - \gamma_{j,n_{j,t-1}}) + 
\frac{\sqrt{n_{j,t-1}}}{\rho_{n_{j,t-1}}(V_{j,n_{j,t-1}},c_j) \vee \kappa_{j,n_{j,t-1}}}
\cdot(\gamma_{j,n_{j,t-1}} - \hat m_{j,t-1})\\
\le~& 
\frac{\sqrt{n_{j,t-1}}}{\kappa_{j,n_{j,t-1}} }
\cdot (\sfb - \gamma_{j,n_{j,t-1}})_+ +  1. 
\$ 
We therefore have for any $t > K$,  
\$ 
\sfO_{j,t} \le \begin{cases}
\mathsf{R}_j& n_{j,t-1} < \sfw_j,\\
R(1) & n_{j,t-1} \ge \sfw_j.
\end{cases}
\$
Letting $\lambda^- = 
\log\Big(1+\frac{1}{2\mathsf{R}_j}\Big)$ and 
$\lambda^+ =  \log\Big(1+\frac{1}{2R(1)}\Big)$,  
we define 
\$
& L_{j,t}^- = \exp\Bigg\{\sum_{s=K+1}^t
\ind\{n_{j,s-1} <  \sfw_j, \sfO_{j,s} \le \sfR_j\}\cdot 
\bigg(\lambda^- \cdot \sfC_{j,s} + 
\log \big(1 - \sfO_{j,s}(e^{\lambda^-}-1)\big) \cdot \sfD_{j,s}\bigg)\Bigg\},\\
& L_{j,t}^+ = \exp\Bigg\{\sum_{s=K+1}^t
\ind\{n_{j,s-1} \ge \sfw_j,\sfO_{j,s} \le R(1)\}\cdot 
\bigg(\lambda^+ \cdot \sfC_{j,s} + 
\log \big(1 - \sfO_{j,s}(e^{\lambda^+}-1)\big) \cdot \sfD_{j,s}\bigg)\Bigg\},
\$ 
where we note that $\sfO_{j,t}$ is adapted to $\mF_{t-1}$.
Next, since $\sfC_{j,t} \sfD_{j,t} = 0$, we can check 
\@ \label{eq:t3-hp}
\E[L_{j,t}^- \given \mF_{t-1}] 
& = L_{j,t-1}^- \cdot 
\E\Bigg[\exp\bigg\{\ind\{n_{j,t-1} <  \sfw_j, \sfO_{j,t} \le \sfR_j\}
\cdot \Big(\lambda^- \cdot \sfC_{j,t} + 
\log\big(1 - \sfO_{j,t}(e^{\lambda^-} - 1)\big) \cdot \sfD_{j,t}\Big)\bigg\} \bigggiven \mF_{t-1}\Bigg]\notag\\ 
& = L_{j,t-1}^- \cdot \Bigg\{\ind\{n_{j,t-1} \ge  \sfw_j\text{ or }\sfO_{j,t}>\sfR_j\} 
+ \ind\{n_{j,t-1} < \sfw_j, \sfO_{j,t} \le \sfR_j\}\times \notag \\
& \hspace{7em}\Big(1 + \P(\sfC_{j,t}=1 \mid \mF_{t-1}) \cdot (e^{\lambda^{-}} - 1) 
- \P(\sfD_{j,t}=1 \mid \mF_{t-1}) \cdot \sfO_{j,t}(e^{\lambda^{-}}-1)\Big)\Bigg\}.
\@
On the event $\{t \le \tau_*, J_t = j\}$, there is 
\$ 
\P(A_t \in \cH_0, \hat m_{A_t,t-1} \le \sfa_{A_t,n_{A_t,t-1}}, 
\tilde m_{A_t,t} \le \sfb  \mid \mF_{t-1})
& \le \P\Big(\max_{\ell \in \cC_{t-1}} \tilde m_{\ell,t} \le \sfb \mid \mF_{t-1}\Big) \\
& =  \P(\tilde m_{j,t} \le \sfb \mid \mF_{t-1}) \prod_{\ell \in \cC_{t-1}\backslash\{j\}} 
\P(\tilde m_{\ell,t} \le \sfb \mid \mF_{t-1}).
\$
On the same event, we also have 
\$ 
\P( A_t = j \mid \mF_{t-1})
& \ge 
\P(\tilde m_{j,t} > \sfb , \max_{\ell \in \cC_{t-1}\backslash\{j\}} \tilde m_{\ell,t} \le \sfb \mid \mF_{t-1})\notag\\
& =\P(\tilde m_{j,t} > \sfb \mid \mF_{t-1}) \prod_{\ell \in \cC_{t-1}\backslash\{j\}} \P(\tilde m_{\ell,t} \le \sfb \mid \mF_{t-1}).
\$
Combining the above, we have
\$ 
\P(\sfC_{j,t} = 1 \mid \mF_{t-1}) & =
\ind\{t \le \tau_*, J_t = j\} \cdot \P(A_t \in \cH_0, 
\hat m_{A_t,t-1} \le \sfa_{A_t,n_{A_t,t-1}}, 
\tilde m_{A_t,t} \le \sfb \mid \mF_{t-1})\\
& \le \sfO_{j,t} \cdot \P(A_t = j \given \mF_{t-1}) = \sfO_{j,t} 
\cdot \P(\sfD_{j,t} = 1\given \mF_{t-1}).
\$
As a result, 
\$ 
\eqref{eq:t3-hp} &\le L^-_{j,t-1} \cdot 
\Bigg\{\ind\{n_{j,s-1} \ge \sfw_j \text{ or }\sfO_{j,s} > \sfR_j\} 
+ \ind\{n_{j,s-1} <  \sfw_j, \sfO_{j,s} \le \sfR_j\}
\times \\
& \hspace{10em} \Big(1 + \P(\sfD_{j,t}=1\given \mF_{t-1}) \cdot \sfO_{j,t} 
\cdot (e^{\lambda^-} - 1)
- \P(\sfD_{j,t}=1 \mid \mF_{t-1}) \cdot \sfO_{j,t} \cdot (e^{\lambda^-}-1)\bigg)\Bigg\}\\
&= L_{j,t-1}^-.
\$
The above verifies that $\{L_{j,t}^-\}_{t \ge 1}$ is a supermartingale adapted to the filtration $\{\mF_t\}_{t \ge 1}$.
Similarly, we can verify that $\{L_{j,t}^+\}_{t \ge 1}$ is a supermartingale 
adapted to $\{\mF_t\}_{t \ge 1}$.

On $\cA \cap \cB \cap \cC$, 
we have the following decomposition
\$ 
\sum^\infty_{t = K+1} \sfC_{j,t} = 
\sum^\infty_{t = K+1} \sfC_{j,t}\cdot\ind\big\{n_{j,t-1} <\sfw_j, \sfO_{j,t} \le \sfR_j\big\} + 
 \sum^\infty_{t = K+1} \sfC_{j,t}\cdot\ind\big\{n_{j,t-1} \ge \sfw_j, \sfO_{j,t} \le R(1)\big\}.
\$
Returning to~\eqref{eq:t3-hp-decomp}, we have for any $x_j^->0$ 
and any $N \in \mathbb{N}_+$ that 
\begin{align}\label{eq:t3-hp-odds-25}    
& \P\Bigg(\sum_{t=K+1}^{\tau^* \wedge N} 
\sfC_{j,t} \cdot \ind\{n_{j,t-1} <  \sfw_j, \sfO_{j,t} \le \sfR_{j}\} > x_j^-, 
\cA \cap \cB\cap\cC\Bigg)\notag\\
& = \P\Bigg(\exp\bigg\{\lambda^- \sum_{t=K+1}^{\tau^* \wedge N} 
\sfC_{j,t} \cdot 
\ind\{n_{j,t-1} <  \sfw_j,\sfO_{j,t} \le \sfR_j\}\bigg\} > \exp(\lambda^- x_j^-), \cA \cap \cB\cap\cC
\Bigg)\notag\\
& = \P\Bigg(L_{j,\tau^* \wedge N}^- 
\exp\bigg\{-\sum_{s=K+1}^{\tau^* \wedge N} 
\ind\big\{n_{j,s-1} < \sfw_j, \sfO_{j,t} \le \sfR_j\big\}
\cdot \log\big(1 - \sfO_{j,s}(e^{\lambda^-}-1)\big) \cdot \sfD_{j,s}\bigg\} > \exp(\lambda^- x_j^-), 
 \cA \cap \cB\cap\cC\Bigg)\notag\\
& \le \P\Bigg(L_{j,\tau^* \wedge N}^- \exp\bigg\{-\sum_{t=K+1}^{\infty}
\ind\big\{n_{j,t-1} <  \sfw_j, \sfO_{j,t} \le \sfR_j \big\} \cdot
\log\big(1 - \sfO_{j,s}(e^{\lambda^-}-1)\big) \cdot \sfD_{j,s}\bigg\} > \exp(\lambda^- x_j^-),
\cA \cap \cB\cap\cC \Bigg).
\end{align}
On $\cA \cap \cB \cap \cC$, there is 
\@\label{eq:t3-hp-odds}
& \sum_{t=K+1}^\infty -\log\big(1 - \sfO_{j,t}(e^{\lambda^-}-1)\big) \cdot \sfD_{j,t}
\cdot \ind\big\{n_{j,t-1} <  \sfw_j, \sfO_{j,t} \le \sfR_j\}\notag \\
\le ~&  \frac{\log 2}{\mathsf{R}_j}
\sum_{t=K+1}^\infty \sfO_{j,t} \cdot \sfD_{j,t}\cdot 
\ind\big\{n_{j,t-1} < \sfw_j, \sfO_{j,t} \le \sfR_j\}\notag \\
\le ~& \frac{\log 2}{\mathsf{R}_j}\sum_{\ell < \sfw_j} \mathsf{R}_j
= \log 2 \cdot ({\mathsf{w}}_j-1),
\@
where the first step uses the fact that $-\log(1-x) \le 2\log 2 \cdot x$ for any $x \in [0,1/2]$. 
Using the above, we have 
\$ 
\eqref{eq:t3-hp-odds-25} 
\le \P\Big(L_{j,\tau^*\wedge N}^- \cdot \exp(\log 2 \cdot ({\mathsf{w}}_j-1)) > \exp(\lambda^- x_j^-)\Big)
\le \exp\big(-\lambda^- x_j^- + \log 2 \cdot ({\mathsf{w}}_j-1)\big). 
\$
On the other hand, we have for any $x_j^+ >0$ and any 
$N \in \mathbb{N}_+$ that 
\$ 
& \P\Bigg(\sum^{\tau^* \wedge N}_{t=K+1} \sfC_{j,t} 
\ind\{n_{j,t-1} \ge  \sfw_j, \sfO_{j,t} \le R(1)\} > x_j^+ ,
\cA \cap \cB \cap \cC\Bigg)\\
= ~& \P\Bigg(\exp\bigg(\lambda^+\sum^{\tau^* \wedge N}_{t=K+1} \sfC_{j,t} 
\ind\{n_{j,t-1} \ge \sfw_j,\sfO_{j,t}\le R(1)\}\bigg) 
> \exp(\lambda^+ x_j^+), \cA \cap \cB \cap \cC\Bigg) \\
= ~& \P\Bigg(L_{j,\tau^*\wedge N}^+ 
\cdot \exp\bigg\{ -\sum^{\tau^* \wedge N}_{s=K+1}  
\ind\{n_{j,s-1} \ge  \sfw_j, \sfO_{j,s} \le R(1)\}\cdot 
\log\big(1-\sfO_{j,s}(e^{\lambda^+}-1)\big) \cdot \sfD_{j,s}\bigg\} > \exp(\lambda^+ x_j^+), 
\cA \cap \cB \cap \cC\Bigg)\\
\le ~& \P\Bigg(L_{j,\tau^* \wedge N}^+\cdot \exp\bigg\{-\sum^\infty_{t=K+1} 
\ind\{n_{j,t-1} \ge  \sfw_j,\sfO_{j,t}\le R(1)\}\cdot 
\log\big(1-\sfO_{j,t}(e^{\lambda^+}-1)\big) \cdot \sfD_{j,t}\bigg\} > \exp(\lambda^+ x_j^+), 
\cA \cap \cB \cap \cC\Bigg).
\$
On $\cA \cap \cB \cap \cC$, we have 
\@\label{eq:t3-hp-odds-3}
& \sum^\infty_{t = K+1} -\log\big(1 - \sfO_{j,t}(e^{\lambda^+}-1)\big) 
\cdot \sfD_{j,t} \cdot \ind\{n_{j,t-1} \ge  \sfw_j,\sfO_{j,t} \le R(1)\}\notag\\
\le~& \frac{\log 2}{R(1)} \sum^\infty_{t=K+1} 
\sfO_{j,t}\sfD_{j,t} \cdot  \ind\{n_{j,t-1} \ge  \sfw_j, \sfO_{j,t}\le R(1)\}\notag\\
\le~& \frac{\log 2}{R(1)} 
\sum_{\ell = {\mathsf{w}}_j}^\infty 
R\Bigg(\frac{\sqrt{\ell}}{\rho_{\ell}(V_{j,\ell},c_j) \vee \kappa_{j,\ell}}
\cdot\big (\sfb - \gamma_{j,\ell}\big) + 1 \Bigg)\notag\\
\le~& \frac{\log 2}{R(1)}\sum_{\ell = {\mathsf{w}}_j}^\infty 
R\Bigg(-\frac{\sqrt{\ell}}{\rho_{\ell}(\overline{V}_{j,\ell},c_j) \vee \kappa_{j,\ell}}
\cdot\frac{\gamma_{j,\ell}}{3} + 1 \Bigg),
\@
where the last step uses that 
$\sfb \le 2\gamma_{j,\ell}/3$, for $\ell \ge  \sfw_j$.
Recalling the definition of $\sfh_k$, we have 
\$ 
\eqref{eq:t3-hp-odds-3} & \le \log 2 \cdot 
(\mathsf{h}_j - 1)
+\frac{2\log 2}{R(1)} 
\sum_{\ell = \mathsf{h}_j}^\infty 
\exp\Bigg\{-\frac{1}{2}\Bigg(1-\frac{\sqrt{\ell}\gamma_{j,\ell}}
{3(\rho_{\ell}(\overline{V}_{j,\ell},c_j) \vee \kappa_{j,\ell})}\Bigg)^2\Bigg\}\\
& \le \log 2  \cdot (\mathsf{h}_{j}-1)
+ \frac{2\log 2}{R(1)}\sum_{\ell = \mathsf{h}_{j}}^\infty 
\exp\Bigg\{-\frac{\ell \gamma_{j,\ell}^2}{36(\rho_{\ell}(\overline{V}_{j,\ell},c_j) \vee \kappa_{j,\ell})^2} + \frac{1}{2}\Bigg\} \\
& \le \log 2 \cdot ( \mathsf{h}_j -1) +  
\mathsf{H}_j. 
\$
Above, the first step follows from $R(z) \le 2e^{-z^2/2}$ for $z< 0$,
the second step uses $(a-1)^2 \ge a^2/2 - 1$ for $a \in \mathbb{R}$. 
Combining the above and letting $N\rightarrow \infty$ gives 
\$ 
& \P\Bigg(\sum_{t=K+1}^{\infty} \sfC_{j,t} > x_j, \cA \cap \cB \cap \cC\Bigg)
\le \exp\Big( -\lambda^- x_j^- +   \log 2 \cdot ({\mathsf{w}}_j-1)\Big)
+ \exp\Big(-\lambda^+ x_j^+ + \log2 \cdot (\mathsf{h}_j-1) + \mathsf{H}_j\Big).
\$
Taking 
$x_j^- = (\log 2\cdot ({\mathsf{w}}_j-1) + \log(4K/\delta))/\lambda^-$
and $x_j^+ = (\log 2 \cdot (\mathsf{h}_j-1)+\mathsf{H}_j +\log(4K/\delta))/\lambda^+$, 
we have
\$ 
\P\Bigg(\sum^\infty_{t = K+1} \mathsf{C}_{j,t} > x_j^++x_j^-, \cA \cap \cB \cap \cC\Bigg) \le \frac{\delta}{2K}.
\$
Taking a union over $j\in \mH_1$, we have 
\$ 
\P\bigg(T_3 > \sum_{j\in \mH_1}
\frac{\log 2 \cdot ({\mathsf{w}}_j-1) +\log(4K/\delta)}{\log(1+1/(2\mathsf{R}_j))} + 
\frac{\log 2 \cdot (\mathsf{h}_j-1)+\mathsf{H}_j +\log(4K/\delta)}{\log(1+1/(2R(1)))},
\cA \cap \cB\cap \cC \bigg) \le \frac{\delta}{2}
\$
Putting everything together, we have 
with probability at least 
$1-\delta - \sum_{j\in \mH_1}\sfL_j(\{\gamma_{j,\ell}\}) - \sum_{k\in \mH_0} 
\sfU_k(\{\eta_{k,\ell}\}) - \sum_{k\in[K]} \mathsf{V}_k(\{\overline{V}_{k,\ell}\})$ that 
\$
\tau^* & \lesssim  K + 
\sum_{j \in \mH_1}\Big\{ 
\mathsf{R}_j \cdot \big( \sfw_j + \log(K/\delta)\big) + 
\max\{\mathsf{h}_j, \sfl_j\}+ \mathsf{H}_j\Big\} + 
\sum_{j \in \mH_0} \mathsf{m}_j.
\$
The proof is complete.

\subsection{Proof of Theorem~\ref{thm:ss-power-hp}}
\label{app:proof-ss-power}
\paragraph{Proof of (1).}
Fix $k \in [K]$. By construction, $E_{k,t} \ge 0$. 
For any $t\ge 1$, we have
\$ 
\E_{H_k^{(0)}}[E_{k,t} \given \mF_{t-1}] & = E_{k,t-1} \cdot \E_{H_k^{(0)}}[e_{k,t}\given \mF_{t-1}]\\
& = E_{k,t-1} \cdot \E_{H_k^{(0)}}\Big[\ind\{A_t = k\} \cdot \frac{\mathrm{d}Q^\circ_{k}}{\mathrm{d}P^\circ_{k}}(Y_t(k)) + \ind\{A_t \neq k\} \cdot 1 \given \mF_{t-1}\Big]\\
& = E_{k,t-1} \cdot \Bigg(\P(A_t = k \given \mF_{t-1}) \cdot \E_{P^\circ_k}\bigg[\frac{\mathrm{d}Q^\circ_{k}}{\mathrm{d}P^\circ_{k}}(Y_t(k))\bigg] + \P(A_t \neq k \given \mF_{t-1})\Bigg)\\
& = E_{k,t-1}.
\$
This certifies that $\{E_{k,t}\}_{t\ge 1}$ is a nonnegative supermartingale with respect to $\{\mF_t\}$ under $H_k^{(0)}$ with $E_{k,0}=1$. 
Therefore, $\{E_{k,t}\}_{t \ge 1}$ is an e-process for $H_k^{(0)}$ on $\{\mF_t\}$.    

\paragraph{Proof of (2).} 
To instantiate Theorem~\ref{thm:general-power-hp}, for any $\ell \ge 1$, 
we take 
\$
\gamma_{k,\ell} = D_\KL(Q_k^\circ \| P_k^\circ) > 0, k \in \cH_1, \quad
\eta_{k,\ell} = -D_\KL(P_k^\circ \| Q_k^\circ) < 0, k \in \cH_0.
\$
Since $\eta_{k,\ell}$ and $\gamma_{k,\ell}$ are both constants, 
we drop the dependence on $\ell$ for simplicity.
For each $k \in [K]$, we let $c_k = \sigma_k^2$, $\overline{V}_{k,\ell} = V_{k,\ell} = \sigma_k^2 \ell$, 
and $\kappa_{k,\ell} = 0$, $\forall \ell \ge 1$. With this choice, we also have 
\$
\rho_\ell(V_{k,\ell},c_k) = \sqrt{2\Big(1+\frac{1}{\ell}\Big)\sigma_k^2 
\cdot \log\Big(\frac{4K}{\delta} \sqrt{1+\ell}\Big)}.
\$
We next focus on the non-nulls.
For $k\in \cH_1$, we have 
\$ 
\E_{Q_k^\circ}\Big[\log \frac{\di Q_k^\circ}{\di P_k^\circ}\big(Y_t(k)\big) \Biggiven \mF_{t-1}, A_t =k\Big]
= D_{\KL}(Q_k^\circ \| P_k^\circ) > 0. 
\$ 
As a result, for any $\ell \ge 1$, 
\$ 
S_{k,\ell} = \sum_{i=1}^\ell
\E[\log e_{k, t_k(i)} \given \mG^{(k)}_{i-1}] \equiv 
\ell \cdot D_{\KL}(Q_k^\circ\|P_k^\circ) = \ell \cdot \gamma_{k}.
\$
By definition, $\sfL_{k}(\{\gamma_{k,\ell}\}) = 0$ and $\sfw_k = 1$.
We then take $\sfw_k =1$, and as a result, $\sfR_j = R(1)$
and $\gamma^* = d_{\mathsf{LR}}$.
In addition, we have 
\$
\sfh_k \lesssim 
\frac{\sigma_k^2}{D_\KL(Q_k^\circ \| P_k^\circ)^2}\cdot 
\bigg\{\log\Big(\frac{K}{\delta}\Big)
+ \log\bigg[\frac{\sigma_k}{D_\KL(Q_k^\circ \| P_k^\circ)} \bigg]\bigg\}
\quad 
\text{ and }
\quad 
\sfl_k = \frac{2\log(K/\alpha)}{D_\KL(Q_k^\circ\|P_k^\circ)}.
\$
For $\mathsf{H}_k$, letting $c_+ = 1 \vee 288 \sigma_k^2/\gamma_k^2$, we have  
\$ 
\mathsf{H}_k 
= 2 \sum_{\ell=1}^\infty 
\exp\Big(-\frac{\ell \gamma_k^2}{144\sigma^2_k \log(8K\ell/\delta)}\Big) 
& \le 2 \cdot \Big\{c_+ +
\int_{c_+}^\infty e^{-\frac{\gamma_k^2 x}{144\sigma_k^2\log(8Kx/\delta) }} \di x\Big\}\\
& \stackrel{\mathrm{(a)}}{\le} 2 \cdot \Bigg\{e^{-\frac{\gamma_k^2}{144\sigma_k^2\log(8K/\delta)}}\vee\frac{288 \sigma_k^2}{\gamma_k^2} +
\frac{576 \sigma_k^2 }{\gamma_k^2}\bigg(\log\Big(\frac{8K}{\delta}\Big) + \log_+
\Big(\frac{288\sigma_k^2}{\gamma_k^2}\Big)\bigg)\Bigg\}\\
& \stackrel{\mathrm{(b)}}{\lesssim} \frac{\sigma_k^2}{\gamma_k^2}
\cdot\bigg(\log\Big(\frac{K}{\delta}\Big) + \log_+\Big(\frac{\sigma_k^2}{\gamma_k^2}\Big)\bigg)\\
& = \frac{\sigma_k^2}{D_\KL(Q_k^\circ \| P_k^\circ)^2} \cdot 
\bigg(\log\Big(\frac{K}{\delta}\Big) + \log_+\Big(\frac{\sigma_k}{D_{\mathsf{KL}}(Q_k^\circ\|P_k^\circ) }\Big) \bigg), 
\$
where step (a) uses Lemma~\ref{lem:integral}
and step (b) follows from the elementary inequality $e^{-x} \le 1/x$ for any $x > 0$.

We now turn to the nulls. 
For $k\in \cH_0$, we have
\$ 
\E_{P_k^\circ}\Big[\log \frac{\di Q_k^\circ}{\di P_k^\circ}\big(Y_t(k)\big) \Biggiven \mF_{t-1}, A_t = k\Big] = -D_{\KL}(P_k^\circ\|Q_k^\circ) < 0,
\$
As a result, for any $\ell \ge 1$,
\$
S_{k,\ell} = \sum_{i=1}^\ell
\E[\log e_{k, t_k(i)} \given \mG^{(k)}_{i-1}] \equiv \ell \cdot \eta_{k,\ell},
\$
which leads to $\sfU_k(\eta_k) = 0$ and $\mathsf{pos}(\eta_k)= 1$. We take $ \sfw_k = 1$.
and have
\$
\mathsf{m}_k \lesssim 
\frac{\log K  \cdot \sigma_k^2}{(d_{\mathrm{LR}} + D_\KL(P_k^\circ \| Q_k^\circ))^2} 
\cdot \Bigg\{  \log\Big(\frac{K}{\delta}\Big) + \log_+\Big(\frac{\sigma_k}{d_{\mathrm{LR}} + D_\KL(P_k^\circ \| Q_k^\circ)} \Big)\Bigg\}
\$
Taking these together and invoking Theorem~\ref{thm:general-power-hp}
gives the desired result.

\subsection{Proof of Theorem~\ref{thm:ss-lower-bound}}
\label{app:proof-ss-lower-bound}
Fix $T\ge 1$ and any algorithm that achieves anytime-valid FDR control at level $\alpha$, and
$\P(\TPP(\mR_T) = 1) \ge 1-\delta$ for any problem instance.
By definition, we have 
\$ 
\P(\mH_1 \subseteq \mR_T) \ge 1-\delta.
\$
On the event  $\{\mR_T \cap \mH_0 \neq \varnothing, \mH_1 \subseteq \mR_T\}$, 
we have at least one false discovery, 
and therefore 
\$ 
\FDP(\mR_T) = \frac{|\mR_T \cap \mH_0|}{|\mR_T \cap \mH_1| + |\mR_T \cap \mH_0|}
\ge \frac{1}{|\mR_T \cap \mH_1| + 1} = \frac{1}{|\mH_1|+1},
\$ 
where the last equality follows from the fact that $\mH_1 \subseteq \mR_T$.
As a result, 
\$ 
\P\big(\mR_T \cap \mH_0 \neq \varnothing, \mH_1 \subseteq \mR_T\big) 
\le \big(|\mH_1|+1\big) \cdot \E[\FDP(\mR_T)] \le (|\mH_1|+1) \cdot \alpha,
\$
where we have used the anytime-valid FDR control property in the last inequality.
Combining the above two displays, we have
\$ 
\P(\mR_T = \mH_1) = \P(\mH_1 \subseteq \mR_T) - \P(\mR_T \cap \mH_0 \neq \varnothing, \mH_1 \subseteq \mR_T) 
\ge 1-\delta - (|\mH_1|+1) \cdot \alpha.
\$
Note that the algorithm satisfies the FDR control and power guarantee for any problem instance.
In particular, fixing $j\in \mH_1$, 
we can consider the problem instance with $\mH_1^{(j)} := \mH_1 \backslash \{j\}$.
We use $\P^{(j)}$ to denote the probability measure under the problem instance with $\mH_1^{(j)}$.
Under $\P^{(j)}$, the null hypothesis $H_j^{(0)}$ is true, and therefore on the event $\{\mR_T = \mH_1\}$, 
\$ 
\FDP(\mR_T) = \frac{1}{|\mH_1|}.
\$
As a result, 
\$ 
\P^{(j)}(\mR_T = \mH_1) \le |\mH_1| \cdot \E^{(j)}[\FDP(\mR_T)] \le |\mH_1| \cdot \alpha,
\$
where we have used the anytime-valid FDR control property under $\P^{(j)}$ in the last inequality.

On the other hand, we have 
\$ 
D_{\KL}(\P^T \,\|\, \P^{(j),T})
= \E\bigg[\log \frac{\di \P}{\di \P^{(j)}}(A_1,Y_1,\ldots,A_T,Y_T)\bigg]
= \E[n_{j,T}] \cdot D_{\KL}(Q_j^\circ \,\|\, P_j^\circ).
\$
By the data processing inequality for KL divergence,
\$ 
D_{\KL}\big(\P(\mR_T = \mH_1) \,\|\, \P^{(j)}(\mR_T = \mH_1)\big) \le D_{\KL}(\P^T \,\|\, \P^{(j),T})
= \E[n_{j,T}] \cdot D_{\KL}(Q_j^\circ \,\|\, P_j^\circ),
\$
Rearranging the above inequality, we have
\$ 
\E[n_{j,T}] \ge \frac{D_{\KL}(\P(\mR_T = \mH_1) \,\|\, \P^{(j)}(\mR_T = \mH_1))}{D_{\KL}(Q_j^\circ \,\|\, P_j^\circ)}
\ge \frac{\KL_+( 1-\delta - (|\mH_1|+1) \cdot \alpha, |\mH_1| \cdot \alpha)}{D_{\KL}(Q_j^\circ \, \| \,P_j^\circ)}.
\$

Similarly, for any $k\in \mH_0$, we can consider the problem instance with $\mH_1^{(k)} := \mH_1 \cup \{k\}$, 
and use $\P^{(+k)}$ to denote the probability measure under this problem instance.
Under $\P^{(+k)}$, the alternative hypothesis $H_k^{(1)}$ is true,  
\$ 
\P^{(+k)}(\mR_T = \mH_1) \le \P^{(+k)}(\mH_1\cup \{k\} \not\subseteq \mR_T) \le \delta,
\$
while $\P(\mR_T = \mH_1) \ge 1-\delta - (|\mH_1|+1) \cdot \alpha$.
We then have 
\$ 
D_{\KL}(\P^T \,\|\, \P^{(+k),T})
= \E\bigg[\log \frac{\di \P}{\di \P^{(+k)}}(A_1,Y_1,\ldots,A_T,Y_T)\bigg] 
= \E[n_{k,T}] \cdot D_{\KL}(P_k^\circ \,\|\, Q_k^\circ).
\$
By the data processing inequality for KL divergence, we have
\$
D_{\KL}\big(\P(\mR_T = \mH_1) \,\|\, \P^{(+k)}(\mR_T = \mH_1)\big) \le D_{\KL}(\P^T \,\|\, \P^{(+k),T})
= \E[n_{k,T}] \cdot D_{\KL}(P_k^\circ \,\|\, Q_k^\circ),
\$
which leads to
\$ 
\E[n_{k,T}] \ge \frac{D_{\KL}(\P(\mR_T = \mH_1) \,\|\, \P^{(+k)}(\mR_T = \mH_1))}{D_{\KL}(P_k^\circ \,\|\, Q_k^\circ)} 
\ge \frac{\KL_+(1-\delta - (|\mH_1|+1) \cdot \alpha, \delta)}{D_{\KL}(P_k^\circ \,\|\, Q_k^\circ)}.
\$
Summing over $j\in \mH_1$ and $k\in \mH_0$, we have
\begin{align*}
T & = \sum_{j\in \mH_1} \E[n_{j,T}] + \sum_{k\in \mH_0} \E[n_{k,T}]\\
& \ge \sum_{j\in \mH_1} \frac{\KL_+( 1-\delta - (|\mH_1|+1) \cdot \alpha, |\mH_1| \cdot \alpha)}{D_{\KL}(Q_j^\circ \, \| \,P_j^\circ)} + \sum_{k\in \mH_0} \frac{\KL_+(1-\delta - (|\mH_1|+1) \cdot \alpha, \delta)}{D_{\KL}(P_k^\circ \,\|\, Q_k^\circ)}.
\end{align*}

\subsection{Proof of Theorem~\ref{thm:ns-power-hp}}\label{sec:proof-ns-power-hp}
\paragraph{Proof of (1).}
Fix $k\in[K]$. By construction, $E_{k,t} \ge 0$ for all $t\ge 1$.
For any $t\ge 1$, we have for any $P \in \mP_k$ that 
\$
\E_{P}\big[E_{k,t} \given \mF_{t-1}, A_t\big]
& = E_{k,t-1} \cdot \E_P[e_{k,t} \given \mF_{t-1}, A_t\big] \\ 
& = E_{k,t-1} \cdot \bigg(\ind\{A_t \neq k\}  + 
\ind\{A_t = k\} \cdot \E_P\Big[\frac{\di Q_k^\circ}{\di P_k^*}(Y_t(k)) \biggiven \mF_{t-1},A_t\Big] \bigg)\\
& = E_{k,t-1} \cdot \bigg(\ind\{A_t \neq k\}  + 
\ind\{A_t = k\} \cdot \E_P\Big[\frac{\di Q_k^\circ}{\di P_k^*}(Y_t(k))\Big] \bigg)\\
& \le E_{k,t-1},
\$
where the last step is by construction.
This shows that $\{E_{k,t}\}_{t\ge 0}$ is a nonnegative supermartingale under any $P \in \mP_k$
with $\E_{k,0} = 1$, and therefore an e-process.

\paragraph{Proof of (2).}
The proof is similar to that of 
Theorem~\ref{thm:ss-power-hp} (2), with 
$P_k^\circ$ replaced by $P_k^*$ and noting that 
\$
0 \ge \E_{P_k}\Big[\log \frac{\di Q_k^\circ}{\di P_k^*}\Big]
= \E_{P_k}\Big[\log \frac{\di Q_k^\circ}{\di P_k}\Big]
+\E_{P_k}\Big[\log \frac{\di P_k}{\di P_k^*}\Big]
= - D_{\KL}(P_k \| Q_k^\circ) + D_{\KL}(P_k \| P_k^*). 
\$
The rest
is omitted for brevity.

\subsection{Proof of Theorem~\ref{thm:comp-power-hp}}
\label{app:proof-comp-power-hp}
\paragraph{Proof of (1).}
Fix any $k\in[K]$.
By construction, $E_{k,t} > 0$ and $E_{k,0}=1$.
For any $t \ge 1$, 
\$
\E_{P_k^\circ}[E_{k,t} \given \mF_{t-1}] & = 
E_{k,t-1}\cdot 
\E_{P_k^\circ}\bigg[\ind\{A_t \neq k\} + \ind\{A_t = k\} 
\frac{\di \hat Q_{k,n_{k,t-1}}}{\di P_{k}^\circ} \Biggiven \mF_{t-1} \bigg]\\
& =E_{k,t-1} \cdot \Bigg\{\P_{P_k^\circ}(A_t \neq k\given \mF_{t-1}) + 
\E_{P_k^\circ}\bigg[\ind\{A_t = k\} \cdot \E_{P_k^\circ}
\bigg[ \frac{\di \hat Q_{k,n_{k,t-1}}(Y_t(A_t))}{\di P_{k}^\circ} \Biggiven \mF_{t-1},A_{t}\bigg]
\bigggiven \mF_{t-1}\bigg]\Bigg\}\\
& = E_{k,t-1}.
\$ 
In the last step, we use the fact that $\hat Q_{k,t}$ is predictable.
Together, it is verified that $\{E_{k,t}\}_{t\ge 1}$ is an e-process.

\paragraph{Proof of (2).}
For any $k\in \mH_1$, let 
\$ 
\gamma_k^* := 
D_{\KL} \big(P_k \| P_k^\circ\big) > 0
\$
For any $\ell \ge 2$,  we have 
\$ 
\E\big[Z_{k,\ell} \given \mG^{(k)}_{\ell-1}\big] 
& = \E\Bigg[\log \frac{\di \hat Q_{k,\ell-1}}{\di P_k^\circ}(Y_{t_k(\ell)}(k)) \bigggiven \mG^{(k)}_{\ell-1} \Bigg] \\
& = \E\Bigg[\log \frac{\di \hat Q_{k,\ell-1}}{\di P_k}(Y_{t_k(\ell)}(k)) \bigggiven \mG^{(k)}_{\ell-1} \Bigg] 
+ \E\Bigg[\log \frac{\di P_{k}}{\di P_k^\circ}(Y_{t_k(\ell)}(k)) \bigggiven \mG^{(k)}_{\ell-1} \Bigg]\\ 
& = \gamma_k^* - D_{\KL}(P_k \| \hat Q_{k,\ell-1}), 
\$
and $\E[Z_{k,1}] =0$.
By assumption, we have with probability at least $1-\delta/K$ that
$V_{k,\ell} \le \overline{V}_{k,\ell}$, $\forall \ell \ge 1$ and 
\$ 
\frac{1}{\ell}\sum^{\ell}_{i=2} D_{\KL}(P_k \| \hat Q_{k,i-1}) \le \mathsf{r}_{k,\ell},
\quad \forall \ell \ge 2.
\$
We work on the above event in the following.
To instantiate Theorem~\ref{thm:general-power-hp}, 
we take 
\$
\gamma_{k,\ell} = \begin{cases}
0 \wedge (\gamma_k^*/2 -  \mathsf{r}_{k,2}) & \ell  = 1\\ 
\gamma_k^*/2 -  \mathsf{r}_{k,\ell} & \ell > 1,
\end{cases} 
\$
and define 
\$
{\sfw}_k = \mathsf{HT}(\{\mathsf{r}_{k,\ell}\}, \gamma_k^*/9) \vee 2.
\$
It is straightforward to see that $\sfw_k \ge \mathsf{pos}(\{\gamma_{k,\ell}\})$;
since $\kappa_{k,\ell} = \sqrt{\ell}\mathsf{r}_{k,\ell}$, 
we have for all $\ell \le \tilde{\sfw}_j$, 
\$
R\Bigg(\frac{\sqrt{\ell}}{\kappa_{k,\ell}} \cdot 
(\gamma^* - \gamma_{k,\ell}) + 1\Bigg)
\le R\Bigg(\frac{\sqrt{\ell}}{\kappa_{k,\ell}} \cdot 
\mathsf{r}_{k,\ell} + 1\Bigg)
\le R(2),
\$
and as a result $\sfR_j = O(1)$.

For $\ell \ge \sfw_k$, $\gamma_{k,\ell} \ge \gamma_k^*/3$;
in addition, since $\rho_{\ell}(\overline{V}_{k,\ell},c_k)/\sqrt{\ell}$ is non-increasing, we have
\$ 
\sfh_k \le \inf
\Bigg\{\ell \ge \sfw_k: 
 (\gamma^*_k)^2 \ge 81\cdot (\rho_\ell(\overline{V}_{k,\ell}, c_k)^2 /\ell)
\vee \mathsf{r}_{k,\ell}^2\Bigg\}
\lesssim \sfw_k \vee 
\mathsf{HT}\big(\{\rho_\ell(\overline{V}_{k,\ell},c_k)/\sqrt{\ell}\},\gamma_k^*/9\big).
\$
Similarly, 
\$ 
\sfl_k \le 
\inf\big\{\ell \ge \tilde \sfw_k: \ell \gamma_k^* \ge 6\log(K/\alpha)\big\}
\lesssim  \sfw_k \vee 
\frac{\log(K/\alpha)}{\gamma_k^*}.
\$
For $\sfH_k$, there is 
\$ 
\sfH_k \le 
2 \sum^\infty_{\ell =  \sfw_k} 
\exp\Bigg\{-\frac{\ell (\gamma_k^*)^2}{1296 
(\rho_\ell(\overline{V}_{k,\ell},c_k)^2 \vee \ell\mathsf{r}_{k,\ell}^2})\Bigg\}
& \le 
2 \sum^\infty_{\ell = \sfw_k} 
\exp\Bigg\{-\frac{(\gamma_k^*)^2}
{1296 \cdot(\rho_\ell(\overline{V}_{k,\ell},c_k)^2/\ell \vee \mathsf{r}_{k,\ell}^2})\Bigg\}.
\$
For $k\in\mH_0$, we take 
$\eta_{k,\ell} \equiv 0$, and therefore $\mathsf{neg}(\{\eta_{k,\ell}\}) = 1$. 
We also let  
\$
\sfw_k = \mathsf{HT}(\{\mathsf{r}_{k,\ell}\},\gamma^*/(10\sqrt{\log(4K)})) \vee 2.
\$ 
Recalling the monotonicity of $\rho_\ell(\overline{V}_{k,\ell},c_k)/\sqrt{\ell}$
yields
\$ 
\mathsf{m}_k & \le \inf\Bigg\{\ell \ge  \sfw_k: 
(\gamma^*)^2 \ge 18 \cdot \log\Big(\frac{4K}{\log 2}\Big) 
\cdot (\rho_\ell(\overline{V}_{k,\ell},c_k)^2/\ell)
\vee \mathsf{r}_{k,\ell}^2\Bigg\} \\
& \lesssim  \sfw_k \vee \mathsf{HT}
\bigg(\Big\{\frac{\rho_\ell(\overline{V}_{k,\ell},c_k)}{\sqrt{\ell}}\Big\},
\frac{\gamma^*}{10\sqrt{\log(4K)}}\bigg)\\
& \le  \sfw_k \vee \mathsf{HT}
\bigg(
\Big\{\frac{\rho_\ell(\overline{V}_{k,\ell},c_k)}{\sqrt{\ell}}\Big\}, 
\frac{d_\KL}{30\sqrt{\log(4K)}}\bigg),
\$
where the last step is because 
\$ 
\gamma^* = \min_{j \in \mH_1} \min_{\ell \ge \tilde \sfw_j}
\gamma_{j,\ell}
\ge \min_{j \in \mH_1} \frac{1}{3}\gamma_j^* = \frac{1}{3}d_\KL.
\$
Taking a union bound over $k \in [K]$
completes the proof.

\subsection{Proof of Corollary~\ref{cor:gaussian}}
\label{appd:proof-gaussian}
Fix $\delta,\alpha  \in (0,1)$ and $K \ge 2$. For $k\in [K]$, 
we write for any $\ell \ge 1$
\$ 
X_\ell = \frac{Y_{t_k(\ell)} - \theta_k}{\sigma_k}
\text{ and }
\overline{X}_{1:\ell} = \frac{1}{\ell}\sum^\ell_{i=1}X_i.
\$
With the above notation, we can write for any $\ell \ge 1$, 
\$ 
\sum^\ell_{i=1} \frac{\|\hat \theta_{k,i} - \theta_k\|_2^2}{\sigma_k^2}
= \sum_{j=1}^d\sum^\ell_{i=1} \frac{\|\hat \theta_{k,i,j}-\theta_{k,j}\|^2_2}{\sigma_k^2},
\$ 
where $\hat \theta_{k,i,j}$ and $\theta_{k,j}$ denotes the $j$-th 
coordinate of $\hat \theta_{k,i}$ and $\theta_k$, respectively.
Letting $\overline{X}_{1:\ell,j}$ denote the $j$-th coordinate 
of $\overline{X}_{1:\ell}$, we have 
\$
\sum_{j=1}^d\sum^\ell_{i=1} \frac{\|\hat \theta_{k,i,j}-\theta_{k,j}\|^2_2}{\sigma_k^2}
= \sum^d_{j=1} \sum^\ell_{i=1} (\overline{X}_{1:i,j})^2
& = \sum^d_{j=1}\sum^\ell_{i=1} 
\frac{\bmX_j^\top{\bm 1}_i {\bm 1}_i^\top\bmX_j}{i^2}\\
& = \sum_{j=1}^d\bmX_j^\top\Big(\sum^\ell_{i=1}\frac{{\bm 1}_i {\bm 1}_i^\top}{i^2}\Big)\bmX_j
:= \sum_{j=1}^d\bmX_j^\top \Sigma \bmX_j,
\$
where $\bmX = (X_1,\dots,X_\ell)^\top$, 
$\bmX_j$ is the $j$-th column of $\bmX$, 
and ${\bm 1}_i$ is a $\ell$-dimensional vector 
with its first $i$ elements being one and the 
remaining elements being zero.
We can further write 
\$ 
\sum_{j=1}^d \bmX_j^\top \Sigma \bmX_j 
= (\bmX_1^\top, \bmX_2^\top,\dots,\bmX_d^\top)
\mathsf{diag}(\Sigma,\Sigma,\dots,\Sigma) 
(\bmX_1^\top, \bmX_2^\top,\dots,\bmX_d^\top)^\top 
= : \tilde\bmX^\top \tilde \Sigma \tilde \bmX.
\$
By definition, $\tilde{X} \sim \mN(0, I_{d\times \ell})$ and 
\$ 
\mathsf{tr}(\tilde\Sigma) = \sum_{j=1}^d\sum^\ell_{i=1} \frac{1}{i} \le 
d\cdot(1 + \log \ell), 
\quad 
\|\tilde \Sigma\|_{\mathsf{op}} \le 4, 
\quad 
\|\tilde \Sigma\|_{\text{F}}^2 \le 4d(1+\log \ell).
\$
By Proposition 1 of~\citet{hsu2012tail}, we have for 
any $x>0$ that 
\$ 
\P\big(\tilde\bmX^\top \tilde \Sigma \tilde \bmX 
> d\cdot(1+\log \ell) + 4 \sqrt{d\log(1+\ell)x} + 8x\big) \le e^{-x}.
\$
For any $m\ge 0$, and $\ell = 2^m$, we 
let $x_m = \log(2K/\delta) + (m+1)\log 2$;
taking a union bound over $m\ge 0$, we have with probability
at least $1 - \delta/(2K)$ that 
\$ 
\sum^{2^m}_{i=1} \frac{\|\hat \theta_{k,i}-\theta_k\|_2^2}{\sigma_k^2}
\le (3d+10\log(K/\delta))\cdot (m+2), \quad \forall m\ge 0.
\$
On the above event, for any $\ell \ge 2$, there exists 
$m_0 \ge 0$ such that $2^{m_0} \le \ell \le 2^{m_0+1}$.
As a result, we have 
\@\label{eq:concentration}
\sum^\ell_{i=1} \frac{\|\hat \theta_{k,i}-\theta_k\|_2^2}{\sigma_k^2}
\le
\sum^{2^{m_0+1}}_{i=1} \frac{\|\hat \theta_{k,i}-\theta_k\|_2^2}{\sigma_k^2}
& \le (3d+10\log(K/\delta))(m_0+3) \notag\\
& \le (3d+10\log(K/\delta))\cdot (2\log \ell + 3).
\@
We work on the event in~\eqref{eq:concentration} hereafter.

Note that 
\$ 
\frac{1}{\ell} \sum^\ell_{i=2}
D_\KL(P_{\theta_k} \| P_{\hat \theta_{k,i-1}})
= \frac{1}{2\sigma_k^2\ell}
\sum^\ell_{i=2} \|\hat \theta_{k,i-1} - \theta_k\|_2^2 
\le \frac{(3d+10\log(K/\delta))\cdot(\log \ell +3) }{2\ell} 
=: \mathsf{r}_{k,\ell}. 
\$
As a result, we have 
\$ 
& \mathsf{HT}\Big(\{\mathsf{r}_{k,\ell}\},\frac{D_\KL(P_k\|P_k^\circ)}{9}\Big)
\lesssim \frac{A_\delta}{\eta_k^2}
\cdot \log\Big(\frac{A_\delta}{\eta_k^2}\Big), \quad \forall k \in\mH_1\\
& \mathsf{HT}\Big(\{\mathsf{r}_{k,\ell}\},\frac{d_{\mathsf{LR}}}{30\sqrt{\log(4K)}}\Big)
\lesssim \frac{A_\delta\cdot \sqrt{\log K}}{d_{\text{LR}}}
\cdot \log\Bigg(\frac{A_\delta\cdot \sqrt{\log K}}{d_{\mathsf{LR}}}\Bigg), \quad \forall k \in\mH_0,
\$
where $A_\delta = d+\log(K/\delta)$
and $\eta_k = \|\theta_k - \theta_k^\circ\|_2/\sigma_k$.

Next, it is straightforward to check 
that for any $k\in[K]$ and $\ell \ge 2$,  
\$ 
V_{k,\ell} = \sum^\ell_{i=2}
\frac{\|\hat \theta_{k,i-1} - \theta_k^\circ\|_2^2}{\sigma_k^2}
&\le \sum^\ell_{i=2} \frac{2\|\hat \theta_{k,i-1} - \theta_k\|_2^2 
+ 2\|\theta_k - \theta_k^\circ\|_2^2}{\sigma_k^2}\\
&\lesssim A_\delta\log \ell
+ \eta_k^2\ell =: \overline{V}_{k,\ell}.
\$
We can then check that 
\$ 
\rho_\ell(\overline{V}_{k,\ell},1)/\sqrt{\ell} 
\lesssim \frac{\log(Kd\ell/\delta)}{\ell}  
\sqrt{A_\delta}
\vee 
\eta_k \sqrt{\frac{\log(\eta_k K\ell/\delta)}{\ell}}
\$
With the above, we can verify that 
\$
& \mathsf{HT}\Bigg(\bigg\{\frac{\rho_{\ell}(\overline{V}_{k,\ell},1)}{\sqrt{\ell}}\bigg\}, \frac{D_\KL(P_k\|P_k^\circ)}{9}\Bigg)
\lesssim 
\frac{\sqrt{A_\delta}}{\eta_k^2}
\cdot\log\Bigg(\frac{Kd }
{\delta \eta_k}\Bigg),\quad \forall k \in \mH_1\\
& \mathsf{HT}\Bigg(\bigg\{\frac{\rho_{\ell}(\overline{V}_{k,\ell},1)}{\sqrt{\ell}}\bigg\}, \frac{d_{\mathsf{LR}}}{30\sqrt{\log(4K)}}\Bigg)
\lesssim 
\frac{\sqrt{A_\delta\log K}}{d_{\mathsf{LR}}}
\cdot\log\Bigg(\frac{Kd }
{\delta d_{\mathsf{LR}}}\Bigg),\quad \forall k \in \mH_0.
\$
Finally, for any $j\in \mH_1$, we have 
\$ 
\sfH_k \lesssim \frac{A_\delta}{\eta_k^2}
\cdot\log\Big(\frac{Kd}{\delta \eta_k^2}\Big),
\$
where the inequality uses Lemma~\ref{lem:integral}.
Putting everything together and letting $\delta = \delta'/2$, 
we have with probability at least $\P(\TPP(\mR_T))\ge 1-\delta'$, 
if 
\$ 
T \gtrsim K\log\Big(\frac{K}{\delta'}\Big) & + 
\sum_{j \in \mH_1} \Bigg\{\frac{A_{\delta'}}{\eta_k^2}
\cdot \log\Big(\frac{Kd}{\delta'\eta_k^2}\Big)
+ \frac{\log(K/\alpha)}{\eta_k^2}\Bigg\}  + |\mH_0|\cdot \frac{A_{\delta'}\cdot \sqrt{\log K}}{d_{\mathsf{LR}}}
\cdot \log\Big(\frac{Kd}{\delta' d_{\mathsf{LR}}}\Big).
\$

\section{Upper bound on the expected stopping time}\label{sec:upper_bound_expected_time}
This section collects parallel results to the high-probability bounds 
in the main text, but for the expected stopping time $\E[\tau_*]$ of 
the e-PS algorithm. 
For notational simplicity, we let
$v_{k,t} = \nu_{k,n_{k,t}}$, for any $t \ge 1$.

Given any sequence $\{x_\ell\}$, we define the following key quantities.
\begin{equation}\label{eq:general-power-terms}
\begin{aligned} 
& \widetilde{\mathsf{L}}_k(\{x_\ell\}) = 
\sum^\infty_{\ell = 1} \P\Big(\frac{S_{k,\ell}}{\ell} < x_\ell\Big), 
\quad \widetilde{\mathsf{U}}_k(\{x_\ell\}) 
= \sum^\infty_{\ell = 1} \P\Big(\frac{S_{k,\ell}}{\ell} > x_\ell\Big),\\ 
& \widetilde{\mathsf{V}}_k(\{x_{k,\ell}\}) 
= \sum^\infty_{\ell = 1} \P(V_{k,\ell} > x_{k,\ell}),
\quad 
\widetilde{\mathsf{S}}_k(\{x_\ell\}) = \sum^\infty_{\ell = 1}
\P(\nu_{k,\ell} > x_{\ell})
\end{aligned}
\end{equation}

The following theorem provides an upper bound on $\E[\tau_*]$ for the e-PS algorithm 
in terms of the above quantities.
\begin{theorem}
\label{thm:general-power}
Fix $\alpha,\delta \in (0,1)$.
Consider $K$ hypotheses $\{H_k^{(0)}, H_k^{(1)}\}_{k\in [K]}$ and assume $\mH_1 \neq \varnothing$.
For each $k\in[K]$, let $\{E_{k,t}\}_{t\ge 1}$ be a valid e-process 
for $H_k^{(0)}$, adapted to $\{\mF_t\}$, in 
the form of~\eqref{eq:product} with increment e-values $\{e_{k,t}\}$ satisfying Assumption~\ref{assump:likelihood-e}.

For any $k\in[K]$, consider a deterministic, positive and non-decreasing 
sequence $\{\overline{V}_{k,\ell}\}_{\ell \ge 1}$ for variance upper bound, 
a predictable sequence $\{\nu_{k,\ell}\}_{\ell \ge 1}$ for the sampling variance,
and a deterministic sequence $\{\overline{\nu}_{k,\ell}\}_{\ell \ge 1}$ 
for the upper bound on $\{\nu_{k,\ell}\}$. Define for any sequence $\{x_\ell\}$
and $w \in \mathbb{N}_+$
\$ 
\widetilde{\sfH}_k(\{x_\ell\},w) = 
\sum^\infty_{\ell = w} \exp\Big(-\frac{\ell^2 x_{\ell}^2}{2 \overline{V}_{k,\ell}}\Big).
\$
For each $k\in \mH_1$, consider any non-decreasing sequence $\{\gamma_{k,\ell}\}$
with $\mathsf{pos}(\{\gamma_{k,\ell}\}) < \infty$. 
Fixing any burn-in period $\sfw_k \ge \mathsf{pos}(\{\gamma_{k,\ell}\})$, 
define $\gamma^* = \min_{k\in \mH_1, \ell \ge \sfw_k} \gamma_{k,\ell}$
and
\$ 
& \tilde \sfl_k = \inf\bigg\{\ell \ge \sfw_k: n \gamma_{k,n} \ge 2 \log\Big(\frac{K}{\alpha}\Big), 
\forall  n \ge \ell \bigg\},\quad 
\widetilde{\sfR}_k = \sum^\infty_{\ell = 1} 
\E\bigg[R\bigg(\sqrt{\frac{\ell}{\nu_{k,\ell }}}\Big(\frac{2\gamma^*}{3}
- \frac{M_{k,\ell}}{\ell} \Big) \bigg)\bigg]. 
\$
For each $k\in \mH_0$, consider any non-increasing sequence $\{\eta_{k,\ell}\}_{\ell \ge 1}$
with $\mathsf{neg}(\{\eta_{k,\ell}\}) < \infty$. Fixing any burn-in period 
$\sfw_k \ge \mathsf{neg}(\{\eta_{k,\ell}\})$, define 
\$ 
\widetilde{\mathsf{m}}_k = \inf\Big\{\ell \ge \sfw_k: n (\gamma^* - \eta_{k,\ell})^2 \ge 18 \overline{\nu}_{k,n} \log(4K), \forall n \ge \ell\Big\}.
\$
The e-PS algorithm with variance parameters 
$v_{k,t} = \nu_{k,n_{k,t}}$ satisfies
\$ 
\E[\tau^*] & \lesssim 
 K + \sum_{k\in \mH_1}\big(\tilde{\sfl}_k + \widetilde{\sfL}_k(\{\gamma_{k,\ell}\}) + 
\widetilde{\mathsf{H}}_k(\{\gamma_{k,\ell}/2\}, \sfw_k) + \widetilde{\sfR}_k\big)\\
& \qquad \qquad + \sum_{k\in\mH_0} \big(\widetilde{\mathsf{m}}_k + \widetilde{\sfU}_k(\{\eta_{k,\ell}\} + \widetilde{\mathsf{S}}_k(\{\overline{\nu}_{k,\ell}\})) + 
\widetilde{\mathsf{H}}_k(\{(\gamma^* - \eta_{k,\ell})/3\}, \sfw_k)  \big)
+ \sum_{k\in [K]} \widetilde{\mathsf{V}}_k(\{\overline{V}_{k,\ell}\})
\$
\end{theorem}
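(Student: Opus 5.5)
We bound $\E[\tau_*]$ with the same decomposition as in the proof of Theorem~\ref{thm:general-power-hp}. We write $\tau_* = \sum_{k\in\mH_1} n_{k,\tau_*} + \sum_{k\in\mH_0} n_{k,\tau_*}$ and bound the expectation of each piece. The high-probability events $\cA,\cB,\cC$ are replaced by summed per-$\ell$ tail probabilities, which is where $\widetilde{\sfL}_k$, $\widetilde{\sfU}_k$, $\widetilde{\sfV}_k$, $\widetilde{\sfS}_k$ enter. The basic device is $\E[N]\le \ell_0 + \sum_{\ell\ge \ell_0}\P(N>\ell)$ for counting variables $N$.

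\textbf{Non-null arms.} Proposition~\ref{prop:nest} gives $n_{k,\tau_*}\le T_k := \inf\{\ell: M_{k,\ell}\ge \log(K/\alpha)\}$. For $\ell\ge \tilde\sfl_k$ we have $\ell\gamma_{k,\ell}\ge 2\log(K/\alpha)$, so
\[
\P(T_k>\ell)\le \P(S_{k,\ell}<\ell\gamma_{k,\ell}) + \P\big(M_{k,\ell}-S_{k,\ell} < -\ell\gamma_{k,\ell}/2,\ V_{k,\ell}\le \overline V_{k,\ell}\big) + \P(V_{k,\ell}>\overline V_{k,\ell}).
\]
To bound the middle term, I use the fixed-$\lambda$ supermartingale $\mathsf{\Gamma}_{k,\ell}(\lambda)$ with $\lambda = \ell\gamma_{k,\ell}/(2\overline V_{k,\ell})$, applied at a single time $\ell$ rather than through a mixture. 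This gives the bound $\exp(-\ell^2\gamma_{k,\ell}^2/(8\overline V_{k,\ell}))$. I will have to absorb the restriction $V_{k,\ell}\le\overline V_{k,\ell}$ into the exponent, which is legitimate because the exponent is monotone in $V$. Summing over $\ell\ge\sfw_k$ yields $\tilde\sfl_k+\widetilde\sfL_k+\widetilde\sfH_k(\{\gamma_{k,\ell}/2\},\sfw_k)+\widetilde\sfV_k$, up to constants.

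\textbf{Null arms.} I reuse the split $1+T_{1,k}+T_{2,k}+T_{3,k}$ with the thresholds $\sfa_{k,\ell},\sfb$.

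For $T_{1,k}\le\sum_\ell \ind\{M_{k,\ell}>\ell\sfa_{k,\ell}\}$, taking expectations and the same single-$\lambda$ Chernoff bound at deviation $\ell\Delta_{k,\ell}/3$ gives $\sfw_k+\widetilde\sfU_k+\widetilde\sfH_k(\{(\gamma^*-\eta_{k,\ell})/3\},\sfw_k)+\widetilde\sfV_k$.

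For $T_{2,k}$, once $n_{k,t-1}\ge\widetilde{\mathsf m}_k$ and $\nu_{k,n}\le\overline\nu_{k,n}$, the Gaussian tail \eqref{eq:t2} bounds the conditional probability of $\{\tilde m_{k,t}>\sfb\}$ by $1/(4K)$. Summing $\P(A_t=k,\dots\mid\mF_{t-1})$ over all nulls then contributes at most $\E[\tau_*]/4$, which can be moved to the left-hand side. Pulls made while $\nu>\overline\nu$ are counted by $\widetilde\sfS_k$, and pulls made before $\widetilde{\mathsf m}_k$ contribute $\widetilde{\mathsf m}_k$.

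For $T_3$, I replace the exponential supermartingales by the standard Thompson-sampling identity from the $\sfC_{j,t}\le\sfO_{j,t}\sfD_{j,t}$ comparison: $\E[\sfC_{j,t}\mid\mF_{t-1}]\le \sfO_{j,t}\,\P(A_t=j\mid\mF_{t-1})$. Taking expectations and reindexing by $n_{j,t-1}=\ell$ gives
\[
\E\Big[\sum_t \sfC_{j,t}\Big]\le \sum_\ell \E\big[R\big(\sqrt{\ell/\nu_{j,\ell}}\,(2\gamma^*/3 - M_{j,\ell}/\ell)\big)\big] = \widetilde\sfR_j,
\]
where I use that $\sfb=2\gamma^*/3$ and that each $\ell$ value of $n_{j,t-1}$ corresponds to at most one pull of $j$.

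\textbf{Assembly and main obstacle.} Summing all contributions and rearranging the $\E[\tau_*]/4$ term gives the claim. That rearrangement needs $\E[\tau_*]<\infty$, so I will first work with $\tau_*\wedge N$ and then let $N\to\infty$. I expect the main obstacle to be the $T_3$ step: $\sfO_{j,t}$ is evaluated at the random $\nu$, and the reindexing must correctly attach each time $t$ to the count $n_{j,t-1}$, with the $J_t=j$ restriction only helping. If the reindexing fails to be clean, I would instead sum $\sfO_{j,t}\sfD_{j,t}$ along pulls of $j$ exactly as in \eqref{eq:t3-hp-odds}, without truncation.
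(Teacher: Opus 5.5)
Your proposal is correct and follows the paper's proof essentially step for step. It uses the same non-null/null split, the same per-$\ell$ sub-Gaussian (fixed-$\lambda$) bounds for $T_k$ and $T_{1,k}$, the same $1/(4K)$ Gaussian-tail bound that turns $T_2$ into $\E[\tau_*]/4$, and the same comparison $\E[\sfC_{j,t}\mid\mF_{t-1}]\le\sfO_{j,t}\,\P(A_t=j\mid\mF_{t-1})$ reindexed by $n_{j,t-1}=\ell$ to produce $\widetilde{\sfR}_j$. Your truncation to $\tau_*\wedge N$ before absorbing the $\E[\tau_*]/4$ term is a genuine improvement: the paper performs that rearrangement without first checking that $\E[\tau_*]<\infty$.
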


\begin{proof}
For any $k\in \mH_1$, fix a non-decreasing sequence $\{\gamma_{k,\ell}\}_{\ell \ge 1}$ with $\mathsf{pos}(\{\gamma_{k,\ell}\}) < +\infty$,
and for any $k\in \mH_0$, fix a non-increasing sequence $\{\eta_{k,\ell}\}_{\ell \ge 1}$ with $\mathsf{neg}(\{\eta_{k,\ell}\}) < +\infty$.
We decompose the expected stopping time by the 
contribution of the non-nulls and the nulls:
\$ 
\E[\tau_*] = \sum_{k\in \mH_1}\E[n_{k,\tau_*}] + \sum_{k\in \mH_0}\E[n_{k,\tau_*}], 
\$
and establish an upper bound for each term separately.

\paragraph{The contribution of non-nulls.}
Fix $k \in \mH_1$. Define 
\$ 
T_k = \inf\{\ell \ge 1:  M_{k,\ell} \ge \log(K/\alpha)\}.
\$
By the sampling strategy, we have $n_{k,\tau_*} \le T_k$, and therefore $\E[n_{k,\tau_*}] \le \E[T_k]$.
Next, notice that 
\$ 
\{T_k > \ell \} \subseteq \{M_{k,\ell} < \log(K/\alpha)\},
\$ 
and therefore
\$ 
\E[T_k] = \sum^\infty_{\ell=0} \P(T_k > \ell) 
& \le 1 + \sum^\infty_{\ell=1} \P(M_{k,\ell} < \log (K/\alpha)).
\$
For $\ell \ge \tilde{\sfl}_k$, 
\$ 
& \P(M_{k,\ell} < \log(K/\alpha)) \\
= ~ & \P\Big(M_{k,\ell} < \log ({K}/{\alpha}), 
S_{k,\ell} <\ell \gamma_{k,\ell} \text{ or } V_{k,\ell} > \overline{V}_{k,\ell}\Big) 
+ \P\Big(M_{k,\ell} < \log ({K}/{\alpha}), S_{k,\ell} \ge \ell \gamma_{k,\ell}, 
V_{k,\ell} \le \overline{V}_{k,\ell}\Big)\\
\le ~& \P(S_{k,\ell} < \ell \gamma_{k,\ell}) 
+ \P(V_{k,\ell} > \overline{V}_{k,\ell})   
+ \P\Big(M_{k,\ell} - S_{k,\ell} < \log ({K}/{\alpha}) -  \ell\gamma_{k,\ell}, 
V_{k,\ell} \le \overline{V}_{k,\ell}\Big)\\
\le~& 
\P(S_{k,\ell} < \ell \gamma_{k,\ell}) + \P(V_{k,\ell} > \overline{V}_{k,\ell}) +
\exp\bigg(-\frac{\ell^2 \gamma_{k,\ell}^2}{8 \overline{V}_{k,\ell}}\bigg),
\$
where the last step follows from 
the sub-Gaussian property of $Z_{k,i}$ and the choice of $\tilde{\sfl}_k$.
Summing over $\ell \ge 1$, we have 
\$ 
\E[T_k] & \le 1 + \sfl_k + 
\sum^\infty_{\ell=\sfl_k} 
\bigg\{\P(S_{k,\ell} \le \ell \gamma_{k,\ell}) 
+ \P(V_{k,\ell} > \overline{V}_{k,\ell}) + 
\exp\Big(-\frac{\ell^2 \gamma_{k,\ell}^2}{8 \overline{V}_{k,\ell}}\Big)\bigg\}\\
& \lesssim 1 + \sfl_k + \widetilde{\mathsf{L}}_k(\{\gamma_{k,\ell}\}) +
\widetilde{\mathsf{V}}_k(\{\overline{V}_{k,\ell}\}) + \widetilde{\sfH}_k(\{\gamma_{k,\ell}/2\},\sfw_k),
\$
where the last step follows from the definition of 
$\widetilde{\mathsf{L}}_k$, 
$\widetilde{\mathsf{V}}_k$, and $\widetilde{\sfH}_k$.
Summing over $k\in \mH_1$, we have
\$ 
\sum_{k\in \mH_1}\E[n_{k,\tau_*}] 
\lesssim \sum_{k\in \mH_1} \Big\{1 + \tilde{\sfl}_k + 
\widetilde{\mathsf{L}}_k(\{\gamma_{k,\ell}\}) + 
\widetilde{\mathsf{V}}_k(\{\overline{V}_{k,\ell}\}) + 
\widetilde{\sfH}_k(\{\gamma_{k,\ell}/2\},\sfw_k)
\Big\}.
\$
\paragraph{The contribution of nulls.}
We now turn to the contribution of nulls.  Define 
\$
\gamma_* = \min_{j\in \mH_1}\min_{\ell \ge \sfw_j} \gamma_{j,\ell}.
\$
For any $k\in \mH_0$, we let  
$\Delta_{k,\ell} = \gamma_* - \eta_{k,\ell}$ and define the following two thresholds 
\$ 
\sfa_{k,\ell} = \eta_{k,\ell} + \frac{\Delta_{k,\ell}}{3},\quad 
\sfb = \frac{2 \gamma_*}{3},\quad \forall \ell \ge 1.
\$
We can check that $\eta_{k,\ell} < \sfa_{k,\ell} < \sfb < \gamma_*$, for 
every $\ell \ge \sfw_k$.

Next, we decompose the  number of  
pulls of $k \in \mH_0$ as
\$ 
\sum_{t \le \tau_*} \ind\{A_t = k\}
& = 1 + \sum_{K< t \le \tau_*} \ind\{A_t = k, \hat m_{k,t-1} > \sfa_{k,n_{k,t-1}}\}
+ \sum_{K< t \le \tau_*} \ind\{A_t = k, \hat m_{k,t-1} \le \sfa_{k,n_{k,t-1}}, 
\tilde m_{k,t} > \sfb\}\\
& \qquad \qquad + \sum_{K< t \le \tau_*} \ind\{A_t = k, \hat m_{k,t-1} \le \sfa_{k,n_{k,t-1}}, \tilde m_{k,t} \le \sfb\}\\
& =:1 + T_{1,k} + T_{2,k} + T_{3,k}.
\$
We establish an upper bound for each term separately.

For $T_{1,k}$, we have for any $t > K$,
\$ 
\sum_{K<t\le \tau^*}\ind\{A_t = k, \hat m_{k,t-1} > \sfa_{k,n_{k,t-1}}\} 
& =\sum_{K<t\le \tau^*}\sum^\infty_{\ell=1}
\ind\{A_t = k, M_{k,\ell}/\ell > \sfa_{k,\ell}, n_{k,t-1} = \ell\} \\
& \le \sum^\infty_{\ell = 1} 
\ind\{M_{k,\ell}/\ell > \sfa_{k,\ell}\}.
\$
And therefore, 
\$ 
\E[T_{1,k}] & \le \sfw_k + \sum^\infty_{\ell=\sfw_k} \P(M_{k,\ell} > \ell \sfa_{k,\ell})\\
& \le \sfw_k + \sum^\infty_{\ell=\sfw_k} \P(M_{k,\ell} > \ell \sfa_{k,\ell}, 
S_{k,\ell} \le \ell \eta_{k,\ell}, V_{k,\ell} \le \overline{V}_{k,\ell}) 
+ \sum^\infty_{\ell=\sfw_k} 
\big\{\P(S_{k,\ell} > \ell \eta_{k,\ell}) + \P(V_{k,\ell} > \overline{V}_{k,\ell})\big\}\\
& \le  \sfw_k +  
\sum^\infty_{\ell=\sfw_k} \P(M_{k,\ell} -S_{k,\ell} > \ell \Delta_{k,\ell}/3, 
S_{k,\ell} \le \ell \eta_{k,\ell}, V_{k,\ell} \le \overline{V}_{k,\ell}) 
+ \widetilde{\mathsf{U}}_k(\{\eta_{k,\ell}\}) + \widetilde{\mathsf{V}}_k(\{\overline{V}_{k,\ell}\})\\
& \stackrel{\mathrm{(a)}}{\le} \sfw_k +  
\sum^\infty_{\ell=\sfw_k} \exp\bigg(-\frac{(\ell \Delta_{k,\ell}/3)^2}{2 \overline{V}_{k,\ell}}\bigg) 
+  \widetilde{\mathsf{U}}_k(\{\eta_{k,\ell}\}) 
+ \widetilde{\mathsf{V}}_k(\{\overline{V}_{k,\ell}\})\\
& =\sfw_k + \widetilde{\mathsf{H}}_k(\{\Delta_{k,\ell}/3\},\sfw_k) +  \widetilde{\mathsf{U}}_k(\{\eta_{k,\ell}\}) 
+ \widetilde{\mathsf{V}}_k(\{\overline{V}_{k,\ell}\}), 
\$
where 
step (a) follows from the sub-Gaussian property of $Z_{k,i}$.

For $T_{2,k}$, we consider 
\$ 
\E[T_2] & := \sum_{k\in \mH_0} \E[T_{2,k}]\\
& = \sum_{k\in \mH_0} \sum_{t>K} \P(t \le \tau^*, A_t = k, 
\hat m_{k,t-1} \le \sfa_{k,n_{k,t-1}},\tilde m_{k,t} \ge \sfb)\\
& \le \sum_{k\in \mH_0} \sum_{t>K} \P(t \le \tau^*, A_t = k, 
\hat m_{k,t-1} \le \sfa_{k,n_{k,t-1}},\tilde m_{k,t} \ge \sfb, 
V_{k,n_{k,t-1}} \le \overline{V}_{k,n_{k,t-1}}, \nu_{k,n_{k,t-1}} \le \overline{\nu}_{k,n_{k,t-1}})\\ 
& \hspace{5em}  
+ \sum_{k\in\mH_0} \sum_{t > K} \P(A_t = k, V_{k,n_{k,t-1}} > \overline{V}_{k,n_{k,t-1}})
+  \P(A_t = k, \nu_{k,n_{k,t-1}} > \overline{\nu}_{k,n_{k,t-1}})
\\
& \le \sum_{k\in \mH_0}\Bigg\{ \sum_{t>K} \P(t \le \tau^*, A_t = k, n_{k,t-1} > \mathsf{m}_k, 
\hat m_{k,t-1} \le \sfa_{k,n_{k,t-1}},\tilde m_{k,t} \ge \sfb, 
V_{k,n_{k,t-1}} \le \overline{V}_{k,n_{k,t-1}}, \nu_{k,n_{k,t-1}} \le \overline{\nu}_{k,n_{k,t-1}}) \\
& \hspace{6em} + \mathsf{m}_k +  \widetilde{\mathsf{V}}_k(\{\overline{V}_{k,\ell}\})
+ \widetilde{\mathsf{S}}_k(\{\overline{\nu}_{k,\ell}\})\Bigg\}.
\$ 
For any $K\in\mH_0$ and $t > K$, 
on the event $\{t \le \tau^*, n_{k,t-1} > \mathsf{m}_k, 
\hat m_{k,t-1} \le \sfa_{k,n_{k,t-1}}, V_{k,n_{k,t-1}}\le \overline{V}_{k,n_{k,t-1}},
\nu_{k,n_{k,t-1}}\le \overline{\nu}_{k,n_{k,t-1}}\}$, there is 
\$ 
\P(A_t = k, \tilde m_{k,t} \ge \sfb\given \mF_{t-1})
\le ~& 
\P\bigg(\sqrt{\frac{v_{k,t-1}}{n_{k,t-1}}}\cdot \xi_{k,t} > \sfb - \sfa_{k,n_{k,t-1}}\Biggiven \mF_{t-1}\bigg)\\ 
\stackrel{\mathrm{(a)}}{\le}~& 
\P\bigg(\sqrt{\frac{v_{k,t-1}}{n_{k,t-1}}}\cdot \xi_{k,t} > \Delta_{k,n_{k,t-1}}/3
\Biggiven \mF_{t-1}\bigg) \\
\le ~& 
\exp\bigg(-\frac{\Delta^2_{k,n_{k,t-1}}n_{k,t-1}}{18 v_{k,t-1}}\bigg)
\le \exp\bigg(-\frac{\Delta^2_{k,n_{k,t-1}}n_{k,t-1}}{18 \overline{\nu}_{k,n_{k,t-1}}}\bigg)
\stackrel{\mathrm{(b)}}{\le}  \frac{1}{4K} \ind\{t \le \tau^*\}. 
\$
Above, step (a) is due to 
\$
\sfb - \sfa_{k,n_{k,t-1}} = \frac{2\gamma^*}{3} - \frac{\Delta_{k,n_{k,t-1}}}{3}
- \eta_{k,n_{k,t-1}} = \frac{\Delta_{k,n_{k,t-1}} - \eta_{k,n_{k,t-1}}}{3}
\ge \frac{\Delta_{k,n_{k,t-1}}}{3}, \text{ for } n_{k,t-1} \ge \sfw_k,
\$
and step (b) follows from the choice of $\mathsf{m}_k$.
Putting everything together, we arrive at
\$ 
\E[T_2] \le \frac{\E[\tau^*]}{4} + \sum_{k\in \mH_0}\big(\mathsf{m}_k + 
\widetilde{\mathsf{V}}_k(\{\overline{V}_{k,\ell}\})  +\widetilde{\mathsf{S}}_k(\{\overline{\nu}_{k,\ell}\})\big)
\$
We now turn to $T_{3,k}$, 
where we directly target the 
total contribution of nulls 
\$ 
T_3 := \sum_{k\in \mH_0} T_{3,k}
& =  \sum_{k\in \mH_0} \sum_{t=K+1}^\infty \ind\{t \le \tau_*, 
A_t = k, \hat m_{k,t-1} \le \sfa_{k,n_{k,t-1}}, \tilde m_{k,t} \le \sfb\}\\
& = \sum_{t=K+1}^\infty \ind\{t \le \tau_*, A_t \in \mH_0, 
\hat m_{A_t,t-1} \le \sfa_{A_t,n_{A_t,t-1}},
\tilde m_{A_t,t} \le \sfb\}
\$
On the event $\{t \le \tau_*\}$, we have $\mH_1 \not\subseteq \mR_{t-1}$, and therefore 
there exists some $j\in \mH_1$ such that $j \in \mathcal{C}_{t-1} :=  [K]\backslash \mR_{t-1}$. 
Define $J_t = \min\{j\in \mH_1: j \in \mathcal{C}_{t-1}\}$, i.e., the non-null hypothesis with the smallest index 
that has not been rejected yet. 
By the tower property of conditional expectation and that $\{t \le \tau_*\}$ is $\mF_{t-1}$-measurable, we
have for any $t > K$, 
\$ 
& \P\big(t \le \tau_*, A_t \in \mH_0, \hat m_{A_t,t-1} \le 
\sfa_{A_t,n_{A_t,t-1}},\tilde m_{A_t,t} \le \sfb\big)\\
=~&\E\Big[\ind\big\{t \le \tau_*\big\}
\cdot \P(A_t \in \mH_0, \hat m_{A_t,t-1} \le \sfa_{A_t,n_{A_t,t-1}}, 
\tilde m_{A_t,t} \le \sfb \mid \mF_{t-1})\Big]\\
=~&\E\Bigg[\ind\big\{t \le \tau_*\big\}\sum_{j\in \cH_1}\ind\{J_t = j\}
\cdot \P(A_t \in \mH_0, \hat m_{A_t,t-1} \le \sfa_{A_t,n_{A_t,t-1}}, 
\tilde m_{A_t,t} \le \sfb \mid \mF_{t-1})\Bigg],
\$
where the last step is because $\sum_{j\in \mH_1} \ind\{J_t = j\} = 1$ on the event $\{t \le \tau_*\}$.
On the event $\{t \le \tau_*, J_t = j\}$, 
there is 
\$ 
\P(A_t \in \cH_0, \hat m_{A_t,t-1} \le \sfa_{A_t,n_{A_t,t-1}}, 
\tilde m_{A_t,t} \le \sfb  \mid \mF_{t-1})
& \le \P\Big(\max_{\ell \in \cC_{t-1}} \tilde m_{\ell,t} \le \sfb \mid \mF_{t-1}\Big) \\
& =  \P(\tilde m_{j,t} \le \sfb \mid \mF_{t-1}) \prod_{\ell \in \cC_{t-1}\backslash\{j\}} 
\P(\tilde m_{\ell,t} \le \sfb \mid \mF_{t-1}).
\$
On the same event, we also have 
\$ 
\P( A_t = j \mid \mF_{t-1})
& \ge 
\P(\tilde m_{j,t} > \sfb , \max_{\ell \in \cC_{t-1}\backslash\{j\}} \tilde m_{\ell,t} \le \sfb \mid \mF_{t-1})\notag\\
& =\P(\tilde m_{j,t} > \sfb \mid \mF_{t-1}) 
\prod_{\ell \in \cC_{t-1}\backslash\{j\}} \P(\tilde m_{\ell,t} \le \sfb \mid \mF_{t-1}).
\$
Combining the above, we have
\$ 
\P(A_t \in \cH_0, \hat m_{A_t,t-1} \le \sfa_{A_t,n_{A_t,t-1}}, 
\tilde m_{A_t,t} \le \sfb \mid \mF_{t-1})
\le \frac{\P(\tilde m_{j,t} \le \sfb \mid \mF_{t-1})}{\P(\tilde m_{j,t} > \sfb \mid \mF_{t-1})} \P(A_t = j \given \mF_{t-1}). 
\$
As a result, 
\$ 
\E[T_3] & \le \sum_{j\in \cH_1}\sum_{t=K+1}^\infty \E\Bigg[\ind\{t \le \tau_*, J_t = j\} 
\frac{\P(\tilde m_{j,t} \le \sfb \mid \mF_{t-1})}{\P(\tilde m_{j,t} > \sfb \mid \mF_{t-1})} \P(A_t = j \given \mF_{t-1}) \Bigg] \\
& \le \sum_{j\in \cH_1}\sum^\infty_{t=K+1}\E\Bigg[\ind\{A_t = j\} \cdot 
\frac{\P(\tilde m_{j,t} \le \sfb \mid \mF_{t-1})}{\P(\tilde m_{j,t} > \sfb \mid \mF_{t-1})}\Bigg] \\
& = \sum_{j \in \cH_1} \sum_{\ell = 1}^\infty\E\Bigg[R\bigg(\sqrt{\frac{\ell}{\nu_{j,\ell}}}
\Big(\sfb - \frac{M_{j,\ell}}{\ell}\Big)\bigg)\Bigg].
\$
Combining everything, we have 
\$ 
\E[\tau^*] & \lesssim 
 K + \sum_{k\in \mH_1}\big(\tilde{\sfl}_k + \widetilde{\sfL}_k(\{\gamma_{k,\ell}\}) + 
\widetilde{\mathsf{H}}_k(\{\gamma_{k,\ell}/2\}, \sfw_k) + \sfR_k \big)\\
& \qquad \qquad + \sum_{k\in\mH_0} \big(\widetilde{\mathsf{m}}_k + \widetilde{\sfU}_k(\{\eta_{k,\ell}\}) + 
\widetilde{\mathsf{H}}_k(\{\Delta_{k,\ell}/3\}, \sfw_k) +  \widetilde{\mathsf{S}}_k(\{\overline{\nu}_{k,\ell}\}) 
\big)
+ \sum_{k\in [K]} \widetilde{\mathsf{V}}_k(\{V_{k,\ell}\})
\$
\end{proof}

As previously, we can then specify the general bound to specific testing problems 
given the e-process construction. For example, for the simple-versus-simple testing problem and the 
likelihood ratio e-process given in~\eqref{eq:simple-e-val}, we have the following specialization.
\begin{theorem}\label{thm:ss-power} 
Consider a multiple testing problem with singleton null and alternative distribution classes 
\$
\mP_k = \{P_k^\circ\}, \quad  \mQ_k = \{Q_k^\circ\}, \quad \forall k\in[K],
\$
where $Q_k^\circ$ is absolutely continuous with respect to $P_k^\circ$, and let the target FDR level be $\alpha \in (0,1)$. 
Suppose Assumption~\ref{assump:likelihood-e} holds with $s^2_{k,\ell} \equiv \sigma_k^2$.
The e-PS algorithm implemented 
with the likelihood-ratio e-values defined in~\eqref{eq:simple-e-val} 
and $v_{k,t} = (1+\epsilon)\,\sigma^2_k$ for some $\epsilon > 0$ satisfies
\begin{align*}
    \E[\tau_*] \lesssim K +  \sum_{k \in \mathcal{H}_1}
    \left(\frac{\log(K/\alpha)}{D_{\KL}(Q_k^\circ\|P_k^\circ)} 
    + \frac{\sigma_k^2}{D_{\KL}(Q_k^\circ\|P_k^\circ)^2}\right) + 
    \sum_{k \in \mathcal{H}_0} \frac{\sigma_k^2 \log K}{(D_\KL(P_k^\circ\|Q_{k}^\circ) + d_{\mathsf{LR}})^2},  
\end{align*}
where
$d_{\mathsf{LR}} = \min_{j \in \mH_1} D_\KL(Q_j^\circ\|P_{j}^\circ)$.
\end{theorem}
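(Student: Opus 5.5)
We obtain Theorem~\ref{thm:ss-power} by instantiating Theorem~\ref{thm:general-power} with deterministic sequences, as in the proof of Theorem~\ref{thm:ss-power-hp}. The conditional mean of each log-likelihood-ratio increment is constant, so we take $\gamma_{k,\ell}\equiv\gamma_k:=D_\KL(Q_k^\circ\|P_k^\circ)$ for $k\in\mH_1$ and $\eta_{k,\ell}\equiv -D_\KL(P_k^\circ\|Q_k^\circ)$ for $k\in\mH_0$. Then $S_{k,\ell}/\ell$ equals these sequences exactly, so $\widetilde{\sfL}_k=\widetilde{\sfU}_k=0$ and all burn-ins are $\sfw_k=1$. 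We also take $\overline V_{k,\ell}=V_{k,\ell}=\ell\sigma_k^2$, which gives $\widetilde{\mathsf V}_k=0$. Since $\nu_{k,\ell}=(1+\epsilon)\sigma_k^2$ is deterministic, we set $\overline\nu_{k,\ell}=\nu_{k,\ell}$, which gives $\widetilde{\mathsf S}_k=0$. Finally, $\gamma^*=d_{\mathsf{LR}}$.

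The deterministic terms are then elementary. We have $\tilde\sfl_k=\lceil 2\log(K/\alpha)/\gamma_k\rceil$. For $\widetilde{\mathsf m}_k$, the condition $n(d_{\mathsf{LR}}+D_\KL(P_k^\circ\|Q_k^\circ))^2\ge 18(1+\epsilon)\sigma_k^2\log(4K)$ gives the displayed null term. The $\widetilde\sfH$ terms are geometric series:
\begin{align*}
\widetilde\sfH_k(\{\gamma_k/2\},1)=\sum_{\ell\ge1}\exp\{-\ell\gamma_k^2/(8\sigma_k^2)\}\lesssim 1+\sigma_k^2/\gamma_k^2 .
\end{align*}
Similarly, the null analogue is $\lesssim 1+\sigma_k^2/(d_{\mathsf{LR}}+D_\KL(P_k^\circ\|Q_k^\circ))^2$, which is absorbed into the $\widetilde{\mathsf m}_k$ term. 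All $O(1)$ contributions are absorbed into $K$.

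The main obstacle is the term $\widetilde\sfR_j=\sum_{\ell}\E[R(\sqrt{\ell/\nu_j}(\sfb-M_{j,\ell}/\ell))]$ with $\sfb=2d_{\mathsf{LR}}/3\le 2\gamma_j/3$. Write $M_{j,\ell}=\ell\gamma_j+W_\ell$, where $W_\ell$ is a sum of $\ell$ centered $\sigma_j^2$-sub-Gaussian increments. The argument of $R$ is then at most $-\sqrt{\ell}\gamma_j/(3\sqrt{\nu_j})-W_\ell/\sqrt{\ell\nu_j}$. Because $R(z)$ grows like $\sqrt{2\pi}\,z e^{z^2/2}$ as $z\to\infty$, we cannot simply bound the expectation by plugging in the mean.

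We split on the event $\{W_\ell\ge-\ell\gamma_j/6\}$. On this event the argument is at most $-\sqrt\ell\gamma_j/(6\sqrt{\nu_j})$, and $R(z)\le 2e^{-z^2/2}$ for $z<0$ gives a summable geometric contribution of order $1+\sigma_j^2/\gamma_j^2$. On the complementary region we need $\E[R(x+W_\ell/\sqrt{\ell\nu_j})]$ for the lower tail of $W_\ell$. Here the sub-Gaussian tail of $-W_\ell/\sqrt{\ell}$ has variance proxy $\sigma_j^2$, which is strictly smaller than $\nu_j=(1+\epsilon)\sigma_j^2$. Consequently the bound $\E[e^{z^2/2}]$ with $z\approx -W_\ell/\sqrt{\ell\nu_j}$ is finite, and by integrating the tail it is $\lesssim_\epsilon 1$ uniformly in $\ell$.

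For the $\ell$-dependence, we plan to bound the tail expectation via a peeling/Chernoff argument. We use $R(z)\le C(1+z)e^{z^2/2}$ together with $\P(-W_\ell\ge u)\le e^{-u^2/(2\ell\sigma_j^2)}$, restricted to $u\ge\ell\gamma_j/6$. This should yield a term of order $\exp\{-c_\epsilon\ell\gamma_j^2/\sigma_j^2\}$, whose sum over $\ell$ is again $\lesssim_\epsilon\sigma_j^2/\gamma_j^2$. This is exactly where the variance inflation factor $1+\epsilon$ is essential. Collecting all the terms then gives the stated bound on $\E[\tau_*]$.
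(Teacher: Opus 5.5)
Your proposal is correct and follows the paper's proof. You instantiate Theorem~\ref{thm:general-power} with the constant sequences $\gamma_{k,\ell}\equiv D_\KL(Q_k^\circ\|P_k^\circ)$, $\eta_{k,\ell}\equiv -D_\KL(P_k^\circ\|Q_k^\circ)$, $\overline V_{k,\ell}=\ell\sigma_k^2$ and $\overline\nu_{k,\ell}=\nu_{k,\ell}$, so that all probabilistic terms vanish, bound the $\widetilde\sfH$ terms by geometric series, and control $\widetilde\sfR_j$ using that the $(1+\epsilon)$ inflation makes the normalized fluctuation sub-Gaussian with proxy $1/(1+\epsilon)<1$. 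The only difference is cosmetic: the paper invokes Lemma~\ref{lem:exp-odds}, splitting on the sign of $W-a$, while you reprove that bound inline by splitting on $\{W_\ell\ge -\ell\gamma_j/6\}$, which gives the same summable $\exp\{-c_\epsilon\ell\gamma_j^2/\sigma_j^2\}$ terms.
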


\begin{proof}
For any $k\in [K]$, we let $\overline{V}_{k,\ell} = \ell \sigma_k^2$ and $\overline{\nu}_{k,\ell} = 
\nu_{k,\ell}$. Then $\widetilde{\mathsf{S}}_k(\{\overline{\nu}_{k,\ell}\}) = \widetilde{\mathsf{V}}_{k}(\{\overline{V}_{k,\ell}\}) = 0$.

For each $k\in \mH_1$, we take $\gamma_{k,\ell} \equiv D_{\KL}(Q_k^\circ \| P_k^\circ)$ and $\sfw_k = 1$.
By construction, we have $S_{k,\ell}/\ell \equiv \gamma_{k,\ell}$ and
therefore $\widetilde{\mathsf{L}}_k(\{\gamma_{k,\ell}\}) = 0$.
In addition, 
\$
\gamma^* = \min_{k\in \mH_1} D_{\KL}(Q_k^\circ \| P_k^\circ) = d_{\mathsf{LR}}.
\$
Next, there is
\$ 
\tilde {\sfl}_k \lesssim \frac{\log(K/\alpha)}{D_{\KL}(Q_k^\circ \| P_k^\circ)}
\$
and 
\$ 
\tilde{\sfH}_k(\{\gamma_{k,\ell}/2\},\sfw_k)
= \sum^\infty_{\ell = 1} \exp\bigg(-\frac{D_{\KL}(Q_k^\circ \| P_k^\circ)\ell}{8 \sigma_k^2}\bigg)
\lesssim \frac{\sigma_k^2}{D_{\KL}(Q_k^\circ \| P_k^\circ)^2}. 
\$
For $\sfR_k$, we have
\@\label{eq:exp-svs-t2}
& \sum_{\ell = 1}^\infty\E\Bigg[R\bigg(\sqrt{\frac{\ell}{(1+\epsilon) \sigma_k^2}}
\Big(\sfb - \frac{M_{k,\ell}}{\ell}\Big)\bigg)\Bigg] \notag \\
= ~& \sum_{\ell = 1}^\infty\E\Bigg[R\bigg(\sqrt{\frac{\ell}{(1+\epsilon)\sigma_k^2}}
\Big(\sfb - \frac{M_{k,\ell}}{\ell}\Big)\bigg) \ind\Big\{\frac{S_{k,\ell}}{\ell} 
<  \gamma_{k,\ell}\Big\}\Bigg]
+ \sum_{\ell = 1}^\infty\E\Bigg[R\bigg(\sqrt{\frac{\ell}{(1+\epsilon)\sigma_k^2}}\Big(\sfb - \frac{M_{j,\ell}}{\ell}\Big)\bigg)\ind\Big\{\frac{S_{j,\ell}}{\ell} \ge \gamma_{j,\ell}\Big\}\Bigg] \notag\\
\le ~& \sum_{\ell = 1}^\infty 
\E\Bigg[R\bigg(\sqrt{\frac{\ell}{(1+\epsilon)\sigma_k^2}}\cdot \Big\{
-\frac{\gamma_{k,\ell}}{3} + \frac{S_{k,\ell}-M_{k,\ell}}{\ell}\Big\}\bigg)\Bigg],
\@
where we again use $S_{k,\ell}/\ell \equiv \gamma_{k,\ell} $
For $\ell \ge 1$, we apply Lemma~\ref{lem:exp-odds} with 
\$ 
W = \sqrt{\frac{1}{(1+\epsilon)\sigma_k^2\ell}}(S_{k,\ell} - M_{k,\ell}) \text{ and }
a = \frac{\gamma_{k,\ell}\sqrt{\ell}}{3\sqrt{1+\epsilon}\sigma_k},
\$ 
where $W$ is centered and sub-Gaussian with variance proxy $\frac{1}{1+\epsilon}<1$.
Then 
\$ 
\sfR_j \le 
\sum_{\ell = 1}^\infty C(\epsilon)\exp\bigg(-\frac{c(\epsilon)}{9(1+\epsilon)}
\cdot \frac{\gamma_{j,\ell}^2\ell}{\sigma_k^2} \Big)
\lesssim \frac{\sigma_k^2}{D_{\KL}(Q_k^\circ \| P_k^\circ)^2}. 
\$
For each $k\in \mH_0$, we take $\eta_{k,\ell} \equiv - D_{\KL}(P_k^\circ \| Q_k^\circ)$
and $\sfw_k = 1$. As a result, $S_{k,\ell} / \ell \equiv \eta_{k,\ell}$
and $\widetilde{\sfU}_k(\{\eta_{k,\ell}\})=0$.
We can check that 
\$ 
\widetilde{\mathsf{m}}_k \lesssim \frac{\sigma_k^2 \log K}{(d_{\mathsf{LR}} 
+ D_{\KL}(P_k^\circ \| Q_k^\circ))^2},  
\$
and 
\$ 
\widetilde{\sfH}_k(\{\Delta_{k,\ell}/3\}, \sfw_k)
= \sum_{\ell = 1}^\infty \exp\bigg(-\frac{\ell \Delta_{k,\ell}^2}{18\sigma_k^2}\bigg)
\lesssim \frac{\sigma_k^2}{(d_{\mathsf{LR}} + D_{\KL}(P_k^\circ \| Q_k^\circ))^2}.
\$
Putting everything together, we have
\$ 
\E[\tau_*] \lesssim K + \sum_{k\in \mH_1} \bigg(\frac{\log(K/\alpha)}{D_{\KL}(Q_k^\circ \| P_k^\circ)} 
+ \frac{\sigma_k^2}{D_{\KL}(Q_k^\circ \| P_k^\circ )^2}\bigg) + 
\sum_{k\in \mH_0} \frac{\sigma_k^2 \log K}{(d_{\mathsf{LR}} +
D_{\KL}(P_k^\circ \| Q_k^\circ))^2}. 
\$
\end{proof}

\section{Technical lemmas}
\begin{lemma}\label{lem:integral}
For any $a,b>0$, and $c \ge e^{2-b} \vee 2/a$, there is 
\$ 
\int_c^\infty \exp\Big(-\frac{ax}{b+ \log x}\Big) \, dx \le \frac{4(b+\log c)}{a} \exp\Big(-\frac{ac}{b+\log c}\Big).
\$
\end{lemma}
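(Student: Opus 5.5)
The plan is to write the integrand as an exact derivative times a slowly varying weight, integrate by parts once, and absorb the remainder into the left-hand side. Set $B(x) = b + \log x$ and $f(x) = ax/B(x)$, so the integrand is $e^{-f(x)}$. A direct computation gives
\[
f'(x) = \frac{a\,(B(x)-1)}{B(x)^2}.
\]
On $[c,\infty)$ the condition $c \ge e^{2-b}$ gives $B(x) \ge B(c) \ge 2$. Hence $f'>0$ there, and $f(x) \to \infty$ like $x/\log x$. In particular $I := \int_c^\infty e^{-f(x)}\,dx$ is finite, which is needed for the absorption step.

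Next, introduce the weight $h(x) = B(x)^2/\big(a(B(x)-1)\big)$, chosen so that $e^{-f} = h\, f'\, e^{-f} = -h\,(e^{-f})'$. Integrating by parts on $[c,\infty)$ gives
\[
I = h(c)e^{-f(c)} - \lim_{x\to\infty} h(x)e^{-f(x)} + \int_c^\infty h'(x)e^{-f(x)}\,dx .
\]
The boundary term at infinity vanishes, since $h$ grows only like $\log x$ while $e^{-f}$ decays faster than any power. Differentiating, and using $B' = 1/x$,
\[
h'(x) = \frac{B(B-2)}{a(B-1)^2\,x}.
\]
Since $B(B-2) \le (B-1)^2$, this yields $0 \le h'(x) \le 1/(ax) \le 1/(ac)$. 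The condition $c \ge 2/a$ gives $1/(ac) \le 1/2$, so $I \le h(c)e^{-f(c)} + I/2$. Rearranging, $I \le 2h(c)e^{-f(c)}$.

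It remains to note that for $B \ge 2$ we have $B^2/(B-1) \le 2B$, because $B \le 2(B-1)$. Hence $h(c) \le 2B(c)/a$, and
\[
I \le \frac{4(b+\log c)}{a}\exp\Big(-\frac{ac}{b+\log c}\Big).
\]
The only delicate point is choosing the weight $h$ so that the remainder after integrating by parts is a small multiple of $I$ itself. The cruder choice $h = 2B/a$, obtained from $f' \ge a/(2B)$, gives coefficient $2/(ac) \le 1$, which is not enough to absorb. The exact weight $B^2/(a(B-1))$ yields the coefficient $1/(ac) \le 1/2$, and this is precisely what the hypothesis $c \ge 2/a$ supplies.
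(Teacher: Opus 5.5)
Your proof is correct and is essentially the paper's argument: the paper takes $F(x) = -2e^{-\phi(x)}/\phi'(x)$, which is $-2h(x)e^{-f(x)}$ in your notation, and shows $F' \ge e^{-\phi}$ using exactly your bound $0 \le h'(x) \le 1/(ax) \le 1/2$, so your integrate-by-parts-and-absorb step is the same computation written as a pointwise comparison. The only difference is that the comparison form does not need finiteness of the integral in advance, and you verify that finiteness separately anyway.
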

\begin{proof}
Let $\phi(x) =\frac{ax}{b+\log x}$ and $f(x) = e^{-\phi(x)}$.
Direct calculation leads to 
\$ 
\phi'(x) = \frac{a(b+\log x -1)}{(b+\log x)^2}.
\$  
We additionally define $F(x) = \frac{-2}{\phi'(x)}e^{-\phi(x)}$ and 
\$ 
F'(x) = 2e^{-\phi(x)}\bigg(1 - \Big(\frac{1}{\phi(x)}\Big)'\bigg)
= 2e^{-\phi(x)}\bigg(1 - \frac{(b+\log x)(b+\log x-2)}{ax(b+\log x -1)^2}\bigg)
\ge e^{-\phi(x)} = f(x),
\$
where the last inequality holds when $c \ge e^{2-b} \vee 2/a$.
Then we have
\$
\int_c^\infty f(x) dx \le \int_c^\infty F'(x) dx = F(\infty) - F(c) 
= \frac{2(b+\log c)^2}{a(b+\log c -1)} e^{-\phi(c)} 
\le \frac{4(b+\log c)}{a} \exp\Big(-\frac{ac}{b+\log c}\Big).
\$

\end{proof}

\begin{lemma}
\label{lem:exp-odds}
Let $W$ be a centered $\rho^2$-sub-Gaussian random variable, 
for some $\rho^2<1$.
Then there exist constants $C(\rho),c(\rho)>0$ depending only on $\rho$, such that for all $a\ge 0$,
\[
\E\big[R(W-a)\big]\le C(\rho) \cdot e^{-c(\rho) a^2},
\]
where $R(z) = \Phi(z)/(1-\Phi(z))$ and $\Phi(z)$ is the cumulative distribution function (CDF) of standard normal distribution.
\end{lemma}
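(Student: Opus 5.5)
The plan is to exploit the two-sided behaviour of $R(z)=\Phi(z)/(1-\Phi(z))$: it is Gaussian-small for $z\le 0$ and grows like $e^{z^2/2}$ for $z>0$. The growth will be beaten by the sub-Gaussian tail of $W$ precisely because $\rho^2<1$. I will split $\E[R(W-a)]$ according to whether $W\le a/2$, $a/2<W\le a$, or $W>a$, and show that each piece is at most $C(\rho)e^{-a^2/8}$ (or smaller), so that $c(\rho)=1/8$ works.

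\emph{Elementary bounds on $R$.} For $z\le 0$ we have $1-\Phi(z)\ge 1/2$, hence $R(z)\le 2\Phi(z)\le e^{-z^2/2}$. For $z\ge 0$, the Mills-ratio lower bound $1-\Phi(z)\ge \frac{z}{1+z^2}\phi(z)$, together with $1-\Phi(z)\ge 1-\Phi(1)$ on $[0,1]$, gives $R(z)\le h(z):=C_0(1+z)e^{z^2/2}$ for an absolute constant $C_0$. Also, $R$ is nondecreasing with $R(0)=1$.

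\emph{Steps 1 and 2: $W\le a$.}
\begin{itemize}
\item On $\{W\le a/2\}$ we have $W-a\le -a/2$, so monotonicity gives $R(W-a)\le R(-a/2)\le e^{-a^2/8}$.
\item On $\{a/2<W\le a\}$ we have $R(W-a)\le 1$. The sub-Gaussian Chernoff bound then gives a contribution of at most $\P(W>a/2)\le e^{-a^2/(8\rho^2)}\le e^{-a^2/8}$.
\end{itemize}

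\emph{Step 3: $W>a$ (the main obstacle).} Write $U=W-a$. Here $R(U)$ can be as large as $e^{U^2/2}$, which must be integrated against the tail of $W$. I would use the layer-cake (integration by parts) bound
\[
\E\big[h(U)\ind\{U>0\}\big]\le h(0)\P(U>0)+\int_0^\infty h'(u)\,\P(U>u)\,\di u .
\]
Here $h'(u)\le C_1(1+u)^2e^{u^2/2}$, and $\P(U>u)=\P(W>a+u)\le \exp\{-(a+u)^2/(2\rho^2)\}$. Since $a,u\ge 0$, we have $(a+u)^2\ge a^2+u^2$, so the integrand is at most
\[
C_1(1+u)^2\exp\Big\{-\frac{a^2}{2\rho^2}-\frac{u^2}{2}\Big(\frac{1}{\rho^2}-1\Big)\Big\}.
\]
Because $\rho^2<1$, the $u$-integral is a finite constant depending only on $\rho$. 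This is exactly where the hypothesis $\rho^2<1$ is used. The term $h(0)\P(U>0)$ is at most $C_0 e^{-a^2/(2\rho^2)}$. Hence this piece is at most $C_2(\rho)e^{-a^2/(2\rho^2)}$.

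\emph{Conclusion.} Summing the three pieces gives $\E[R(W-a)]\le C(\rho)e^{-a^2/8}$ for all $a\ge 0$, which is the claim with $c(\rho)=1/8$. The only delicate point is Step 3, namely verifying that the Gaussian growth of $R$ on the positive side is dominated by the strictly faster sub-Gaussian decay of $W$. The cross-term bound $(a+u)^2\ge a^2+u^2$ handles it cleanly.
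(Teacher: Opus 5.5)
Your proof is correct. On $\{W>a\}$ it coincides with the paper's argument. You use the same Mills-ratio bound $R(z)\lesssim(1+z)e^{z^2/2}$, the same layer-cake formula with derivative $(1+u+u^2)e^{u^2/2}$, and the same sub-Gaussian tail $e^{-(a+u)^2/(2\rho^2)}$. The paper discards the cross factor $e^{-au/\rho^2}$ exactly as your inequality $(a+u)^2\ge a^2+u^2$ does.

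Where you differ is the region $\{W\le a\}$. The paper handles it in one stroke: it uses $R(Z)\le 2e^{-Z^2/2}$ and the tilting computation $\E[e^{-(W-a)^2/2}]=e^{-a^2/2}\E[e^{aW-W^2/2}]\le e^{-a^2/2}\E[e^{aW}]\le e^{-a^2(1-\rho^2)/2}$. This produces $c(\rho)=\min\{(1-\rho^2)/2,\,1/(2\rho^2)\}$. You instead split at $W=a/2$ and use only monotonicity of $R$, $R(0)=1$, and a Chernoff bound.

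Your version is slightly more elementary and gives the $\rho$-free exponent $c=1/8$. The hypothesis $\rho^2<1$ is then needed only to make the Step~3 integral, and hence $C(\rho)$, finite. This matters in the paper's use of the lemma (proof of Theorem~\ref{thm:ss-power}, with $\rho^2=1/(1+\epsilon)$). There your exponent, though not your prefactor, stays bounded away from zero as $\epsilon\to0$, whereas the paper's $c(\rho)$ vanishes.

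Conversely, the paper's exponent is better for small $\rho$, since $(1-\rho^2)/2>1/8$ exactly when $\rho^2<3/4$. If you wanted that regime back, moving your split point to $W=\rho a/(1+\rho)$ gives the exponent $1/(2(1+\rho)^2)$, which tends to $1/2$ as $\rho\to0$.
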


\begin{proof}
Fix $W$ to be a centered $\rho^2$-sub-Gaussian random variable and $a\ge 0$.
Let $Z:=W-a$.
We can decompose $\E[R(Z)]$ according to the sign of $Z$:
\[
\E[R(Z)]
=
\E[R(Z)\cdot \ind\{Z\le 0\}]
+
\E[R(Z)\cdot\ind\{Z>0\}].
\]
We will bound the two terms separately.
\paragraph{Bounding $\E[R(Z)\cdot \ind\{Z\le 0\}]$.}
On the event $\{Z\le 0\}$, we have $1-\Phi(Z)\ge 1/2$ and  as a result, 
\[
R(Z)=\frac{\Phi(Z)}{1-\Phi(Z)}\le 2\Phi(Z).
\]
On the other hand, the standard Gaussian tail bound gives
\[
\Phi(Z)\le e^{-Z^2/2}.
\]
Combining the above two bounds, we have
\[
R(Z)\cdot \ind\{Z\le 0\}\le 2e^{-Z^2/2}.
\]
Taking expectation on both sides yields
\[
\E[R(Z)\cdot \ind\{Z\le 0\}]
\le
2\cdot \E\big[e^{-Z^2/2}\big] = 2\cdot \E\big[e^{-(W-a)^2/2}\big].
\]
Note that 
\$
\E\big[e^{-(W-a)^2/2}\big]
=
e^{-a^2/2}\E[e^{aW}e^{-W^2/2}]
\le
e^{-a^2/2}\E[e^{aW}] \le e^{-a^2(1-\rho^2)/2},
\$
where the last step follows from the sub-Gaussian property of $W$.
Combining the above, we conclude that
\begin{equation}
\label{eq:neg-part}
\E[R(Z)\cdot \ind\{Z\le 0\}]
\le
2\cdot e^{-a^2(1-\rho^2)/2}.
\end{equation}

\paragraph{Bounding $\E[R(Z)\cdot \ind\{Z>0\}]$.} 
On the event $\{0< Z\le 1\}$, we have 
\[
R(Z)=\frac{\Phi(Z)}{1-\Phi(Z)} \le \frac{1}{1 - \Phi(1)}.
\]
On the other hand, on the event $\{Z>1\}$, we have
\$ 
R(Z) =\frac{\Phi(Z)}{1-\Phi(Z)} \le \frac{1}{\frac{Z}{1+Z^2}\phi(Z)}
= \frac{1+Z^2}{Z}e^{Z^2/2} 
\le (1+Z) \cdot e^{Z^2/2}. 
\$
Combining the above two cases and taking the expectation, 
we have 
\$ 
\E\big[R(Z)\cdot \ind\{Z>0\}\big]
& \le \frac{1}{1-\Phi(1)} \cdot \E\Big[(1+Z)e^{Z^2/2}\cdot \ind\{Z\ge0\}\Big].
\$
Define $g(u):=(1+u) \cdot e^{u^2/2}$, for $u\ge 0$.
Since $g$ is increasing and absolutely continuous,
\$
\E\big[g(Z) \cdot \ind\{Z \ge 0  \}\big]
& = g(0) \cdot \P(Z\ge 0)+\int_0^\infty g'(u) \cdot \P(Z\ge u)\,\di u\\ 
& = \P(W\ge a)+\int_0^\infty (1+u+u^2)e^{u^2/2} \cdot \P(W\ge a+u)\,\di u\\
& \le e^{-a^2/(2\rho^2)} + \int_0^\infty (1+u+u^2)e^{u^2/2} \cdot e^{-(a+u)^2/(2\rho^2)}\, \di u,\\
& = e^{-a^2/(2\rho^2)}\cdot 
\bigg\{ 1 + \int_0^\infty (1+u+u^2)\cdot e^{-au/\rho^2 - (1-\rho^2)u^2/(2\rho^2)}\,\di u \bigg\}
\$
where the inequality follows from the sub-Gaussian tail bound for $W$.
Because $\rho^2<1$, the coefficient
\[
\beta:=\frac{1-\rho^2}{2\rho^2}>0.
\]
As a result, 
\$
1 + \int_0^\infty
(1+u+u^2)e^{-(a/\rho^2)u-\beta u^2}\,\di u
& \le
1 + \int_0^\infty
(1+u+u^2)e^{-\beta u^2}\,\di u\\
& = 1 + \frac{\sqrt{\pi}}{2}\beta^{-1/2} + \frac{1}{2}\beta^{-1/2} + \frac{\sqrt{\pi}}{4}\beta^{-3/2}
=:C_0(\rho).
\$
Combining the above, we conclude that
\begin{equation}
\label{eq:pos-part}
\E\big[R(Z)\cdot \ind\{Z>0\}\big]
\le
\frac{C_0(\rho)}{1-\Phi(1)} \exp\!\left(-\frac{a^2}{2\rho^2}\right).
\end{equation}

Finally, we combine the bounds 
in~\eqref{eq:neg-part}~and \eqref{eq:pos-part}
and define $C_1(\rho) := C_0(\rho)/(1-\Phi(1))$ to obtain
\[
\E[R(Z)]
\le
2\exp\!\left(-\frac{(1-\rho^2)a^2}{2}\right)
+
C_1(\rho)\exp\!\left(-\frac{a^2}{2\rho^2}\right).
\]
Letting 
\[
c(\rho):=\min\left\{\frac{1-\rho^2}{2},\frac{1}{2\rho^2}\right\}>0,
\text{ and }
C(\rho):=2+C_1(\rho),
\]
we conclude that
\[
\E[R(Z)]\le C(\rho) e^{-c(\rho) a^2}.
\]
The proof is complete.

\end{proof}

\section{Additional simulation results}\label{sec:additional_simulation}
This section presents additional numerical results evaluating the effectiveness of e-PS.
Section~\ref{app:sim-ucb} evaluates UCB-type sampling based on e-process-based rewards.
Section~\ref{sec:twogroup_simu} considers a simulation setup where the outcome distributions 
follow a two-group model. Finally, Section~\ref{sec:general_simu} evaluates the sampling methods 
with e-process constructions beyond those based on likelihood ratios.

\subsection{UCB-type sampling}\label{app:sim-ucb}
Given the e-process, we implement a UCB-style sampling rule using 
the e-process-based reward. To be specific, at each time point 
$t\ge 1$, we select the hypothesis 
\$ 
A_t := \underset{k \notin \mR_{t-1}}{\argmax} 
~\frac{\log E_{k,t}}{n_{k,t}} + \sqrt{\frac{2v_{k,t}\log n_{k,t} }{n_{k,t}}},
\$
where $v_{k,t}$ is chosen the same way as in  e-PS.
The other implementation details are identical to those in e-PS.

Figure~\ref{fig:svs-ucb} compares 
e-PS and UCB across all configurations considered in Section~\ref{sec:simulation}.
For both e-process constructions and across all distribution families, 
e-PS is always as powerful as UCB, with the advantage becoming more pronunced over time.
We also implement UCB-style sampling in the joke rating and watermark detection 
problems considered in Section~\ref{sec:realdata}.
In the joke rating experiment, e-PS achieves a higher TPR than UCB throughout the time 
horizon, whereas in the watermark detection experiment, 
the two methods yield nearly identical TPR curves.

\begin{figure}[h!]
    \centering
\includegraphics[width = 1\textwidth]{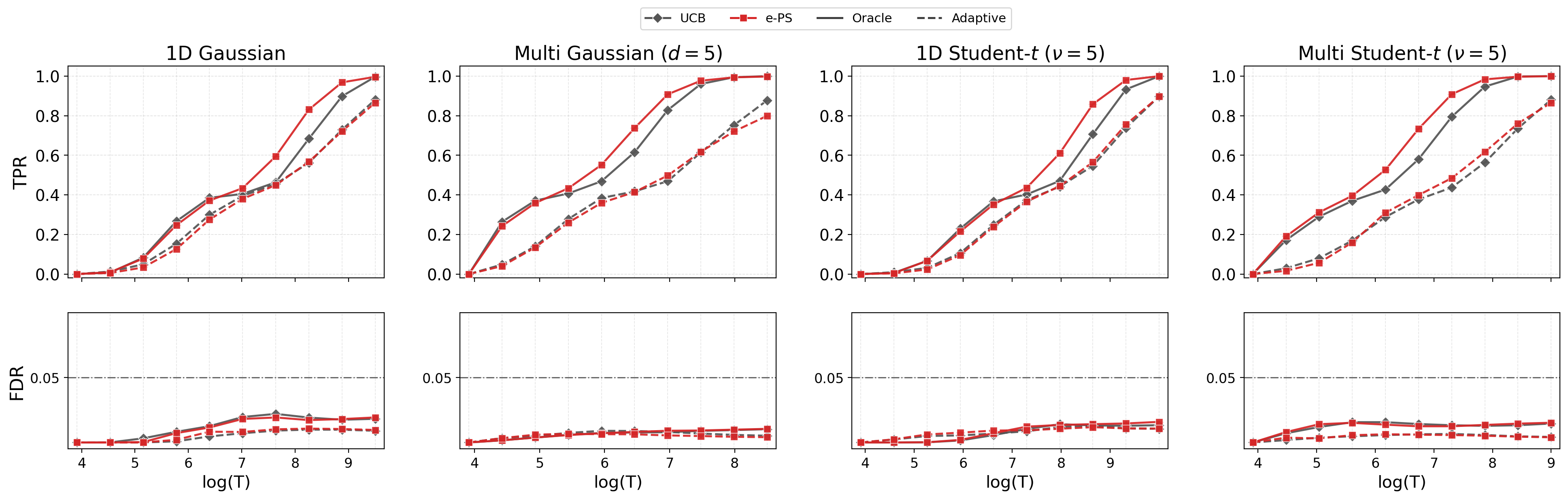}
    \caption{Realized TPR and FDR of UCB sampling and e-PS, averaged over $200$ independent trials.
    The settings 
    are otherwise the same as that in Figure~\ref{fig:svs}.}
\label{fig:svs-ucb}
\end{figure}

\begin{figure}[h!]
\centering
\begin{minipage}{0.48\textwidth}
\centering
\includegraphics[width = \textwidth]{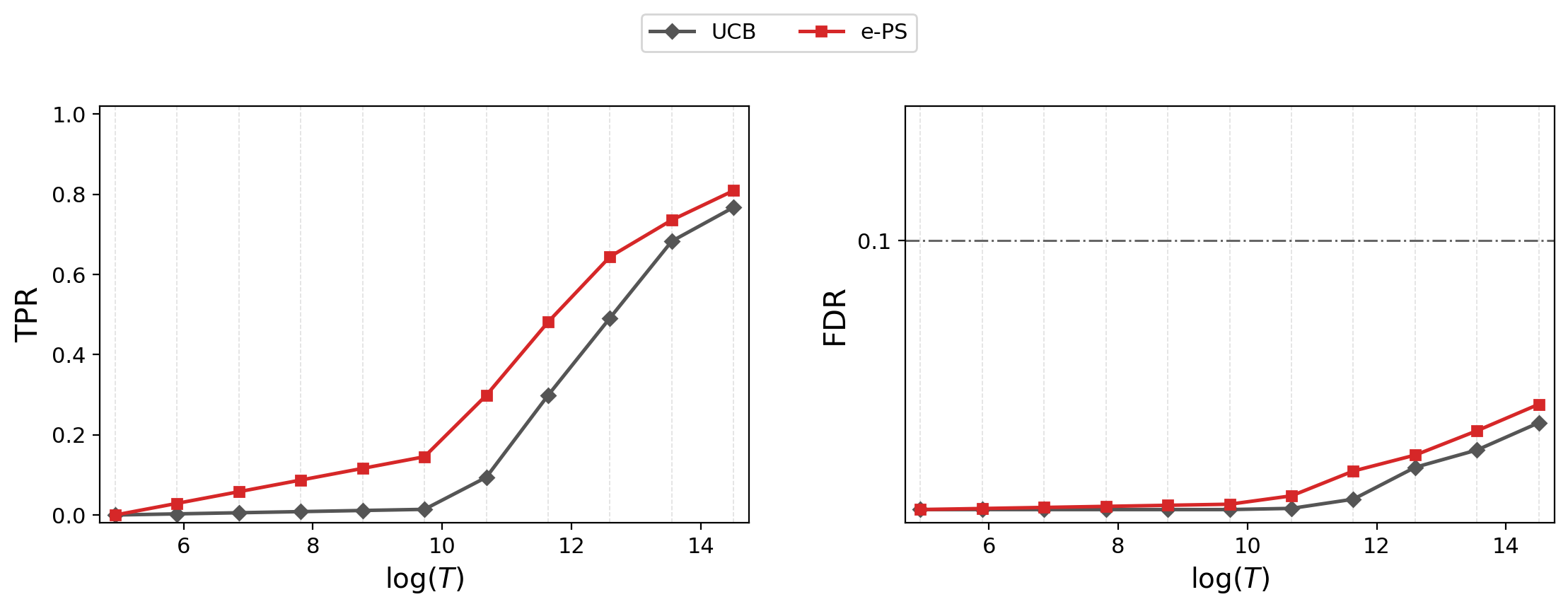}
(a) joke rating
\end{minipage}
\hfill
\begin{minipage}{0.48\textwidth}
\centering
\includegraphics[width = \textwidth]{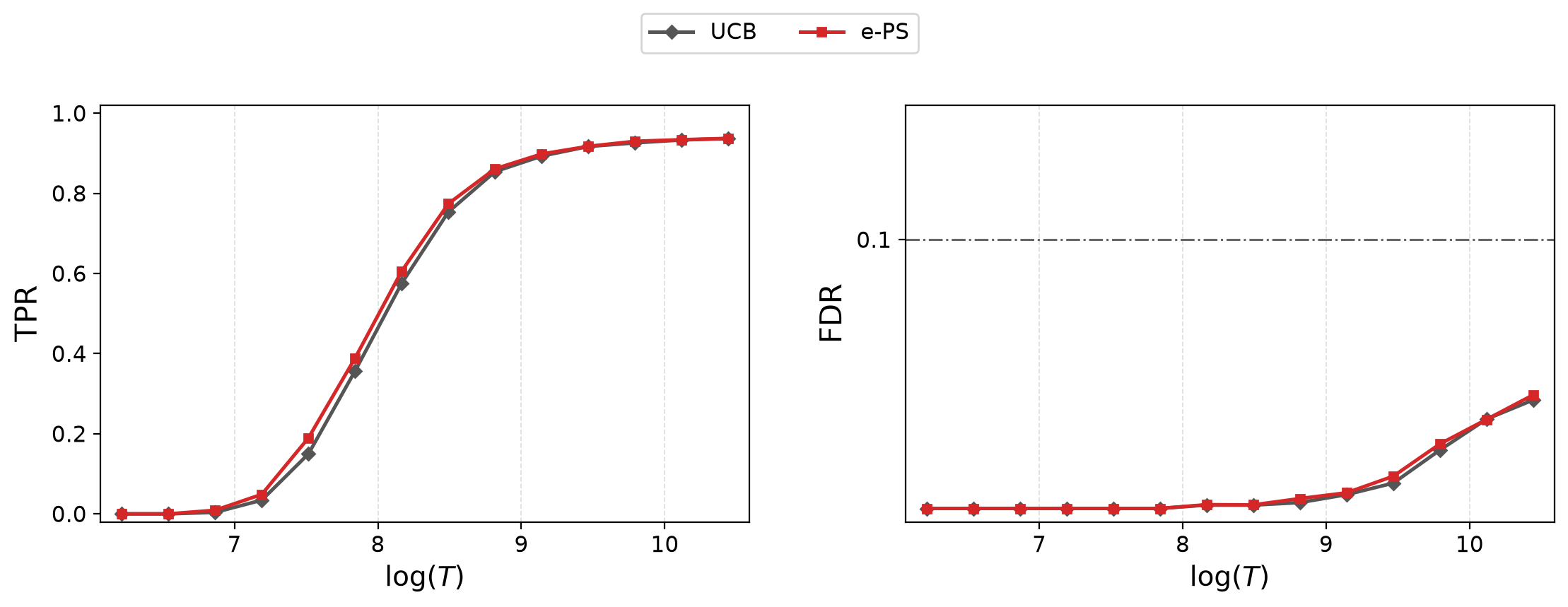}
(b) watermark detection
\end{minipage}
\caption{Realized TPR and FDR of e-PS and UCB-style sampling in the 
joke rating (left) and watermark detection (right) experiment as functions 
of the sample size, averaged over 500 repetitions.}
\label{fig:real-ucb}
\end{figure}

\subsection{Two-group model}\label{sec:twogroup_simu}
We now consider a two-group model with 
$Y_{t}(k) \sim p F_0 + (1-p) F_1$ for some null proportion $p\in [0,1]$ and 
distributions $F_0$ and $F_1$.
For each $k\in[K]$, we independently draw $B_k \sim \text{Bern}(1-p)$, 
and, for $t\ge 1$, sample $Y_t(k)$ from $F_0$ if $B_k = 0$ and 
$Y_t(k)$ from $F_1$ otherwise.
The task is to test for each $k\in[K]$:
\@\label{eq:two-group}
H_{k}^{(0)}: Y_t(k) \sim F_0 \text{ versus } H_{k}^{(1)}: Y_t(k) \sim F_1.
\@
    
In our experiment, we consider $K = 200$ arms and 
set $p=0.8$,
$F_0={N}(\bm{0},\bm{I}_d)$
and $F_1={N}( 0.5 \cdot \bm{1}_d,\bm{I}_d)$, 
with $d$ ranging in $\{1,5,10\}$.
For each setting, $200$ independent trials are carried out, 
where the $B_k$'s are drawn in each trial but fixed throughout the time horizon.
The FDR target is set to be $\alpha = 0.05$. 

To test~\eqref{eq:two-group}, we use the likelihood ratio e-process
and implemement e-PS with $v_{k,t} = 1.1 \hat \sigma_{k,t}^2$,
where $\hat \sigma_{k,t}^2$ is the sample variance computed from observations 
collected from arm $k$ up to time $t-1$.
In addition to the e-process-based methods, we evaluate a fixed-horizon baseline that  
applies the BH procedure to fixied-horizon p-values, with arms uniformly sampled from $[K]$.
This  baseline mimics the conventional multiple testing regime with static, 
i.i.d.sampling and a pre-specified sample size.

Figure \ref{fig:twogroup_1} shows the resulting TPR and FDR as functions of the sample size. 
Across all settings, e-PS generally achieves a higher TPR than the competing methods.
The only exception is  when $d=1$, where the greedy method shows an intial advantage,
but the gap diminishes over time.
Thw two-group setting is, in principle, relatively unfavorable to e-PS since the 
non-null signals are homogenous.
Nevertheless, e-PS continues to exhibit an empirical power advantage in most settings.
The comparison with the fixed-horizon baseline demonstrates the advantage of 
adaptive experimentation.




\begin{figure}[!htbp]
    \centering
      \includegraphics[height = 5.5cm, width = 13.35cm]{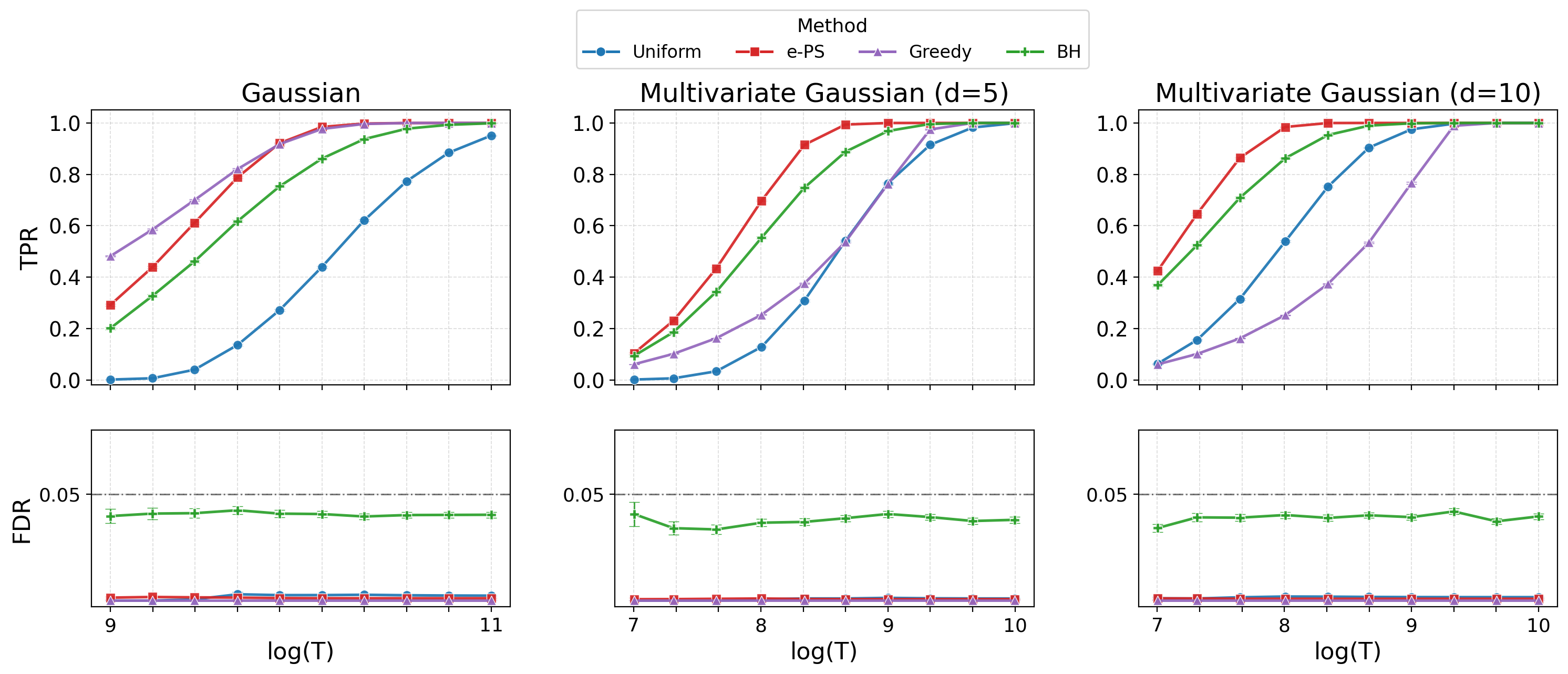}
    \caption{Realized TPR (top) and FDR (bottom) as functions of the sample size
    in the two group-model experiments described in Section \ref{sec:twogroup_simu}. The results are averaged over 
   $200$ independent trials.}
\label{fig:twogroup_1}
\end{figure}


\subsection{Mean detection with general e-values construction}\label{sec:general_simu}
In this section, we consider e-process constructions beyond likelihood ratio-based.
Specifically, we consider scalar outcomes and the mean testing problem for $K=200$ arms: for each $k\in[K]$, 
\$
H_k^{(0)}: \E[Y_1(k)] \leq 0 \text{ versus } H_k^{(1)}: \E[Y_1(k)] > 0,
\$ 
where we only assume that $Y_t(k)$ has a symmetric distribution and $Y_t(k) \in [-B,B]$ 
for some parameter $B>0$.

In the experiments, we consider two distribution families
for $P_k$:  
$\text{Unif}([\mu_k-0.1,\mu_k+0.1])$ and $\text{Unif}([\mu_k-1,\mu_k+1])$,
with $\mu_k$ as the parameter of interest.
For each setting, we randomly select $20$ indices from $[K]$
and assign their means the values $0.1,0.2,\dots,2$, respectively;
the remaining $\mu_k$'s are set to be zero. 
The mean parameters are resampled independent 
for each trial 
and then held fixed throughout the time horizon within 
that trial.
By construction, all outcomes lie in $[-1,3]$, and therefore $[-5,5]$.
We therefore take $B = 5$ in the algorithms.


For each $k\in [K]$ and $t\ge 1$, we construct the e-increment as
\$
e_{k,t} = 1 + \frac{\lambda}{B}Y_t \cdot \ind\{A_t = k\},
\$ 
where $\lambda = 0.99$ and $B = 5$.
Based on the e-process, we implement e-PS, with $v_{k,t}$ estimated in a hold-out set, and compare it  
with uniform and greedy sampling. As in Section~\ref{sec:twogroup_simu}, we 
also evaluate the fixed-horizon baseline that applies the BH procedure to fixed-horizon 
p-values. Under each setting, we conduct the experiment for $200$ independent trials 
with the target FDR $\alpha = 0.05$. 

Figure \ref{fig:meandetection_1} presents the realized TPR and FDP of all the candidate methods
as functions of the sample size. In both settings, we can see e-PS generally 
outperforms the competing methods, with an increasing advantege over time.

\begin{figure}[h!]
    \centering
      \includegraphics[width = 0.6\textwidth]{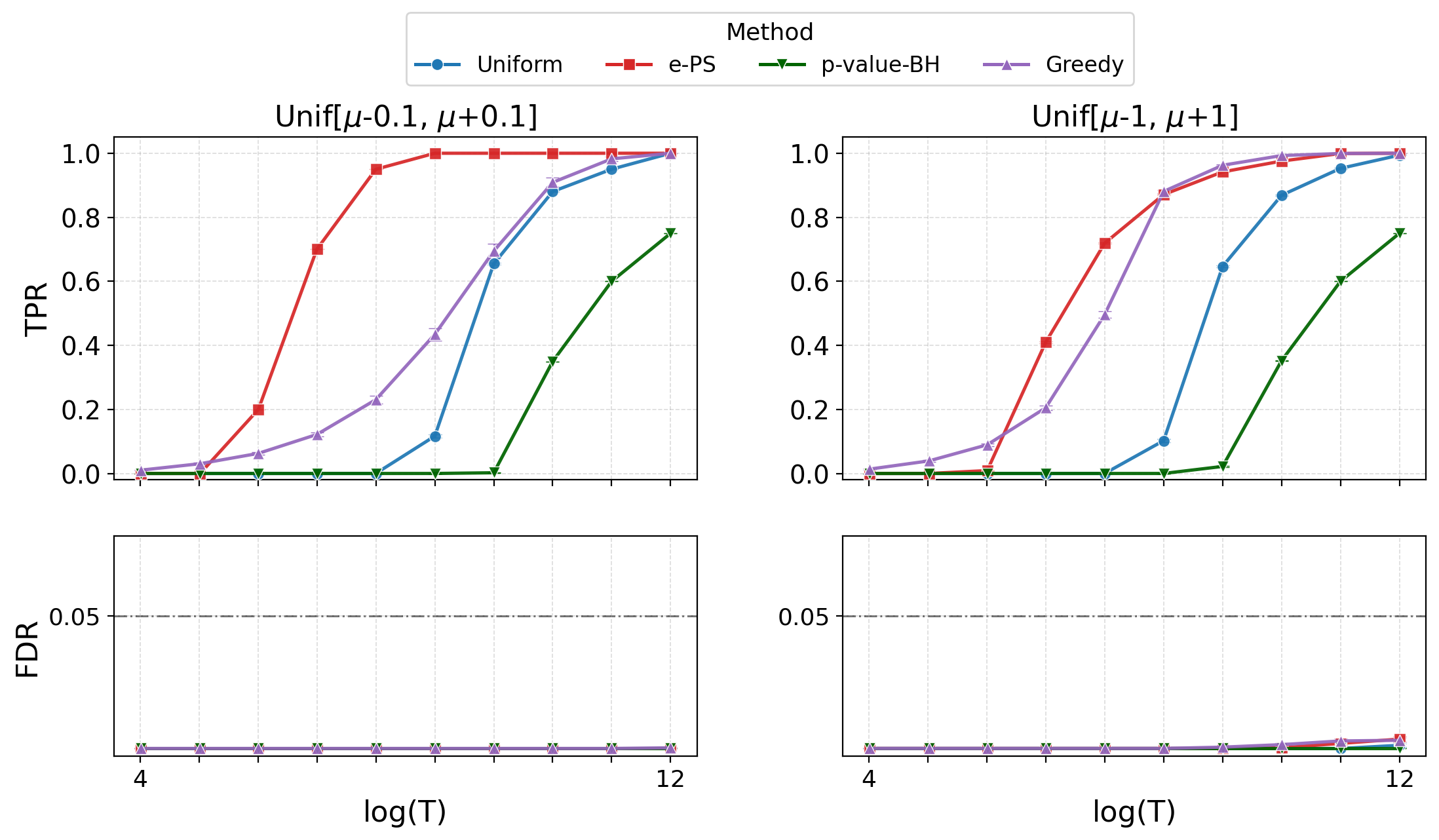}
    \caption{Realized TPR (top) and FDR (bottom) as functions of the sample size 
    in the mean-detection experiments described in Section \ref{sec:general_simu}. The results are averaged  
    over $200$ independent trials.}
\label{fig:meandetection_1}
\end{figure}

\end{document}